\pgfplotsset{compat=newest}
\newcommand{\Ball}{\mathrm{Ball}}
\newcommand{\dW}{\mathrm{d}_{\mathrm{W}}}
\newcommand{\dDe}{\mathrm{d}_{\mathrm{HH}}}
\newcommand{\dtor}{\mathrm{d}_{\mathrm{tor}}}
\newcommand{\Par}{\mathrm{Par}}
\newcommand{\Emb}{\mathrm{Emb}}
\newcommand{\Prod}{\mathrm{Prod}}
\newcommand{\Mix}{\mathrm{Mix}}
\newcommand{\ed}{\eps_{\mathrm{dist}}}
\newcommand{\Dirac}{\mathrm{Comb}}
\renewcommand{\Re}{\mathrm{Re}}
\renewcommand{\Im}{\mathrm{Im}}
\title{
Model-agnostic super-resolution in high dimensions
}
\author{%
 \begin{tabular}{ccc}
     Xi Chen & Anindya De & Yizhi Huang \\
     Columbia University & University of Pennsylvania & Columbia University \\
     \url{xichen@cs.columbia.edu} & \url{anindyad@cis.upenn.edu} & \url{yizhi@cs.columbia.edu} \\
     \\
     Shivam Nadimpalli & Rocco A. Servedio & Tianqi Yang \\
     MIT & Columbia University & Columbia University \\
     \url{shivamn@mit.edu} & \url{rocco@cs.columbia.edu} & \url{tianqi@cs.columbia.edu}
 \end{tabular}
 \vspace{0.5em}
}
\begin{document}

\pagenumbering{gobble}

\maketitle

\begin{abstract}
The problem of \emph{super-resolution}, roughly speaking, is to reconstruct an unknown signal to high accuracy, given (potentially noisy) information about its low-degree Fourier coefficients.
Prior results on super-resolution have imposed strong modeling assumptions on the signal, typically requiring that it is a linear combination of spatially separated point sources.

In this work we analyze a very general version of the super-resolution problem, by considering completely general non-negative signals, or equivalently distributions, over the $d$-dimensional torus $[0,1)^d$; we do not assume any spatial separation between point sources, or even that the distribution is a finite linear combination of point sources.  The question naturally arises:  what can be said about super-resolution in such a general setting?  
\begin{flushleft}
\begin{itemize}
\item As a warm-up, we first give a
set of results for reconstructing distributions with respect to the Wasserstein distance.  We establish essentially matching upper and lower bounds on the cutoff frequency $T$ and the magnitude $\kappa$ of the noise for which accurate reconstruction  is possible; roughly speaking, our results here show that for $d$-dimensional distributions, estimates of $\approx \exp(d)$ many Fourier coefficients are both necessary and sufficient for accurate reconstruction under the Wasserstein distance. 
\item  As our main result, we define a new notion of ``heavy hitter'' reconstruction for distributions, which 
essentially amounts to achieving high-accuracy reconstruction of all ``sufficiently dense'' regions of the distribution.  
We give essentially matching upper and lower bounds on the cutoff frequency $T$ and the magnitude $\kappa$ of the noise for which accurate reconstruction is possible under this notion.
Our results show that\hspace{0.1cm}---\hspace{0.1cm}in sharp contrast with Wasserstein reconstruction\hspace{0.1cm}---\hspace{0.1cm}accurate estimates of only $\approx \exp(\sqrt{d})$ many Fourier coefficients are both necessary and sufficient for heavy hitter reconstruction.
\end{itemize}
\end{flushleft}
\end{abstract} 

\newpage  


\newpage

\pagenumbering{arabic}

\section{Introduction} \label{sec:intro}

This paper considers the fundamental %
problem of \emph{super-resolution}\hspace{0.1cm}---\hspace{0.1cm}recovering fine-scale structure from coarse, noisy measurements\hspace{0.1cm}---\hspace{0.1cm}from the perspective of theoretical computer science. 
A~convenient starting point to motivate the problem is the \emph{diffraction limit} from optical physics: when illuminated by light waves of wavelength~$\lambda$, an optical system cannot resolve features separated by distances much smaller than $\lambda$. 
This idea, dating back to the 1870s~\cite{Abbe1873, rayleigh1879investigations, sparrow1916spectroscopic}, has recently been formalized and studied through the lens of theoretical computer science by Chen and Moitra~\cite{Chen_Moitra_2021}, who established statistical and computational limits on diffraction-limited imaging. 
Since frequency and wavelength are inversely related, this line of work can be summarized as the following high-level idea:
\begin{quote}\centering
    {Resolving increasingly fine details from noisy measurements requires access to \\ correspondingly higher frequencies: shorter wavelengths enable finer resolution.}
\end{quote}

However, in many physical settings, systems can only be observed at low frequencies, raising the fundamental question of \emph{super-resolution}: Can we recover a signal even from \emph{bandlimited} measurements? 
Put differently, if our measurements are limited to frequencies below a certain cutoff, to what extent can we still reconstruct the underlying signal?

In this paper, we study the super-resolution problem over general \emph{high-dimensional} non-negative signals, or equivalently, probability distributions, \emph{without imposing structural modeling assumptions}. Before stating our  results in \Cref{subsec:our-results}, we briefly review prior work on super-resolution, which will provide context for our formulation relative to the classical settings considered in the literature.

\subsection{Background on super-resolution}

The modern mathematical study of super-resolution originates in the seminal work of Donoho \cite{donoho1992superresolution}. 
Although our focus will be on {high-dimensional} distributions, it is instructive to begin with the {one-dimensional} setting introduced by Cand\`{e}s and Fernandez-Granda~\cite{candes2013super}. 
Here, the domain is the one-dimensional torus $\mathbb{T}$, which we identify with the interval $[0,1)$, and the object to be recovered is 
a discrete distribution $D$ 
over $\mathbb{T}$ that is supported on $k$ points, which we write as 
\begin{equation} \label{eq:intro-our-signal-model}
    D(x) := \sum_{j=1}^k u_j \delta_{t_j}(x)\,,
\end{equation}
where $\delta_{t_j}(\cdot)$ denotes a Dirac delta centered at the point $t_j \in \mathbb{T}$ and $u_j \in \R_{>0}$  is the corresponding probability (see \Cref{fig:illustration}(a) for an illustration). 
In the literature of super-resolution, such a $D$ is usually referred to as a superposition of $k$ \emph{point sources}, where the $\smash{j^\text{th}}$ point source has location $t_j$ and amplitude $u_j$. 
We will sometimes refer to a Dirac delta $\delta_t$ as  a \emph{spike} located at $t$. 

Next, recall that the Fourier spectrum over the torus $\mathbb{T}$ is indexed by the integers $\Z$. 
In particular, for a distribution $D$, 
the Fourier coefficient at frequency $\ell \in \Z$ is given by 
\[
    \widehat{D}(\ell) = \int_{x \in  \mathbb{T}} D(x) \cdot \exp\big( 2 \pi i \ell x\big) \, dx\,.
\]
Given that $\int_x g(x) \delta_t(x) dx = g(t)$ for any test function $g$, for $D$ given in \Cref{eq:intro-our-signal-model} we have 
\[
    \wh{D}(\ell) = \sum_{j = 1}^k u_j \exp\big(2\pi i \ell t_j\big).
\]

The main question posed by Cand\`es and Fernandez-Granda in \cite{candes2014towards} was whether bandlimited measurements suffice to recover an unknown distribution $D$, under the promise that $D$ is a superposition of $k$ point sources (i.e., $D$ is as in \Cref{eq:intro-our-signal-model}). 
They proposed a convex optimization framework for this inverse problem and showed, among other things, that if $\wh{D}(\ell)$ is known \emph{exactly} for all $|\ell| \leq k$ (in other words, the distribution is observed up to bandlimit $k$), then their convex program perfectly reconstructs $D$.

It is worth noting that the ability to recover a $k$-sparse distribution $D$ from the \emph{exact} Fourier coefficients \smash{$\{\wh{D}(\ell)\}_{|\ell| \leq k}$} is not new: Prony's method, dating back to 1795~\cite{prony1795essai}, achieves the same guarantee. 
The novelty of \cite{candes2014towards} lies in showing that such recovery can be accomplished through a convex optimization program. 
That said, the assumption of having perfect access to the \emph{exact} Fourier data is quite restrictive; in practice, the measurements $\wh{D}(\ell)$ are inevitably perturbed by noise.  
This raises the natural question of whether recovery is possible given \emph{noisy measurements.} \medskip

\noindent\textbf{Noisy measurements.} 
Suppose the algorithm no longer has access to the true Fourier coefficients but instead receives noisy estimates $\{\widetilde{D}(\ell)\}_{|\ell| \le T}$ satisfying 
\begin{equation}\label{eq:hehe1}
    \big|\widetilde{D}(\ell) - \widehat{D}(\ell)\big| \le \kappa,\quad\text{for every $\ell$ with $|\ell|\le T$.}
\end{equation}
In other words, the observed coefficients are perturbed by additive noise of magnitude at most $\kappa$. 
This modification substantially changes the nature of the problem. 
Indeed, it is easy to see that given $\{\widetilde{D}(\ell)\}_{|\ell| \le T}$, one cannot distinguish between 
\begin{equation} \label{eq:ithaca}
    D_1(x) = \delta_0(x) 
    \quad\text{versus}\quad 
    D_2(x) = \frac{1}{2}\cdot\delta_0(x) + \frac{1}{2}\cdot\delta_{\Delta}(x)
\end{equation}
for $\Delta \ll \min\{\kappa, T^{-1}\}$. 
In other words, even determining whether $D$ consists of one spike or two becomes impossible without imposing an additional \emph{separation} condition between the spikes.

Thus, to recover a $k$-sparse distribution $D$ from noisy estimates of $\wh{D}(\ell)$, it appears necessary to impose a separation assumption on the support points of $D$ (spoiler alert: this is precisely the assumption we re-examine in this paper). 
To formalize what ``separation'' means on the torus, we first recall the natural distance metric between two points $a, b \in \mathbb{T}$: the \emph{toroidal distance} $\dtor(a,b)$, which is the shortest distance between $a$ and $b$ when identifying the interval $[0,1)$ with the unit circle (so that $0$ and $1$ coincide; see \Cref{sec:fourier-torus} for the formal definition). 

With this definition in hand, Cand\`es and Fernandez-Granda~\cite{candes2013super} studied the setting where any two supports points $t_i, t_j$ of $D$ satisfy $\dtor(t_i, t_j) \geq \Delta$. 
They proved that $D$ can be approximately recovered even in the presence of noise, provided that the support points $t_1, \dots, t_k$ lie on a grid and the bandlimit $T$ is at least $2/\Delta$. 
(We note that the error metric in \cite{candes2013super} is based on the Fej\'{e}r kernel and is somewhat nonstandard.) 
Subsequent work~\cite{supportdetection2013,liao2016music} removed the grid assumption, although the resulting noise-tolerance guarantees were not directly comparable to those of \cite{candes2013super}.

Before proceeding, it will be helpful to standardize some notation and terminology that will be used throughout our discussion: 
\begin{flushleft}\begin{enumerate}
    \item We will denote by $T$ the \emph{bandlimit}; that is, the maximum frequency index (or degree of the Fourier coefficients) available to the algorithm. 
    \item We will denote by $\kappa$ the \emph{noise level per Fourier coefficient}; that is, for all $\ell$ the true Fourier coefficient $\wh{D}(\ell)$ and its noisy estimate $\wt{D}(\ell)$ satisfy \Cref{eq:hehe1}.
    \item If $D$ is supported on a finite set of points, then we will write $k$ for its \emph{sparsity} and use
    \[
        \Delta := \min_{i,j \in [k]} \dtor(t_i, t_j)
    \]
    to denote the \emph{minimum separation} between them. 
\end{enumerate}\end{flushleft}

The most relevant prior work in the one-dimensional setting is that of Moitra~\cite{moitra2015super}, which was the first to study problems of this kind from the perspective of theoretical computer science. 
Under the promise that $D$ is a discrete distribution with minimum separation at least
  $\Delta$, Moitra showed that as long as $T$ and $\kappa$ satisfy 
\[
    T \geq \frac{1}{\Delta} + 1 \quad\text{and}\quad \kappa \leq \poly\pbra{\frac{1}{\Delta}, \eps}\,,
\]
$D$ can be efficiently recovered 

to error $\eps$; that is, each support location $t_i$ and amplitude $u_i$ can be recovered to within additive error $\eps$. 
Note that, while no sparsity bound is explicitly assumed on $D$,  a minimum separation of $\Delta$ implies that the sparsity $k$ of $D$ satisfies $k \le 1/\Delta$ on the one dimensional torus.

\begin{figure}[t]
\centering

\begin{subfigure}{\linewidth}
\centering
\begin{tikzpicture}
\begin{axis}[
    width=13cm,
    height=5cm,
    xlabel={$x$},
    ylabel={Signal amplitude},
    xmin=0, xmax=1,
    ymin=0, ymax=1.3,
    xtick={0,1},                
    yticklabels={},
    axis line style={black},
    tick style={draw=none},
    xtick={0,1}, 				 
    yticklabels={},              
    grid=none,
    title style={yshift=2ex},
	legend style={at={(0.5,1.02)}, anchor=south,
	    draw=black,              
        fill=white,
        legend columns=-1,
        minimum width=2.388cm,    
        }
]

\addplot+[ycomb, thick, black, mark=*, mark options={fill=black}, mark size=1pt] 
  coordinates {(0.2,1.0) (0.5,0.8) (0.8,1.1)};
\addlegendentry{\small True signal}

\addplot+[ycomb, very thick, red, mark=*, mark options={fill=red}, mark size=1pt]
  coordinates {(0.18,0.9) (0.52,0.7) (0.82,1.0)};
\addlegendentry{\small Recovered signal}

\end{axis}
\end{tikzpicture}
\caption{An illustration of classical super-resolution as considered in prior works~\cite{candes2013super,moitra2015super}, which  assumes that the underlying signal is a discrete sum of Dirac delta $\delta_t(\cdot)$ signals.}

\end{subfigure}

\vspace{1.5em}

\begin{subfigure}{\linewidth}
\centering
\begin{tikzpicture}
\begin{axis}[
    width=13cm,
    height=5cm,
    xlabel={$x$},
    ylabel={Signal amplitude},
    xmin=0, xmax=1,
    ymin=0, ymax=1.2,
    grid=none,
    tick style={draw=none},
    xtick={0,1}, 				 
    yticklabels={},              
	legend style={at={(0.5,1.02)}, anchor=south,
	    draw=black,              
        fill=white,
        legend columns=-1,
        minimum width=2.388cm,    
        }
]

\addplot[
    thick, black!80!white, domain=0:1, samples=400
]
{ exp(-((x-0.25)/0.025)^2)
 + 0.8*exp(-((x-0.55)/0.03)^2)
 + 0.4*exp(-((x-0.8)/0.04)^2)
};
\addlegendentry{\small True signal\,}

\addplot[
    thick, blue, dotted, domain=0:1, samples=500
]
{ 0.88*exp(-((x-0.24)/0.029)^2)   
 + 0.7*exp(-((x-0.56)/0.04)^2)   
 + 0.35*exp(-((x-0.82)/0.05)^2)  
};
\addlegendentry{\small Wasserstein recovery\,\,}

\addplot[
    thick, red, densely dashed, domain=0:1, samples=500
]
{ 1.1*exp(-((x-0.26)/0.02)^2)
 + 0.87*exp(-((x-0.54)/0.025)^2)
};
\addlegendentry{\small Heavy hitter recovery}

\addplot[dashed,gray!80!white]
coordinates {(0,0.6) (1,0.6)};
\end{axis}

\end{tikzpicture}
\caption{An illustration of \emph{model-agnostic} super-resolution considered in this work, where the input signal may be arbitrary (discrete or continuous). 
The gray line indicates the recovery threshold for ``heavy hitters.''}

\end{subfigure}

\caption{An illustration of the modeling assumptions and recovery objectives in this work compared to prior formulations.}
\label{fig:illustration}

\end{figure}
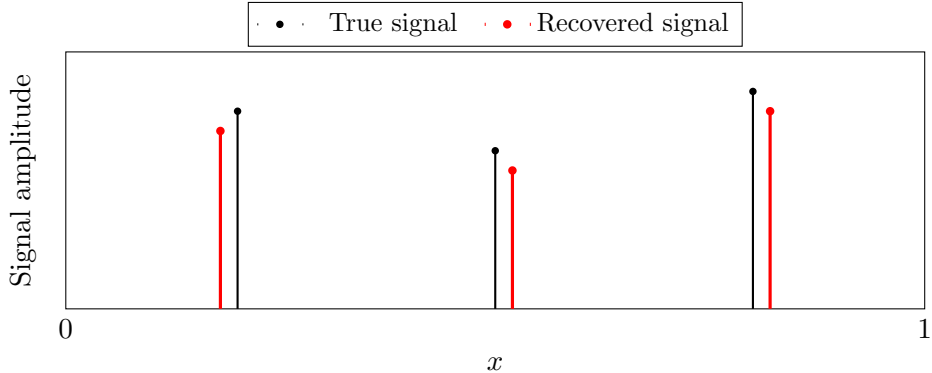
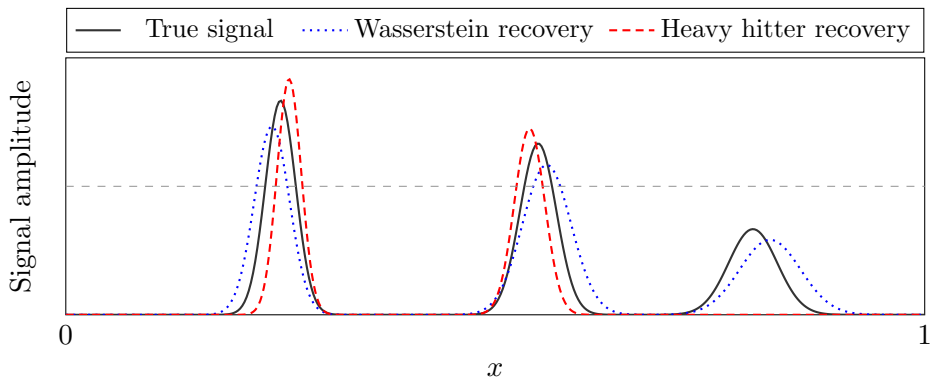

\subsection{This work: Recovery \emph{without} sparsity or separation \emph{in high dimension}} The preceding work, including \cite{moitra2015super}, naturally raises the following question: 
\begin{quote}
\begin{center}
Can we still recover $D$ \emph{without} any separation or sparsity assumptions?
\end{center}
\end{quote} 
In particular, we eschew the assumption that $D$ is a mixture of finitely (or even countably) many point sources. 
Instead, we allow $D$ to be an arbitrary probability measure on $\mathbb{T}$ of total mass $1$. 
For intuition, the reader may think of $D$ as having a density $D:\mathbb{T}\to \R_{\ge 0}$ satisfying
\[
    \int_{x\in \mathbb{T}} D(x)\,dx = 1\,.
\]
(Strictly speaking, this is not accurate in the presence of Dirac delta masses which do not admit densities; however, we will freely adopt this viewpoint in the following informal discussion as it captures the relevant intuition. See \Cref{sec:prelims} for a more formal treatment.)
To connect this with the previous discussion, observe that when $D (x)= \sum_{j=1}^k u_j \delta_{t_j}(x)$ as in \eqref{eq:intro-our-signal-model}, we have 
\[
    \int_{x \in \mathbb{T}} D(x)\, dx = \sum_{j=1}^k u_j\,.
\]

Of course, in such a general setting it is no longer meaningful to speak of recovering the support of the distribution $D$. 
Moreover, while the discussion thus far has centered on the one-dimensional case, the aim 
of this paper is to develop a modeling-agnostic theory of super-resolution in \emph{high dimensions}: namely, recovering distributions over the $d$-dimensional torus $[0,1)^d$ from noisy Fourier measurements. 
(Looking  ahead, a central goal in this effort will be to achieve such recovery with a complexity that scales better than exponentially with the dimension $d$.)

A first natural notion of recovery in this setting is recovery in \emph{Wasserstein} (or \emph{earthmover}) distance, 
and indeed as a warm-up to our main results we give near-matching positive and negative results for  Wasserstein distance recovery,
though necessarily with bounds that grow exponentially in $d$. 
(As we discuss more below,
recent work by Musco, Musco, Rosenblatt, and Singh~\cite{MMRS25} obtains  guarantees that are similar 
to our upper bounds, though using moment information rather than Fourier measurements.)
Thus, while Wasserstein recovery allows one to move beyond the sparsity assumption, it succumbs to
the curse of dimensionality. 
This motivates us to consider a 
different notion of recovery, which is weaker than Wasserstein distance, but still captures the intuition of identifying the significant ``spikes'' in the data. 
Developing this notion, which we call \emph{heavy hitter recovery}, and showing that it avoids the full exponential-in-dimension dependence inherent in Wasserstein recovery is the main conceptual and technical contribution of this paper.

We briefly survey additional relevant prior 
work before turning to a more detailed description of our results.
\medskip


\noindent\textbf{Related work.} 
While several earlier works have also studied super-resolution under the Wasserstein distance or related metrics~\cite{denoyelle2017support,schiebinger2018superresolution, eftekhari2021sparse, fan2023efficient,kurmanbek2023multivariate}, there are several key differences from our work. 
First, although these works do not assume a separation condition, they all assume a sparsity condition\hspace{0.1cm}---\hspace{0.1cm}that the signal is a linear combination of finitely many point sources\hspace{0.1cm}---\hspace{0.1cm}whereas we make no such assumption.
Second, each of the above papers either restricts attention to the one-dimensional case (while our main results are high-dimensional) or adopts a different measurement model, such as approximate moments, instead of the approximate Fourier coefficients used in our work. 

The problem of super-resolution has also been studied in higher dimensions, for example in \cite{huang2015super,poon2019multidimensional,Liu_Ammari_2024}. 
Another
related line of work studies the
recovery of Fourier-sparse signals in the continuous setting~\cite{chen2016fourier,chen2019estimating,song2023quartic}. 
However, these works fundamentally rely on structural assumptions on the signal\hspace{0.1cm}---\hspace{0.1cm}most notably sparsity, and in the case of classical super-resolution, also minimum separation conditions.
In contrast, our goal is to study super-resolution in a \emph{model-agnostic} manner: we consider completely general  non-negative  signals, without imposing either a sparsity or a separation condition. 

We note in particular the recent work of \cite{MMRS25}, which, like ours, studies the recovery of general (not necessarily sparse) signals in high dimensions, but considers approximate moments as the measurement model rather than Fourier coefficients. 
In particular, our positive result for Wasserstein distance recovery is quite similar to Theorem~32 of~\cite{MMRS25}, though we obtain a modest quantitative improvement in the dependence on the dimension  
(see \Cref{rem:MMRS} for further discussion). 
We complement this positive result with near-matching lower bounds, and, more importantly, introduce and analyze the notion of \emph{heavy hitter recovery}, which yields a separation from Wasserstein recovery and circumvents the so-called ``curse of dimensionality.'' 
Finally, as we discuss later, our techniques for heavy hitter reconstruction\hspace{0.1cm}---\hspace{0.1cm}particularly our use of extremal polynomials\hspace{0.1cm}---\hspace{0.1cm}have a substantially different flavor from the existing super-resolution literature.

\subsection{Our results}
\label{subsec:our-results}

As mentioned above, the main goal of this paper is to develop techniques for super-resolution without assumptions in high dimensions.
Our domain of interest is the $d$-dimensional torus $\mathbb{T}^d$ which we identify with $[0,1)^d$. 
As before, a distribution $D$ over $\mathbb{T}^d$ will be normalized to satisfy 
\[
    \int_{x\in\mathbb{T}^d} D(x) \, dx = 1\,.
\]
The Fourier coefficients of $D$ are now indexed by integer tuples $\ell \in \Z^d$ and are defined as
\[
    \widehat{D}(\ell) := \int_{x \in \mathbb{T}^d} D(x) \cdot \exp\big(2\pi i \ell \cdot x\big) \, dx
\]
where $\ell\cdot x$ denotes the usual inner product in $\R^d$. 
For two points $x,y \in \mathbb{T}^d$ the toroidal distance $\dtor(x,y)$ now corresponds to 
\[
    \dtor(x,y):=\sqrt{\sum_{i=1}^d \dtor^2(x_i, y_i)}\,,
\]
that is, the Euclidean distance between $x$ and $y$ where each coordinate distance is measured using the one-dimensional toroidal metric. 

\subsubsection{Wasserstein distance recovery} 

We first describe our Wasserstein distance recovery results, which serve as a warm-up to our heavy hitter distance recovery results. 
Given two distributions $D_1$ and $D_2$, recall that their \emph{Wasserstein distance}, denoted by $\dW(D_1, D_2)$, is defined as
\[
    \dW(D_1, D_2) := \sup_{g: \mathbb{T}^d \to \mathbb{R}} \int g(x)\cdot (D_1(x) - D_2(x))\,dx,
\]
where the supremum is taken over all $1$-Lipschitz functions $g$ on $\mathbb{T}^d$ (with respect to the toroidal distance).

Our positive result for Wasserstein distance recovery, \Cref{thm:multi-dim-Wass-ub}, can be summarized as follows: For any $\epsilon>0$, given noisy Fourier coefficients $\{\widetilde{D}(\ell)\}_{\Vert \ell \Vert_\infty \le T}$ for an unknown distribution $D$ with\vspace{-0.1cm}
\[
    T = \Theta\pbra{\frac{\sqrt{d}}{\eps}}
    \quad\text{and}\quad
    \kappa = \Theta\pbra{\frac{\eps}{\sqrt{d}}}^d,\vspace{-0.05cm}
\]
we can recover $D$ up to Wasserstein distance $\epsilon$. 
This result bears both technical and conceptual similarities to Theorem~32 of \cite{MMRS25}; see \Cref{rem:MMRS} for further discussion. 
Moreover, for constant dimension $d$, we give an algorithm that runs in time polynomial in its input size 
(see \Cref{sec:wasserstein-algo}).

Note that the admissible noise level $\kappa$ decays exponentially with $d$, and the number of Fourier coefficients with $\Vert \ell\Vert_\infty \le T$\hspace{0.1cm}---\hspace{0.1cm}and hence the number of measurements\hspace{0.1cm}---\hspace{0.1cm}also grows exponentially in $d$. 
These dependencies turn out to be unavoidable:
\begin{flushleft}
\begin{itemize}
	\item Our first lower bound, \Cref{thm:low-dim-kappa-zero}, shows that recovering $D$ up to Wasserstein error $\epsilon$ requires Fourier coefficients up to bandlimit $T = \Omega(\sqrt{d}/(2\epsilon))$ even in the noiseless ($\kappa = 0$) case. 
	\item One might hope to trade off bandlimit and noise, but our second lower bound, \Cref{thm:lb-infinite}, rules this out: even if all Fourier coefficients are available ($T = \infty$), achieving Wasserstein error $\epsilon$ still requires the noise level $\kappa$ to be at most $O(\epsilon^{d/4})$. 
\end{itemize} 
\end{flushleft}
Taken together, these results give nearly tight upper and lower bounds on the tradeoffs between bandlimit and noise. 
They also reveal a fundamental limitation: Wasserstein recovery in high dimensions necessarily incurs an exponential dependence on $d$.  
This barrier motivates the main contribution of this paper, which is to consider a weaker notion of recovery that still captures the salient features of the distribution while circumventing the curse of dimensionality. 

\subsubsection{Heavy hitter recovery}
 
Our starting point is a closer look at our lower-bound constructions for Wasserstein recovery (i.e., \Cref{thm:low-dim-kappa-zero,thm:lb-infinite}). 
These hard instances share a common feature: the probability mass of $D$ is spread relatively uniformly across the domain, making it difficult to distinguish between distributions using low-frequency information alone.  
In contrast, many practical scenarios involve distributions whose mass is concentrated in a few localized regions\hspace{0.1cm}---\hspace{0.1cm}what we might informally call the ``spikes'' or high-density regions of $D$. 
This observation naturally leads to the following question: borrowing terminology from data streaming~\cite{woodruff2016new}:
\begin{flushleft}\begin{quote}
	Can we recover the ``heavy hitters'' of $D$\hspace{0.1cm}---\hspace{0.1cm}the dominant regions carrying most of its mass\hspace{0.1cm}---\hspace{0.1cm}while avoiding the curse of dimensionality?
\end{quote}\end{flushleft}

Since the goal of this work is to be model-agnostic, we continue to make no assumptions about the support of $D$; in particular, $D$ need not have finite or even countable support. 
We now introduce the \emph{heavy hitter distance} between distributions, a new metric that captures similarity of ``spike-like regions'' of two distributions.\medskip 

\noindent\textbf{The heavy hitter distance.}
We will use the following notation: 
for any $x \in \mathbb{T}^d$ and $\tau \geq 0$, define 
\[\Ball(x, \tau) := \big\{y: \dtor(x, y) \le \tau\big\}.\]
For any distribution $D$ over $\mathbb{T}^d$, we define
\[
D\big(\Ball(x,\tau)\big) := \int_{\dtor(x, y) \le \tau} D(y) dy,
\]
i.e., $D(\Ball(x,\tau))$
is the probability that $D$ puts in the toroidal-distance ball of radius $\tau$ around $x$.

The heavy hitter distance between two distributions is parameterized by  a scale parameter $\ed>0$, and is defined as follows:

\begin{definition} [Heavy hitter distance]
\label{def:De-distance}
Given a scale parameter $0 < \ed < 1$, 
the \emph{heavy hitter distance at scale $\ed$} between two distributions $D_1,D_2$ over $\mathbb{T}^d$, written as $\smash{\dDe^{(\ed)}(D_1,D_2)}$, is the smallest value $\eps \geq 0$ such that the following holds:  For every $0 \leq \tau \leq \ed$ 
and every $x \in \mathbb{T}^d$, 
\begin{align*}
D_1\big(\Ball(x,\tau)\big) &\le D_2\big(\Ball(x,\tau + \ed)\big)  + \epsilon\,, \quad \text{and}\\[0.6ex]
D_2\big(\Ball(x,\tau)\big) &\le D_1\big(\Ball(x,\tau + \ed)\big)  + \epsilon\,. 
\end{align*}
\end{definition}

We remark that the term ``heavy hitter distance'' should not be interpreted literally as meaning that $\smash{\dDe^{(\ed)}(\cdot,\cdot)}$ satisfies the triangle inequality; rather, as we discuss later, it is  a way of measuring the error between two distributions vis-a-vis the ``heavy hitter-like'' regions of the distributions. 
Intuitively, having $\smash{\dDe^{(\ed)}(D_1, D_2) \le \epsilon}$ means that for any ``small'' ball, the mass that $D_1$ and $D_2$ put in the ball are essentially the same, provided that we are allowed to dilate the ball by an additive $\ed$ and adjust the probability mass additively by $\epsilon$.

\begin{remark}
    [Heavy hitter distance versus Wasserstein distance]
    \label{rem:HH-Wasserstein}
It is not difficult to show that having small heavy hitter distance is essentially a weaker condition than having small Wasserstein distance\hspace{0.1cm}---\hspace{0.1cm}more precisely, having small Wasserstein distance between two distributions $D_1,D_2$ implies that the heavy hitter distance (at a suitable scale $\ed$) must also be small.  See \Cref{ap:HH-Wasserstein} for a detailed statement and its simple proof.
\end{remark}

\begin{remark}
    [Heavy hitter distance versus L\'evy-Prokhorov distance]
    \label{rem:HH-LP}
The reader may have noticed the similarity of this definition with the L\'evy-Prokhorov metric (though in the L\'evy-Prokhorov metric, the role that balls play in the heavy hitter distance is played by arbitrary measurable sets).  
This naturally suggests the following question: Can we achieve the positive results that are presented below for $\smash{\dDe^{(\ed)}}$ for the L\'evy-Prokhorov metric?
Unfortunately, the lower bounds for Wasserstein distance (\Cref{thm:low-dim-kappa-zero,thm:lb-infinite})
  that we mentioned earlier are easily seen to hold with the exact same parameters for the L\'evy-Prokhorov metric as well. 
\end{remark}

For intuition, let us briefly  discuss the relationship between the  
$\dDe^{(\ed)}(\cdot, \cdot)$ distance notion and the presence of spikes in our distributions. Suppose that $D$ has a spike of amplitude $\ge 2\epsilon$ at $x_0$; in more detail, suppose that  $D=D_1 + D'$ where $D_1 = u_0 \delta_{x_0}$, with $u_0 \ge 2\epsilon$ (think of $\epsilon$ as a small but positive constant), and suppose moreover that this spike is  ``isolated'' in the sense that not only does $D$ have no spikes within (toroidal) distance $\Delta$ of $x_0$, but moreover $D(x) \leq C$ for all $x \ne x_0$ such that $\dtor(x,x_0) \leq \Delta$ (think of $\Delta$ too as a small constant bounded below 1).
This ``isolation" condition implies that while the distribution $D$ may be non-zero in the neighborhood of $x_0$, for $\tau \le \Delta$ we have 
\[
    \int_{ \tau \ge \dtor(x,x_0)>0} D(x) \, dx \le C \cdot \tau^d\,,
\]
which is much less than $\epsilon$ (because of the exponential decay in $d$). 
Consequently, for $\tau \le \Delta$, we have that $D(\Ball(x_0, \tau)) \le u_0 + C \cdot \tau^d$. It follows that if we can do recovery in the heavy hitter distance, i.e., 
if we can obtain $D_2$ such that $\smash{\dDe^{(\ed)}(D, D_2) \le \epsilon}$ for some $\ed \le \Delta/2$,  it would imply that 
\begin{align*}
    D_2\big(\Ball(x_0, \ed)\big) &\le D\big(\Ball(x_0, 2\ed)\big) + \eps \lesssim u_0 + \eps,\quad\text{and}\\[0.6ex]
    D_2\big(\Ball(x_0, \ed)\big) &\ge D\big(\Ball(x_0,0)\big) - \epsilon = u_0 - \eps \geq \epsilon
.\end{align*} Thus, given $D_2$, we can achieve the following goals: 
\begin{itemize}
    \item We can locate the position $x_0$ of the spike in $D$ up to an additive (toroidal) distance of $\ed$;\vspace{-0.15cm}
    \item We can compute the amplitude $u_0$ of the spike in $D$ up to an additive error of roughly $\epsilon$. 
\end{itemize}
Note that these guarantees are meaningful only when the amplitude of the spike $u_0$ exceeds $\eps$ in  absolute value. In the language of data streaming~\cite{woodruff2016new}, we may think of the spike at $x_0$ as being an ``$\epsilon$-heavy hitter.'' 

Finally, we note that recovery under the heavy hitter error metric can be meaningful even without a true ``spike'' ($\delta$-function) at $x_0$. In particular, instead of having $D_1 = u_0 \delta_{x_0}$, if $D_1$ satisfies 
\[
    \int_{\dtor(x,x_0) \le \ed/2} D_1(x)\, dx = u_0\,,
\]
that is, $D_1$ has mass $u_0$ in a ball of radius $\ed/2$ around $x_0$ (an ``approximate spike near $x_0$ at the scale of $\ed/2$''), then a heavy hitter-reconstruction  guarantee stays meaningful.\medskip

\noindent\textbf{Our results for heavy hitter recovery.}  
The following is our main positive result: 

\begin{theorem}[Informal version of \Cref{thm:De-distance-upper}; heavy hitter distance positive result]
\label{thm:De-upper-informal}
	Let $D$ be an unknown distribution over $\mathbb{T}^d$, and fix 
    an error parameter $\eps>0$ and a scale parameter $\ed>0$. Given noisy Fourier coefficients $\{\wt{D}(\ell)\}_{\Vert \ell \Vert_1 \le T}$ of $D$ with 
\[
    T = O\left(\sqrt{d}\cdot \log(1/\epsilon) \cdot \ed^{-1}\right) \quad\text{and}\quad \kappa = \exp \left( - \sqrt{d} \log d \cdot \log(1/\epsilon) \cdot \ed^{-1}\right)
\]
we can recover a distribution  $D'$ such that $\dDe^{(\ed)}(D, D') \le \epsilon$.  
\end{theorem}

Note that for constant $\eps$ and $\ed$, the noise level $\kappa$ is (inverse) \emph{sub-exponential} in the dimension $d$, scaling  as $\smash{2^{-\wt{O}(\sqrt{d})}}$.  
Moreover, the bandlimit is crucially measured in terms of the $\ell_1$ norm (rather than the $\ell_\infty$ norm, as in our Wasserstein results) of the coefficient vector $\ell$. 
This is a significant improvement, since for constant $\eps$ and $\ed$, only $\smash{d^{O(T)} = 2^{\widetilde{O}(\sqrt{d})}}$ Fourier coefficients are required. 
Thus, both the number of Fourier measurements and the required precision are sub-exponential in the ambient dimension $d$. 

It is natural to ask whether \Cref{thm:De-upper-informal} can be quantitatively improved. 
Our main negative result below shows that the 
dependencies on both the bandlimit $T$ and noise level $\kappa$ in \Cref{thm:De-upper-informal} are essentially optimal: 

\begin{restatable} 
[Heavy hitter distance lower bound]
{theorem}
{Dedistancelower}
\label{thm:De-distance-lower}
There is an absolute constant $c>0$ such that the following holds.
Fix $\ed=0.49$ and let $\eps: 2^{-d/3} < \eps < 1/170$. 
There are two probability distributions $D_1$ and $D_2$ over $\mathbb{T}^d$ with the following properties:
\begin{itemize}
\item [(a)] For every $\ell  \in \Z^d$ with $\|\ell\|_1 \leq c\sqrt{{\frac d {\log(1/\eps)}}}$,  we have 
$|\widehat{D_1}(\ell) - \widehat{D_2}(\ell)| \leq  2^{-\Omega\left(\sqrt{d \log(1/\eps)}\right)}$; and\vspace{-0.15cm}

\item [(b)] $\dDe^{(\ed)}(D_1,D_2) > \eps$.
\end{itemize}
\end{restatable}


\subsection{Techniques}

In this subsection we  sketch the main ideas behind our heavy hitter recovery results. 
The Wasserstein results follow more standard arguments and are given in \Cref{sec:appendix-wasserstein}.

\subsubsection{The positive result} 
\label{subsubsec:HH-tech-intro}

Recalling \Cref{thm:De-upper-informal}, we wish to show that if two distributions $D_1, D_2$ differ significantly in heavy hitter distance (that is, $\smash{\dDe^{(\ed)}(D_1, D_2) > \epsilon}$) then there exist an $\ell$ with $\|\ell\|_1 \le T$ such that
\[
    |\widehat{D}_1(\ell) - \widehat{D}_2(\ell)| > \kappa.
\]
Establishing this implication immediately yields recovery: if the noisy $\smash{\{\wt{D}(\ell)\}_{\|\ell\|_1 \le T}}$ are known up to additive error $\kappa$, then one can recover a distribution $D'$ with the desired guarantee.

To prove this, we construct a carefully chosen \emph{mollifier} $p$ that localizes mass in space while remaining bandlimited in the Fourier domain.   
Intuitively, a mollifier is a distribution tightly concentrated around a point (in this case, the origin $0^d$), and we quantify its localization by
\[
    \eta:= \Ex_{\bx \sim p}\big[\dtor(\bx, 0^d)\big]\,.
\]
We refer the reader to~\cite{Rudin:FA:91} for a detailed account of mollifiers and their properties. 
The key challenge is to balance spatial localization (captured by $\eta$) with bandlimit (the smallest $T$ for which $\wh{p}(\ell) = 0$ whenever $\|\ell\|_\infty > T$). 
For our purposes, we construct a mollifier $p$ with the following qualitative properties: 
\begin{itemize}
    \item $p (x) \in [0,1]$ for all  $x \in \mathbb{T}^d$;
    \item $ p(0^d)=1$ and $p(x)$ remains close to $1$ when $\dtor(x,0^d)$ is small; and 
    \item $p(x)$ is close to $0$ when $\dtor(x,0^d)$ is large. 
\end{itemize}
Such a function acts as a smooth indicator of a small neighborhood, allowing us to detect regions where the distribution has significant mass. 

A key point is that heavy hitter recovery only requires a weak form of localization, especially in contrast to Wasserstein distance recovery. 
In particular, even after normalization, if $\bx \sim p$ it may be the case that $\Ex[\dtor(\bx,0^d)]$ is relatively large; hence such a $p$ would not suffice for Wasserstein distance recovery. 
Nevertheless, this weaker form of localization is sufficient for heavy hitter recovery, as it still enables us to reliably identify regions where the signal places significant mass.  

In \Cref{lem:bump}, we construct such a mollifier $p(\cdot)$ using tools from approximation theory and complexity-theoretic constructions of low-degree polynomials that approximate symmetric Boolean functions over $\zo^d$. 
The key feature of our construction is that $p$ is a trigonometric polynomial whose \emph{total} degree scales only as $\sqrt{d}$. 
This is in sharp contrast to mollifiers that achieve stronger forms of localization (such as those employed in our Wasserstein recovery result, cf.~\Cref{thm:multi-dim-Wass-ub}), for which the total degree must scale at least linearly in $d$. 
Since our measurement model restricts access to Fourier coefficients $\ell$ with $\|\ell\|_1 \le T$, this improved degree bound directly translates into improved bandlimit and noise parameters.

Our construction of the mollifier $p(\cdot)$ proceeds in two stages:
\begin{flushleft} \begin{enumerate}
     \item In stage one, we construct  a simple constant degree trigonometric polynomial which is $1$ at $0^d$ but is at most $\epsilon^2/d$ once $\dtor(x, 0^d) \ge \epsilon$.  
\item In stage two, we compose this with a suitable univariate polynomial $q$ such that \begin{itemize}
    \item[(a)] $q(x) \in [0,1]$ for $x \in [0,1]$; 
    \item[(b)] $q(0)=1$ and $q(x)$ is close to 1 when $x>0$ is close to 0; and 
    \item[(c)] $q(x)$ is close to 0 when $x \in [0,1]$ is large.  
    \end{itemize}
    Crucially, this can be achieved with a polynomial $q$ of degree $\approx \sqrt{d}$.
 \end{enumerate}\end{flushleft}
The key step is the second stage. 
As discussed earlier, the square-root savings in degree arise from results in polynomial approximation theory\hspace{0.1cm}---\hspace{0.1cm}specifically, Paturi’s theorem~\cite{Paturi92} on low-degree polynomial approximations of symmetric Boolean functions. 
This improvement in degree is precisely what allows our algorithm in \Cref{thm:De-upper-informal} to tolerate noise parameters as large as $\kappa = \exp(-\sqrt{d}\log d)$ while requiring Fourier measurements of total degree only $O(\sqrt{d})$ (for fixed $\eps, \ed > 0$). 
Finally, we note that although extremal polynomials with optimal dependence on degree have been widely used in theoretical computer science and machine learning theory~\cite{klivans2001learning,HMPW08,Sherstov:algorithmic}, we are not aware of prior work in the signal processing or super-resolution context using them. 

\subsubsection{The negative result}  

The heavy lifting in the proof of \Cref{thm:De-distance-lower} occurs in the proof of an intermediate result, \Cref{thm:cube-lb}, where we construct two distributions $P_1,P_2$ over the hypercube $\bits^d$ with two special properties. To explain these properties, we recall that any function $g: \bits^d \rightarrow \mathbb{R}$ admits a ``Fourier expansion" (sometimes called the Walsh-Fourier expansion), meaning that it can be expressed as 
 \[
g(x) = \sum_{S \subseteq [d]} \widehat{g}(S) \chi_S(x), \quad \textrm{where} \quad \widehat{g}(S) := \Ex_{\bx \sim \bits^d}[g(\bx) \chi_S(\bx)]
\quad \text{and} \quad
\chi_S(x) = \prod_{i \in S} x_i.
 \]
 We refer the reader to \cite{odonnell2014analysis} for background on this rich theory and its many applications in theoretical computer science and mathematics. 
 Turning back to the distributions $P_1$ and $P_2$ (which we view as non-negative functions over $\bits^d$ with $\sum_x P_i(x)=1$), we show that they have the following special properties: 
\begin{flushleft}\begin{enumerate}
    \item  $P_1$ and $P_2$ differ significantly in the amount of mass they put on the point $1^d$; more precisely, $$\big|P_1(1^d) - P_2(1^d)\big| \ge 2\epsilon.$$ 
    \item When we view $P_1$ and $P_2$ as functions over $\bits^d$, their low-degree Fourier coefficients are very close to each other; more precisely, for all $|S| \le c \sqrt{d/{\log(1/\epsilon)}}$, we have that $$\big|\widehat{P_1}(S)- \widehat{P_2}(S)\big| \le 2^{-d} \cdot 2^{\Omega(\sqrt{d \log(1/\epsilon)})}.$$ 
\end{enumerate}\end{flushleft}
 We then map these distributions $P_1$ and $P_2$ over $\bits^d$ to distributions $D_1$ and $D_2$ over $\mathbb{T}^d$  via a simple embedding
that achieves the desiderata of \Cref{thm:De-distance-lower}. 

As the reader can see, the crux of this argument is the construction of the distributions $P_1$ and $P_2$. The existence of such a pair of distributions is argued using constructions of an extremal polynomial~\cite{BEK:97}  which, roughly speaking,  has bounded coefficients but also has a zero of very high multiplicity at $1$. (Interestingly, the arguments in \cite{BEK:97} giving such a real polynomial make essential use of complex analysis.)  We omit further details here, but note that these polynomials and their behavior was crucial in the study of a different  ``inverse statistical'' problem, namely {\em population recovery}, which has been extensively studied over the past decade or so in theoretical computer science~\cite{lovett2017noisy, dvir2012population, moitra2013lossy, polyanskiy2017sample, de2020sharp}. As we elaborate below, the conceptual link to the population recovery literature played a central role in the development of this work.

\subsection{Connections with population recovery}

As mentioned above, the problem of population recovery has been studied in detail in theoretical computer science in recent years. The two most relevant works for us are \cite{polyanskiy2017sample, de2020sharp}, which are independent and concurrent with each other, and which  studied the problem of {\em recovering heavy hitters} from samples on the hypercube that have been contaminated with ``bit-flip noise.'' In this problem, there is an unknown distribution $\mathcal{D}$ over $\{0,1\}^n$. Given a parameter $\rho \in (0,1]$, we define the ``noisy distribution" $\mathcal{D}_\rho$ as follows: to generate a sample from $\mathcal{D}_\rho$, we first sample $\bx \sim \mathcal{D}$ and then flip every bit of $\bx$ independently with probability $(1-\rho)/2$. The goal is to recover (in some sense) the unknown distribution $\mathcal{D}$ from the noisy samples. 

Both \cite{polyanskiy2017sample} and \cite{de2020sharp} showed that for any constants $\rho<1$ and $\epsilon>0$, there is an algorithm which can recover the $\epsilon$-heavy hitters of $\mathcal{D}$ (i.e., the points $x \in \zo^n$ such that $\calD(x) \geq \eps$) with sample complexity $\approx \exp(n^{1/3})$.  Furthermore, for such points, the algorithm also outputs $\mathcal{D}(x)$ up to $\pm \epsilon/2$. In other words, the algorithm outputs the exact location and the approximate mass of each of the heavy hitters.

To connect this to the current work, note that similar to \cite{polyanskiy2017sample, de2020sharp}, we also seek to recover the location of the ``heavy hitters" (i.e.,~the spikes or approximate spikes).  Of course, given that our domain $\mathbb{T}^d$ is continuous, we can only recover the locations of the ``heavy hitters" up to some ``small ball" (this is reflected in our use  of the $\smash{\dDe^{(\ed)}(\cdot, \cdot)}$ reconstruction criterion). A second point of divergence between the current work and \cite{polyanskiy2017sample, de2020sharp} is that while we assume access to noisy \emph{Fourier coefficients}, \cite{polyanskiy2017sample, de2020sharp} assume access to noisy \emph{samples}. These access models are related in that noisy Fourier coefficients can be computed from noisy samples: in particular, for an unknown distribution $\mathcal{D}$, its Fourier coefficient $\mathcal{D}(S)$ can be computed to within additive error $\pm 2^{-d} \cdot \delta$ using  $\approx (1/\delta^2) \cdot (1/\rho)^{|S|}$ noisy samples from $\calD_\rho$. So we can estimate Fourier coefficients from noisy samples (but the sample complexity deteriorates exponentially in the degree of the Fourier coefficient $S$ being estimated). 
However, we cannot go in the other direction and simulate noisy samples given noisy Fourier coefficients; intuitively, this is why \cite{polyanskiy2017sample, de2020sharp} are able to achieve an $\exp(d^{1/3})$ sample complexity while our complexity (for our Fourier coefficient based approach) must scale as $\exp(d^{1/2}).$

We close this discussion by remarking that, while the domain considered in \cite{polyanskiy2017sample, de2020sharp} is different from our setting, as well as the form of access via which the algorithm gets access to the underlying distribution, the \cite{polyanskiy2017sample, de2020sharp} results on sub-exponential time recovery of heavy hitters over the hypercube served as an initial inspiration for our high-dimensional heavy hitter distance results for super-resolution.
\subsection{Organization}

After reviewing preliminaries in \Cref{sec:prelims}, we give our results for Wasserstein distance reconstruction in \Cref{sec:multi-dim-Wasserstein}, with proofs deferred to \Cref{sec:appendix-wasserstein}. Our main results, upper and lower bounds for heavy hitter reconstruction, are presented in 
\Cref{sec:upperboundheavy,sec:lowerboundheavy}, respectively.



\section{Preliminaries}
\label{sec:prelims}

Throughout this paper, we write $\mathbb{T} := [0,1)$ to denote the torus and $\mathbb{T}^d := [0,1)^d$ to denote the $d$-dimensional torus. 
We write $0^d$ to denote the origin $(0,0,\dots,0) \in \R^d$; we will use similar notation over $\bits^d$ where the meaning will be clear from context.

Throughout this paper, we consider arbitrary probability distributions over $\mathbb{T}^d$,
which may have both continuous and atomic components (i.e., Dirac deltas). 
With a mild (and standard) abuse of notation, we identify such distributions with $L^1$-integrable functions $D : \mathbb{T}^d \to \mathbb{R}_{\ge 0}$ satisfying
\[
    \int_{\mathbb{T}^d} D(x)\,dx = 1,
\]
while allowing atomic delta components. 
(This convention lets us treat continuous and discrete distributions in a unified manner.)  
For measure-theoretic background, we refer the reader to e.g., Chapter~6 of~\cite{rudin1987real} or Chapter~3 of~\cite{folland1999real}.

\subsection{Fourier analysis over $\mathbb{T}^d$}\label{sec:fourier-torus}

Recall that over $\mathbb{T}^d$, for $\ell=(\ell_1,\dots,\ell_d) \in \Z^d$,  the $\ell^\text{th}$ Fourier coefficient of a distribution $D$ is 
\[
\widehat{D}(\ell) = \int_{x \in \mathbb{T}^d} D(x) \cdot e^{2 \pi i (\ell \cdot x)} dx.
\]
For a ``point mass'' at $r \in \mathbb{T}^d$ (corresponding to the Dirac delta function $D(x) = \delta_r(x) =  \delta(x-r)$), the Fourier transform is given by
\[
\widehat{D}(\ell) = \int_{x \in \mathbb{T}^d} \delta(x-r) \cdot e^{2 \pi i (\ell \cdot x)} dx = e^{2 \pi i (\ell \cdot r)}.
\]
We recall \emph{Plancherel's theorem}, which says that if $D \in L^2$, then $\int_{x \in \mathbb{T}^d} |f(x)|^2 dx = 
\sum_{\ell \in \Z^d} |\widehat{f}(\ell)|^2$. 
We further recall that given distributions $D_1$ and $D_2,$, their \emph{convolution}, written $D_1 * D_2$, is
\[
(D_1 \ast D_2) (x) = \int_{y \in \mathbb{T}^d} D_1(y) \cdot D_2(x-y) \, dy,
\]
and given that $D_1,D_2$ are $L^1$-integrable, we have $\widehat{D_1 * D_2}(\ell)=\widehat{D_1}(\ell) \cdot \widehat{D_2}(\ell)$ for all $\ell \in \Z^d$.

\subsection{Distance and Wasserstein distance over the torus}
\label{subsec:prelims-wass}

Recall that the ``toroidal distance'' $\dtor$ 
between two points $a_1, b_1$ in the one-dimensional torus $\mathbb{T}$ (also known as the Lee metric) is 
\[
\dtor(b_1,a_1) = \dtor(a_1,b_1) := \min\big\{b_1-a_1, a_1+(1-b_1)\big\} \quad \text{where~}0 \leq a_1 < b_1,
\]
and that the Euclidean toroidal distance between two points $a,b$ in the $d$-dimensional torus $\mathbb{T}^d$ is
\[
\dtor(a,b) = \sqrt{\dtor(a_1,b_1)^2 + \cdots + \dtor(a_d,b_d)^2}.
\]

We recall that if $f,g: \mathbb{T}^d \to \R_{\geq 0}$ are non-negative valued functions with $\int_{x \in \mathbb{T}^d} f(x) dx=\int_{x \in \mathbb{T}^d} g(x) dx=1$, corresponding to the densities of two probability distributions, then the \emph{Wasserstein distance} $\dW(f,g)$ between $f$ and $g$ is the minimum ``transportation cost'' of turning $f$ into $g$, i.e.,~the minimum, over all couplings $(\bX,\bY)$ of random variables $\bX \sim f$, $\bY \sim g$, of $\E[\dtor(\bX,\bY)]$.
As is well known and was mentioned earlier, by Kantorovich-Rubenstein duality an alternative definition is that

\begin{equation} \label{eq:KR}
\dW(f,g) := \sup_{h: \mathbb{T}^d \rightarrow \mathbb{R}} \int_{x} h(x) \cdot (f(x) - g(x)) dx\,,
\end{equation}
where $h$ ranges over all $1$-Lipschitz (vis-a-vis toroidal distance) functions over $\mathbb{T}^d$ satisfying $\|h\|_\infty \leq \sqrt{d}$.   (Note that if $f$ and $g$ are densities of probability distributions, then the condition $\|h\|_\infty \leq \sqrt{d}$ is immaterial since \Cref{eq:KR} is unchanged when $h(x)$ is replaced by $h(x)+c$ for any $c \in \R$.)

We will use the following fact, which gives a bound on the Fourier coefficients of 1-Lipschitz functions (see,  e.g.,~\cite{HSSV21colt} for the simple argument establishing this):

\begin{fact} \label{fact:Lipschitz-Fourier}
Let $f$ be any 1-Lipschitz function over $\mathbb{T}^d$ and let $0^d \neq \ell \in \Z^d$. Then it holds that 
\[
	|\widehat{f}(\ell)| \le \frac{1}{  \|\ell\|_2}\,.
\]
\end{fact}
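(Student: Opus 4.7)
The plan is to prove the bound via a translation trick, exploiting the fact that the Fourier kernel $e^{2\pi i \ell \cdot x}$ picks up a predictable phase under shifts of the argument, while the Lipschitz hypothesis controls how much the signal $f$ changes under the same shift.

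First, I would note that for any $v \in \mathbb{T}^d$, a change of variables $x \mapsto x - v$ (which is a bijection of the torus) yields
\[
\widehat{f}(\ell) \;=\; \int_{\mathbb{T}^d} f(x)\, e^{2\pi i \ell \cdot x}\, dx \;=\; e^{2\pi i \ell \cdot v} \int_{\mathbb{T}^d} f(x+v)\, e^{2\pi i \ell \cdot x}\, dx.
\]
Specializing to a $v$ for which $e^{2\pi i \ell \cdot v} = -1$, i.e.\ $\ell \cdot v \equiv \tfrac{1}{2} \pmod 1$, and averaging this with the original expression, gives
\[
2\,\widehat{f}(\ell) \;=\; \int_{\mathbb{T}^d} \bigl(f(x) - f(x+v)\bigr)\, e^{2\pi i \ell \cdot x}\, dx,
\]
so that $|\widehat{f}(\ell)| \le \tfrac{1}{2}\,\|f(\cdot) - f(\cdot + v)\|_\infty$.

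Next, I would invoke the $1$-Lipschitz hypothesis (with respect to $\dtor$) to bound $|f(x) - f(x+v)| \le \dtor(x, x+v)$. Since the toroidal distance is always dominated by the ``straight-line'' Euclidean distance on the torus, this is at most $\|v\|_2$. Consequently $|\widehat{f}(\ell)| \le \|v\|_2/2$ for any shift $v$ satisfying the phase condition $\ell \cdot v \equiv \tfrac{1}{2} \pmod 1$.

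Finally, I would minimize over such $v$ by taking the explicit choice $v := \ell / (2\|\ell\|_2^2)$, which is well-defined since $\ell \ne 0^d$. By construction, $\ell \cdot v = 1/2$, and $\|v\|_2 = 1/(2\|\ell\|_2) \le 1/2$, so $v$ lies in the fundamental domain of $\mathbb{T}^d$ and the Euclidean bound on $\dtor$ applies. Substituting yields $|\widehat{f}(\ell)| \le 1/(4\|\ell\|_2) \le 1/\|\ell\|_2$, as claimed. The only ``obstacle'' worth flagging is a sanity check that the torus change of variables and the Lipschitz-to-Euclidean comparison $\dtor(x,x+v) \le \|v\|_2$ are both legitimate when $v$ has small norm; both are immediate from the definitions in Section~2.3, so no real difficulty arises.
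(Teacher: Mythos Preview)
The paper does not actually supply its own proof of this fact; it simply states it and points to \cite{HSSV21colt} for ``the simple argument establishing this.'' Your shift-by-a-half-period argument is a standard and fully correct proof of the bound---indeed it delivers the slightly sharper estimate $|\widehat{f}(\ell)| \le 1/(4\|\ell\|_2)$---and is almost certainly what the cited reference does as well.
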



\section{Wasserstein distance recovery}
\label{sec:multi-dim-Wasserstein}

In this section, we state our upper and lower bounds for Wasserstein distance reconstruction in the $d$-dimensional setting. 
All proofs are deferred to \Cref{sec:appendix-wasserstein}.

\subsection{Upper bound}

We prove that reconstruction within Wasserstein distance $\eps$ is possible from noisy Fourier coefficients $\{\wt{f}(\ell)\}_{\|\ell\|_\infty \le T}$ with bandlimit $T = O(\sqrt{d}/\eps)$, provided each coefficient is corrupted by additive noise of magnitude at most $\kappa = O(\eps^d / d^{d/2})$. 
Moreover, for fixed dimension $d$, the reconstruction can be carried out in time polynomial in the input size via linear programming; see \Cref{sec:wasserstein-algo}. 

\begin{theorem}~\label{thm:multi-dim-Wass-ub}
    Let $D_1$ and $D_2$ be two distributions over the the $d$-dimensional torus $\mathbb{T}^d$. For any $0 < \eps < 1/2$, and any $d \geq 1$, 
    \[
    \text{let~}T = \left\lceil 6\sqrt{d}/\eps \right\rceil,
    \quad \text{and let~} 
    \kappa = \begin{cases}
        0.001 \eps / \log(1/\eps) & \text{~when~}d = 1\\
        \left(0.01 \eps / \sqrt{d}\right)^d & \text{~when~}d \ge 2
        \end{cases}.
    \]
        If $D_1,D_2$ satisfy $| \widehat{D_1}(\ell) - \widehat{D_2}(\ell) | \le \kappa$ for all integer vectors $\ell = (\ell_1, \dots, \ell_d) \in \mathbb{Z}^d$ with $\left\| \ell \right\|_{\infty} \le T$, then $\dW(D_1, D_2) \le \eps$.
\end{theorem}

It is instructive to see how the Wasserstein error guarantee in \Cref{thm:multi-dim-Wass-ub} circumvents the need for a separation condition.
For example, returning to the signals from \Cref{eq:ithaca}, note that the two signals $f_1$ and $f_2$ there are $\Delta/2$-close in Wasserstein distance.   
Intuitively, the Wasserstein metric blurs the distinction between nearby points, allowing for meaningful approximate recovery even when spikes are arbitrarily close.

Of course, if the support points are well separated, then accurate recovery in Wasserstein distance automatically yields accurate estimates of the true support locations $\{t_j\}$ and  amplitudes $\{u_j\}$.
In this sense, the Wasserstein criterion provides a \emph{soft interpolation} between rigid separation-based guarantees and meaningful recovery in the absence of any separation assumption. 

\begin{remark} 
\label{rem:MMRS}
As mentioned earlier, \Cref{thm:multi-dim-Wass-ub} is closely related to Theorem~32 of~\cite{MMRS25}. 
In particular, the results of~\cite{MMRS25} imply Wasserstein recovery with bandlimit $T = O(d/\epsilon)$ and noise level $\kappa = O(1/T^d)$ (their formulation allows for non-uniform noise across measurements; fixing a uniform noise level yields this bound). 
Their approach is formulated in terms of approximate Chebyshev moment measurements, whereas ours uses Fourier coefficients. 
However, these perspectives are closely connected via classical approximation theory and can be viewed as morally equivalent. 

Our result slightly improves the dependence on the dimension by achieving bandlimit $T = O(\sqrt{d}/\epsilon)$ with essentially the same requirements on the noise parameter $\kappa$. 
That said, in the applications considered in~\cite{MMRS25}, the dominant constraint is the noise level, which scales as $\exp(-\Theta(d \log d))$, rather than the bandlimit. 
From this perspective, it is plausible that the $\sqrt{d}$ improvement in $T$ could be obtained in the \cite{MMRS25} setting via a more careful analysis of their approach.
\end{remark}

\subsection{Lower bounds}

\Cref{thm:low-dim-kappa-zero} shows that even given the exact Fourier coefficients (i.e., in the noiseless setting when $\kappa = 0$), if the bandlimit $T$ on individual degree satisfies $T \lesssim \sqrt{d}/(2 \epsilon)$, then  $\eps$-accurate recovery in Wasserstein distance is impossible. It is achieved by considering two distributions which are uniform on grids of width $1/T$ where the two grids are $1/T$-shifts of each other (for $T \approx \sqrt{d}/2\epsilon$). As the distributions are uniform on grids, calculating their Fourier coefficients and showing that they are identical up to bandlimit $T$ is a straightforward exercise. 

\begin{restatable}{theorem}{WassLBKappaZero}
\label{thm:low-dim-kappa-zero}
    For any $0 < \eps \le 1/2$, there are two distributions $D_1$, $D_2$ over the $d$-dimensional torus $\mathbb{T}^d$, such that $\widehat{D_1}(\ell) = \widehat{D_2}(\ell)$ for any $\ell \in \mathbb{Z}^d$ with $\|\ell\|_\infty \le \sqrt{d}/(2\eps) - 1$, while $\dW(D_1, D_2) \ge \eps$.
\end{restatable}

\Cref{thm:lb-infinite} shows that even given all (infinitely many) Fourier coefficients to error $\kappa = \epsilon^{0.249 d}$, the minimum error in Wasserstein recovery must be at least $\epsilon$. More precisely, it establishes the existence of two distributions $D_1$ and $D_2$ such that $|\widehat{D_1}(\ell) - \widehat{D_2}(\ell)| \leq \kappa$ for all $z \in \Z^d$, yet $\dW(D_1,D_2)> \eps.$ 
The lower bound construction uses two distributions each of which is the convolution of a suitable kernel with a discrete distribution that is uniform on a randomly chosen discrete point set of size $\approx (1/\epsilon)^d$.
The proof of correctness uses the probabilistic method in tandem with the fact that convolution with this specific kernel annihilates higher-order Fourier coefficients. 

\begin{restatable}{theorem}{WassLBInfiniteT}
\label{thm:lb-infinite}
    There exists a universal constant $\eps_0 > 0$, such that for any $0 < \eps \le \eps_0$, there are two distributions $D_1$, $D_2$ over the $d$-dimensional torus $\mathbb{T}^d$, such that $|\widehat{D_1}(\ell) - \widehat{D_2}(\ell)| \le \eps^{0.249d}$ for any $\ell \in \mathbb{Z}^d$, while $\dW(D_1, D_2) \ge \eps$.
\end{restatable}

Finally, a simple counterexample (\Cref{obs:1dlowerinfiniteT}) demonstrates that if the noise level $\kappa$ is allowed to be as large as $O(\eps)$, then even access to \emph{all} Fourier coefficients (i.e., infinite bandlimit) does not permit $\eps$-accurate Wasserstein recovery. 



\section{A $d$-dimensional upper bound for heavy hitter reconstruction}\label{sec:upperboundheavy}

We now turn to the main positive result of this paper, which is an upper bound for heavy hitter reconstruction of distributions.
Our goal in this section is to prove the following:

\begin{theorem} 
\label{thm:De-distance-upper}
	Given $0 < \eps \leq 1/2, 0 < \ed < 1$, let $D_1,D_2$ be two distributions over $[0,1)^d$ such that 
	\begin{equation} \label{eq:DDU}
		\abs{\widehat{D_1}(\ell) - \widehat{D_2}(\ell)} \leq \kappa := 
		\frac{\eps}{9}\pbra{\frac{1}{d}}^{O(T)} 
		\text{~for all $\ell \in \Z^d$ with~}\|\ell\|_1 \leq T\,,
	\end{equation}
	for some \smash{$T = O\pbra{{\frac {\sqrt{d} \log(1/\eps)}{\ed}}}$}. 
	Then $\dDe^{(\ed)}(D_1,D_2) \leq \eps.$
\end{theorem}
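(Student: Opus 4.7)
The plan is to use a low--degree trigonometric polynomial $p_\tau$ that approximates the indicator of the ball $\Ball(0^d,\tau)$ as a smooth test function against $D_1$ and $D_2$, and then invoke Plancherel's theorem to turn a small difference in low--frequency Fourier coefficients into a small difference in ball masses.

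\textbf{Step 1 (Bump construction).} For each $\tau\in[0,\ed]$, I would construct a trigonometric polynomial $p_\tau:\mathbb{T}^d\to\mathbb{R}$ satisfying: (a) $0\le p_\tau(y)\le 1$; (b) $p_\tau(y)\ge 1-\eps/9$ whenever $\dtor(y,0^d)\le \tau$; (c) $p_\tau(y)\le \eps/9$ whenever $\dtor(y,0^d)\ge \tau+\ed$; and (d) total trigonometric degree (max $\|\ell\|_1$ in its Fourier support) at most $T=O(\sqrt d\,\log(1/\eps)/\ed)$. Following the two--stage recipe sketched in the techniques section, I would first build a constant--(in $d$)--degree trigonometric ``distance fingerprint'' $\phi:\mathbb{T}^d\to[0,1]$ that is $1$ at the origin and at most $\mathrm{poly}(\eps/d)$ outside $\Ball(0^d,\tau+\ed)$, and then compose with a univariate polynomial $q$ of degree $O(\sqrt{d}\log(1/\eps))$ of the Paturi flavor (mapping $[0,1]$ into $[0,1]$, close to $1$ on the ``accept'' subinterval, close to $0$ on the ``reject'' subinterval). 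The composition $p_\tau:=q\circ\phi$ then inherits the pointwise bounds while its total trig degree multiplies only the constant inner degree by the outer $O(\sqrt d\log(1/\eps))$.

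\textbf{Step 2 (Translate and test).} For any $x_0\in\mathbb{T}^d$ and $\tau\in[0,\ed]$, let $p_\tau^{x_0}(y):=p_\tau(y-x_0)$. Translation multiplies each Fourier coefficient by a unimodular phase, so $p_\tau^{x_0}$ still has total trig degree $\le T$, and $|\widehat{p_\tau^{x_0}}(\ell)|\le\|p_\tau\|_\infty\le 1$. Set $I_i:=\int_{\mathbb{T}^d} p_\tau^{x_0}(y)\,D_i(y)\,dy$. Splitting $\mathbb{T}^d$ into $\Ball(x_0,\tau)$, the annulus $\Ball(x_0,\tau+\ed)\setminus\Ball(x_0,\tau)$, and the exterior, properties (a)--(c) yield
\[
(1-\eps/9)\cdot D_i(\Ball(x_0,\tau))\;\le\;I_i\;\le\;D_i(\Ball(x_0,\tau+\ed))+\eps/9.
\]

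\textbf{Step 3 (Plancherel and conclusion).} By Plancherel's theorem on $\mathbb{T}^d$ applied to the real functions $p_\tau^{x_0}$ and $D_1-D_2$,
\[
I_1-I_2\;=\;\sum_{\|\ell\|_1\le T}\widehat{p_\tau^{x_0}}(\ell)\,\overline{\bigl(\widehat{D_1}(\ell)-\widehat{D_2}(\ell)\bigr)},
\]
using that $\widehat{p_\tau^{x_0}}(\ell)=0$ for $\|\ell\|_1>T$. The number of lattice points $\ell\in\mathbb{Z}^d$ with $\|\ell\|_1\le T$ is at most $d^{O(T)}$, so combining $|\widehat{p_\tau^{x_0}}(\ell)|\le 1$ with the hypothesis \eqref{eq:DDU} and the choice $\kappa=(\eps/9)(1/d)^{O(T)}$ gives $|I_1-I_2|\le \eps/9$. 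Chaining with the two inequalities from Step~2 (and using $\eps\le 1/2$, $D_i(\Ball(\cdot))\le 1$ to absorb the multiplicative $(1-\eps/9)^{-1}$ factor) yields $D_1(\Ball(x_0,\tau))\le D_2(\Ball(x_0,\tau+\ed))+\eps$, and by symmetry the reverse inequality. Since $x_0\in\mathbb{T}^d$ and $\tau\in[0,\ed]$ were arbitrary, $\dDe^{(\ed)}(D_1,D_2)\le\eps$.

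The main obstacle is Step~1: producing $p_\tau$ with the simultaneous constraints of total trig degree only $O(\sqrt d)$ (rather than the $\Omega(d)$ one would naively incur via a Jackson--style product mollifier or by raising a simple distance polynomial to a high power) and with values in $[0,1]$ so that its Fourier coefficients are bounded by $1$. The square--root savings require using a Paturi--type extremal univariate polynomial for the outer function and arranging the inner function $\phi$ to be of genuinely constant degree, so that the two degrees multiply rather than either one scaling linearly with $d$; getting the quantitative parameters of $\phi$ and $q$ to line up with the $\tau\in[0,\ed]$ range and the $\eps/9$ error budgets is the delicate technical point.
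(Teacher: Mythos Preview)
Your proposal is correct and follows essentially the same approach as the paper: the paper builds the bump function explicitly as $p(x)=q\bigl(\tfrac{1}{d}\sum_i \sin^2(\pi x_i)\bigr)$ with $q$ a Paturi-style threshold polynomial composed with a standard amplifier, and then argues by contrapositive rather than directly, but the sandwich-and-Fourier-expansion core (your Steps~2--3) is identical. The only cosmetic difference is that the paper's inner function is $0$ at the origin and \emph{increases} with distance (so $q$ maps small inputs near $1$ and large inputs near $0$), rather than the flipped convention you sketched.
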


\medskip

Note that in \Cref{thm:De-distance-upper} the degree bound is in the sense of \emph{total degree} rather than \emph{individual degree} as was the case for our Wasserstein results. 
As mentioned earlier, this is important because it means that the total number of Fourier coefficients required for reconstruction scales (as a function of $d$) as $2^{\wt{O}(\sqrt{d})}$ rather than as $\exp(d)$.


\subsection{A low-degree trigonometric polynomial ``bump function''} \label{sec:trigbump}

An essential ingredient of our approach is a (relatively) low-degree trigonometric polynomial which serves as a mollifier or ``bump function'' over the torus $[0,1)^d$; the main result of this subsection, \Cref{lem:bump}, establishes the existence of such a trigonometric polynomial.

\subsubsection{A useful univariate polynomial}

To build the desired multivariate trigonometric polynomial bump function, we begin with a  construction of a univariate polynomial with the following behavior over the interval $[0,1]$:  it takes values in $[0,1]$ on inputs in $[0,1]$, takes values close to 1 on inputs that are within a certain prescribed distance of 0, and takes values close to 0 on inputs that are ``even a little bit beyond'' that prescribed distance.  The following result, giving such a univariate polynomial, is the main result of this subsubsection:

\begin{lemma}  \label{lem:univariate-new}
Let $0<\eps,\ed<1$, let $0<b<A$ where $A,b=\Theta(\ed^2)$, and let $d$ be an asymptotic parameter.
There is a real polynomial $q(x)$ of degree
\begin{equation} \label{eq:maybeblah}
\Psi(\eps,A,b,d)
= O\pbra{
{\frac {\sqrt{d} \cdot \log(1/\eps)} {\ed}}
}
\end{equation}
with the following properties:
\begin{itemize}
\item [(1)] For $x \in [0,1]$ we have $q(x) \in [0,1]$;
\item [(2)] $q(x) \in [1-\eps/2,1]$ for $x \in [0,A/d]$;
\item [(3)] $q(x) \in [0,\eps/8]$ for $x \in [(A+b)/d,1]$.
\end{itemize}
\end{lemma}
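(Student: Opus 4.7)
The target degree $\Psi = O(\sqrt{d}\log(1/\eps)/\ed)$ has the tell-tale $\sqrt{d}$ shape that signals a Chebyshev/Paturi-style argument, and the structural reason the saving is available is the following: since $A,b = \Theta(\ed^2)$, both $A/d$ and $(A+b)/d$ are of order $\ed^2/d$, so the transition window $[A/d,(A+b)/d]$ on which $q$ must drop from near $1$ to near $0$ sits against the left endpoint of $[0,1]$. In Chebyshev's ``$\sqrt{\cdot}$'' metric near an endpoint, a window of Euclidean width $\ed^2/d$ at distance $\ed^2/d$ from the endpoint has width $\Theta(\ed/\sqrt{d})$, and this is precisely the scale controlling the required trigonometric bandwidth.

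Concretely, the plan is to substitute $y = 1-2x \in [-1,1]$ and then $y=\cos\theta$ with $\theta\in[0,\pi]$. Under this composition the intervals $[0,A/d]$ and $[(A+b)/d,1]$ in $x$ correspond respectively to $[0,\theta_2]$ and $[\theta_1,\pi]$ in $\theta$, with $\theta_2 = \arccos(1 - 2A/d) \approx 2\sqrt{A/d}$ and $\theta_1 \approx 2\sqrt{(A+b)/d}$. Using $A,b = \Theta(\ed^2)$ one computes
\[
\theta_1 - \theta_2 \;=\; \frac{2(b/d)}{\sqrt{(A+b)/d}+\sqrt{A/d}} \;=\; \Theta\!\left(\frac{b}{\sqrt{Ad}}\right) \;=\; \Theta\!\left(\frac{\ed}{\sqrt{d}}\right),
\]
and $\theta_2 = \Theta(\ed/\sqrt{d})$ as well. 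I would then build an even trigonometric polynomial $P(\theta)$ of bandwidth $n = O(\sqrt{d}\log(1/\eps)/\ed)$ that is close to $1$ on $[-\theta_2,\theta_2]$ and close to $0$ on $[\theta_1,\pi]\cup[-\pi,-\theta_1]$. This is a standard one-dimensional step-function approximation on the circle whose transition window has width $\Omega(\ed/\sqrt{d})$; a classical Chebyshev-based construction in the spirit of Paturi's theorem (or a Jackson-type smoothing composed with a constant-degree ``sharpening'' that upgrades polynomial-in-$1/n$ accuracy to $\log(1/\eps)$ accuracy) delivers such a $P$ at the stated bandwidth. Because $P$ is even, $P(\theta) = \sum_{k\le n} a_k \cos(k\theta) = \sum_{k\le n} a_k T_k(\cos\theta)$ is a polynomial of degree $n$ in $\cos\theta = y$, hence in $x$, yielding the degree bound in \eqref{eq:maybeblah}.

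Finally, to enforce (1) exactly, the construction above naturally produces a polynomial $\widetilde{q}$ that is $O(\eps)$-close to $1$ on $[0,A/d]$, $O(\eps)$-close to $0$ on $[(A+b)/d,1]$, and bounded by $1+O(\eps)$ in absolute value on all of $[0,1]$ (the last bound descends from the uniform control of $P$ on $[0,\pi]$ that the Chebyshev construction provides). Post-composing $\widetilde{q}$ with a constant-degree univariate polynomial that approximates $z \mapsto \min(\max(z,0),1)$ on $[-O(\eps),1+O(\eps)]$ then yields the final $q$ satisfying (1)--(3) with the stated tolerances $\eps/2$ and $\eps/8$, at the cost of only a constant factor in degree. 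The main obstacle I anticipate is choosing the one-dimensional step-function approximation in $\theta$-space to simultaneously obtain the $\log(1/\eps)$ factor in the degree \emph{and} a good uniform bound on all of $[0,\pi]$ (not merely on the two flat subintervals); using an explicit Chebyshev/Paturi construction rather than an abstract interpolant is precisely what makes both requirements compatible.
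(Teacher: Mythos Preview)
Your approach is correct and is essentially the paper's: the $\sqrt{d}$ saving comes from the Chebyshev substitution exploiting that the transition window sits at an endpoint, which is exactly the mechanism underlying Paturi's theorem. The paper just black-boxes Paturi's lemma (applied with $m=d/\Theta(\ed^2)$ and $\ell=\Theta(1)$, giving a degree-$O(\sqrt{d}/\ed)$ polynomial $r$ with $r\in[3/5,1]$ on the ``high'' interval and $r\in[0,2/5]$ on the ``low'' interval) and then composes with an $O(\log(1/\eps))$-degree amplifying polynomial $a_k(t)=\Pr[\mathrm{Bin}(k,t)\ge k/2]$; this simultaneously delivers the $\eps$-accuracy and the exact range condition $q\in[0,1]$.

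Two places where your write-up should be tightened. First, the parenthetical ``constant-degree sharpening that upgrades polynomial-in-$1/n$ accuracy to $\log(1/\eps)$ accuracy'' is not right as stated: amplification from constant error to error $\eps$ costs degree $\Theta(\log(1/\eps))$, not $O(1)$; this is in fact where the $\log(1/\eps)$ factor in \eqref{eq:maybeblah} comes from. Second, your final step---post-composing with a constant-degree polynomial approximation of $z\mapsto\min(\max(z,0),1)$---does not give property~(1) exactly, since no bounded-degree polynomial equals the clipping map near its kinks. What works instead is either an affine rescaling $q=(\widetilde q+c\eps)/(1+2c\eps)$ using your uniform bound on $\widetilde q$, or (cleaner, and what the paper does) doing the amplification once on a constant-accuracy approximation: the amplifier $a_k$ maps $[0,1]\to[0,1]$ by construction, so (1) holds automatically and the clipping issue never arises.
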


We remark that in \cite{Sherstov:algorithmic} Sherstov gives sharp upper and lower bounds for a related problem, by constructing optimal (in terms of degree) polynomials which appproximately achieve a prescribed sequence of 0/1 values at the discrete points $x=0,{\frac 1 d}, {\frac 2 d}, \dots, {\frac {d-1} d}, 1$ of the real line. In our setting, though, we require high-accuracy approximators over the \emph{continuous} intervals $[0,A/d]$ and $[(A+b)/d,1]$, not just at discrete values.
 
To prove \Cref{lem:univariate-new} we will use the following result, which is implicit in the proof of Theorem~3 (the main upper bound) of \cite{Paturi92}:

\begin{lemma}
\label{lem:Paturi}
Fix any $\ell \in [0,m/2)$ and let $g: [0,m] \to [0,1]$ be the piecewise linear continuous function defined by
\[
g(x) = 
\begin{cases}
1 & \text{if~}x \in [0,\ell]\\
1 - (x-\ell)& \text{if~} x \in (\ell,\ell+1)\\
0 & \text{if~} x \in [\ell+1,m]
\end{cases}
.
\]
Then there is a polynomial $p$ of degree at most $O(\sqrt{m \ell})$ such that $|p(x)-g(x)| \leq 1/3$ for all $x \in [0,m]$. (So taking $r(x) = {\frac 3 5}(p(x)+{\frac 1 3})$, we get that $r(x) \in [0,1]$ for all $x \in [0,m]$; $r(x) \in [{\frac 3 5},1]$ for all $x \in [0,\ell]$; and $r(x) \in [0,{\frac 2 5}]$ for all $x \in [\ell+1,m].$)
\end{lemma}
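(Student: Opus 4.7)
The plan is to obtain $q$ as the composition of (a rescaling of) the polynomial supplied by Lemma~\ref{lem:Paturi} with a constant-to-$\eps$ ``sharpening'' polynomial.

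\textbf{Step 1 (rescaling).} Apply the linear change of variables $y = xd/b$. Under this substitution, the interval $[0,A/d]$ in $x$ becomes $[0,A/b]$ in $y$, and the interval $[(A+b)/d,1]$ becomes $[A/b+1,\,d/b]$. Since $A,b = \Theta(\ed^2)$, we have $\ell := A/b = \Theta(1)$ and $m := d/b = \Theta(d/\ed^2)$, and in particular $\ell < m/2$ once $d$ is large enough. Apply Lemma~\ref{lem:Paturi} with these values of $\ell$ and $m$ to obtain a real polynomial $r(y)$ of degree $O(\sqrt{m\ell}) = O(\sqrt{Ad}/b) = O(\sqrt{d}/\ed)$ such that $r(y)\in[0,1]$ for $y\in[0,m]$, $r(y)\in[3/5,1]$ for $y\in[0,\ell]$, and $r(y)\in[0,2/5]$ for $y\in[\ell+1,m]$.

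\textbf{Step 2 (sharpening).} Construct a univariate polynomial $S(z)$ of degree $k=O(\log(1/\eps))$ satisfying
\begin{itemize}
\item $S(z)\in[0,1]$ for all $z\in[0,1]$,
\item $S(z)\in[1-\eps/2,\,1]$ for $z\in[3/5,1]$,
\item $S(z)\in[0,\,\eps/8]$ for $z\in[0,2/5]$.
\end{itemize}
This is a standard approximation-theoretic object: one way to build it is to take a Chebyshev-based approximation of the sign function on $[-1,-c]\cup[c,1]$ for the constant $c = 1/10$ (which has error $e^{-\Omega(k c)}$ with degree $k$), affinely rescale the input from $[0,1]$ to $[-1,1]$ so that the gap between $[0,2/5]$ and $[3/5,1]$ is preserved as a constant gap, and then affinely rescale the output so that its range on $[0,1]$ lies in $[0,1]$ (using that the approximator itself is bounded in magnitude by $1$ on $[-1,1]$). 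Choosing $k = \Theta(\log(1/\eps))$ with a sufficiently large constant drives the error below $\eps/8$ on both sides simultaneously.

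\textbf{Step 3 (composition).} Define
\[
q(x) := S\bigl(r(xd/b)\bigr).
\]
Then $\deg q \le \deg S \cdot \deg r = O(\log(1/\eps))\cdot O(\sqrt{d}/\ed) = O(\sqrt{d}\log(1/\eps)/\ed)$, matching~\eqref{eq:maybeblah}. Properties~(1)--(3) are immediate from the two range guarantees composed: for $x\in[0,1]$ we have $y:=xd/b\in[0,m]$ so $r(y)\in[0,1]$ and hence $q(x)\in[0,1]$; for $x\in[0,A/d]$ we have $y\in[0,\ell]$ so $r(y)\in[3/5,1]$ and hence $q(x)\in[1-\eps/2,1]$; and for $x\in[(A+b)/d,1]$ we have $y\in[\ell+1,m]$ so $r(y)\in[0,2/5]$ and hence $q(x)\in[0,\eps/8]$.

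\textbf{Expected main obstacle.} Steps~1 and~3 are bookkeeping. The only nontrivial ingredient is the sharpening polynomial $S$ of Step~2: one must obtain $\eps$-precision on \emph{both} sides of the gap while simultaneously keeping the range on all of $[0,1]$ inside $[0,1]$ (not merely on the two favourable subintervals), with degree only $O(\log(1/\eps))$. The Chebyshev-based sign-function construction delivers exactly this, but verifying the uniform $[0,1]$-boundedness after the affine rescaling of the output is the one place where care is required; an alternative fallback is to use iterated composition of a fixed ``smoothstep'' polynomial together with a concluding clipping-like composition, which also yields $O(\log(1/\eps))$ degree.
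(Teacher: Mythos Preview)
Your proposal does not prove Lemma~\ref{lem:Paturi}; it \emph{uses} that lemma as a black box (Step~1) to construct the polynomial $q$ of Lemma~\ref{lem:univariate-new}. The paper does not prove Lemma~\ref{lem:Paturi} either---it is cited as implicit in \cite{Paturi92}---so there is no paper-proof to compare against for that statement.

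Read instead as a proof of Lemma~\ref{lem:univariate-new}, your argument is correct and essentially identical to the paper's: rescale so that $m=\Theta(d/\ed^2)$ and $\ell=\Theta(1)$, apply Lemma~\ref{lem:Paturi}, then compose with a degree-$O(\log(1/\eps))$ sharpening polynomial. The only difference is the choice of sharpener: the paper uses the coin-flip amplifier $a_k(t)=\Pr[\mathrm{Bin}(k,t)\ge k/2]$, which is automatically $[0,1]$-valued on $[0,1]$ and hence sidesteps the range-control issue you flag as your main obstacle; your Chebyshev-based $S$ works as well but, as you note, requires that extra verification.
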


We will combine \Cref{lem:Paturi} with a standard construction of ``amplifying polynomials'' (see e.g.~Claim~3.8 of \cite{DGJ+:10}).
In more detail, let 
\begin{equation} \label{eq:amplifier}
a_k(t):=\Pr[\text{$k$ tosses of a coin with heads probability $t$ yield at least $k/2$ heads}].
\end{equation}
It is immediate that $a_k$ is a degree-$k$ polynomial, since
\[
a_k(t) = \sum_{j=\lceil k/2 \rceil}^k {k \choose j} t^j (1-t)^{k-j}.
\]
Moreover, a standard additive Hoeffding bound gives that for $k=O(\log(1/\tau))$,
\[
a_k(t) \begin{cases}
\geq 1 - \tau & \text{~for~}t \in [3/5,1]\\
\leq \tau & \text{~for~}t \in [0,2/5]
\end{cases}.
\]
Putting the pieces together, we see that given $\eps,\ed,A,b$ as in the statement of \Cref{lem:univariate-new}, taking \Cref{lem:Paturi}'s $m$ parameter to be $d/\Theta(\ed^2)$, its $\ell$ parameter to be $\Theta(1)$, the $k$ parameter of \Cref{eq:amplifier} to be $O(\log(1/\tau))$ and $\tau$ to be $\eps/8$, 
we get that the polynomial
\[
q(x) := a_k(r(x)),
\quad
\text{for the polynomial $r$ given in \Cref{lem:Paturi}},
\]
satisfies properties (1), (2) and (3) of \Cref{lem:univariate-new}, and \Cref{lem:univariate-new} is proved. \qed

\subsubsection{The multivariate trigonometric polynomial ``bump function''}

With \Cref{lem:univariate-new} in hand, we  build the desired multivariate ``bump function'' $p(x_1,\dots,x_d)$ over the torus $[0,1)^d$ straightforwardly as follows: 
\begin{equation} \label{eq:bump}
p(x_1,\dots,x_d) := q\pbra{
{\frac {\sin^2(\pi x_1) + \cdots + \sin^2(\pi x_d)} d}
}.
\end{equation}

\begin{lemma} \label{lem:bump}
The polynomial $p(x_1,\dots,x_d)$ is a trigonometric polynomial of total degree at most $2\Psi(\eps,A,b,d)$ (where $\Psi$ is as in~\Cref{lem:univariate-new}) with the following properties: for $\eps,\ed,A,b$ as in \Cref{lem:univariate-new},

\begin{itemize}

\item [(i)] For $x \in [0,1)^d$ we have $p(x) \in [0,1]$;

\item [(ii)] If $\dtor(x,0^d) \leq {\frac {\sqrt{A}} \pi},$ then $p(x) \in [1-\eps/2,1]$;

\item [(iii)] If $x \in [0,1)^d$ has $\dtor(x,0^d) \geq {\frac {\sqrt{A+b}} 2},$ then $p(x) \in [0,\eps/8].$
\end{itemize}

\end{lemma}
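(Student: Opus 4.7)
The plan is to derive each of (i)--(iii) from the corresponding property of $q$ in \Cref{lem:univariate-new} by showing that the inner expression
\[
y(x) := \frac{\sin^2(\pi x_1) + \cdots + \sin^2(\pi x_d)}{d}
\]
is tightly sandwiched between constant multiples of $\dtor(x,0^d)^2/d$. The crucial observation is that $\sin(\pi(1-t)) = \sin(\pi t)$, so $\sin^2(\pi x_i) = \sin^2(\pi \cdot \dtor(x_i,0))$ for every $x_i \in [0,1)$. Since $\dtor(x_i,0) \in [0,1/2]$, the argument $\pi \cdot \dtor(x_i,0)$ lies in $[0,\pi/2]$, where the standard sandwich $2t/\pi \le \sin t \le t$ applies. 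Squaring and summing yields
\[
4\,\dtor(x,0^d)^2 \;\le\; \sum_{i=1}^d \sin^2(\pi x_i) \;\le\; \pi^2\,\dtor(x,0^d)^2,
\]
and since each term lies in $[0,1]$, we also have $y(x) \in [0,1]$ for every $x \in [0,1)^d$.

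With this sandwich in hand, the three properties fall out immediately. Property (i) follows by applying property (1) of $q$ to $y(x) \in [0,1]$, giving $p(x) = q(y(x)) \in [0,1]$. For (ii), assume $\dtor(x,0^d) \le \sqrt{A}/\pi$; the upper sandwich bound yields $\sum_i \sin^2(\pi x_i) \le A$, hence $y(x) \le A/d$, and property (2) of $q$ gives $p(x) \in [1-\eps/2,\,1]$. For (iii), assume $\dtor(x,0^d) \ge \sqrt{A+b}/2$; the lower sandwich bound yields $\sum_i \sin^2(\pi x_i) \ge 4 \cdot (A+b)/4 = A+b$, hence $y(x) \ge (A+b)/d$, and property (3) of $q$ gives $p(x) \in [0,\eps/8]$.

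It remains to bound the trigonometric-polynomial degree of $p$. Using the identity $\sin^2(\pi x_i) = \tfrac{1}{2}(1 - \cos(2\pi x_i))$, the expression $y(x)$ has nonzero Fourier support only at frequency vectors $\ell \in \mathbb{Z}^d$ with $\|\ell\|_1 \le 1$. Multiplying $k$ such Fourier polynomials produces frequencies of $\ell_1$-norm at most $k$, so $q(y(\cdot))$ is a trigonometric polynomial of total degree at most $\deg(q) = \Psi(\eps,A,b,d)$, comfortably within the claimed $2\Psi$ bound.

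There is no serious obstacle here: once the sandwich between $\sin^2(\pi x_i)$ and $\dtor(x_i,0)^2$ is in place, the argument reduces to a routine invocation of \Cref{lem:univariate-new}. The one subtle ingredient is recognizing that $\sin^2(\pi\cdot)$ automatically respects the toroidal identification $t \leftrightarrow 1-t$, which is precisely what lets the purely algebraic quantity $y(x)$ faithfully encode toroidal distance and thereby connect the univariate polynomial $q$ to a genuine multivariate bump on $\mathbb{T}^d$.
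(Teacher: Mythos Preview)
Your proposal is correct and mirrors the paper's proof essentially step for step: the same identity $\sin^2(\pi x_i)=\sin^2(\pi\,\dtor(x_i,0))$, the same two-sided bound $4\,\dtor(x_i,0)^2\le \sin^2(\pi x_i)\le \pi^2\,\dtor(x_i,0)^2$ (the paper states these as separate inequalities in the proofs of (ii) and of \Cref{claim:dist}), and the same reduction of each item to the matching property of $q$. Your degree argument via $\sin^2(\pi x_i)=\tfrac12(1-\cos 2\pi x_i)$ actually yields the slightly tighter bound $\Psi$ rather than the paper's stated $2\Psi$; the paper simply says the degree bound is ``immediate'' without working out the constant.
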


\begin{proof}
The degree bound is immediate from the fact that the polynomial $q$ from \Cref{lem:univariate-new} has degree $\Psi(\eps,A,b,d)$.
Item (i) follows immediately from item~(1) of \Cref{lem:univariate-new}, since the argument ${\frac {\sin^2(\pi x_1) + \cdots + \sin^2(\pi x_d)} d}$ to $q$ in \Cref{eq:bump} is always between 0 and 1.

For (ii), fix any $x$ such that  $\dtor(x,0^d) \leq {\frac {\sqrt{A}} \pi}$, i.e.
\[
\dtor(x_1,0)^2 + \cdots + \dtor(x_d,0)^2 \leq {\frac A {\pi^2}}.
\]
For each $x_i \in [0,1)$ we have that $\sin^2(\pi x_i) = \sin^2(\pi \dtor(x_i,0)) \leq \pi^2 \dtor(x_i,0)^2,$ so
\[
{\frac {\sin^2(\pi x_1)+\cdots+\sin^2(\pi x_d)}{d}} \leq 
\pi^2 \cdot {\frac {\dtor(x_1,0)^2 + \cdots + \dtor(x_d,0)^2}{d}} \leq {\frac {A}{d}}, \quad \text{and hence}
\]
\[
p(x) = q \pbra{
{\frac {\sin^2(\pi x_1)+\cdots+\sin^2(\pi x_d)}{d}}
} \in [1-\eps/2,1]
\]
by part (2) of \Cref{lem:univariate-new}, as desired.

For item~(iii) we use the following claim:
\begin{claim} \label{claim:dist}
Let $x$ be a point in the $n$-dimensional torus $[0,1)^d$ such that $\dtor(x,0^d) \geq \gamma$.  
Then 
\[
{\frac {4 \gamma^2} d} \leq {\frac {\sin^2(\pi x_1) + \cdots + \sin^2(\pi x_d)}{d}} \leq 1.
 \]
\end{claim}
\begin{proofof}{\Cref{claim:dist}}
The upper bound is immediate.  For the lower bound, fix any $x$ such that $\dtor(x,0^d) \geq \gamma$, i.e.
\[
\dtor(x_1,0)^2 + \cdots + \dtor(x_d,0)^2 \geq \gamma^2.
\]
For any $x_i \in [0,1)$ we have that $\sin^2(\pi x_i) \geq 4 \dtor(x_i,0)^2$, so
\[
{\frac {\sin^2(\pi x_1) + \cdots + \sin^2(\pi x_d)} 4} \geq
\dtor(x_1,0)^2 + \cdots + \dtor(x_d,0)^2 \geq \gamma^2
\]
which directly gives the lower bound of \Cref{claim:dist}.
\end{proofof}
Taking $\gamma = {\frac {\sqrt{A+b}} 2}$, \Cref{claim:dist} gives that the argument ${\frac {\sin^2(\pi x_1) + \cdots + \sin^2(\pi x_d)}{d}}$ to $p$ in \Cref{eq:bump} is at least ${\frac {A+b} d}$, and item (iii) follows from item (3) of \Cref{lem:univariate-new}.
This concludes the proof of \Cref{lem:bump}.
\end{proof}

\subsection{Proof of \Cref{thm:De-distance-upper}}

We prove the theorem via its contrapositive. 
Fix two distributions $D_1,D_2$ over the torus $[0,1)^d$ and suppose that $\dDe^{(\ed)}(D_1,D_2) \geq \eps.$
By \Cref{def:De-distance} (interchanging the roles of $D_1$ and $D_2$ if necessary), this means that there exist some $z \in [0,1)^d$ and some 
$0 \leq \tau \leq \ed$ such that 
\begin{equation}
D_1(\Ball(z, \tau)) > D_2(\Ball(z,\tau+\ed)) + \eps. \label{eq:strawberry}
\end{equation}
We may suppose that $z=0^d$ (this is without loss of generality by shifting the argument to $p$ by $z$ in what follows).
We fix the values of $A$ and $b$ as a function of $\ed$ depending on the value of $\tau$: 
\begin{itemize}
    \item If $0.6\ed \le \tau\le \ed$, then we set $A$ and $b$ as follows:
    \[
        A:= (\pi \ed)^2
        \quad\text{and}\quad
        b:=\big((3.2)^2-\pi^2\big)\ed^2\,.
    \]
    \item If $0\le \tau<0.6\ed$, then we set $A$ and $b$ as follows: 
    \[
        A:=(0.6\pi\ed)^2
        \quad\text{and}\quad
        b:=\big(4-(0.6\pi)^2\big)\ed^2\,.
    \]
\end{itemize}
It is readily checked that in both cases, we have $0<b<A$ where $A,b=\Theta(\ed^2)$, and moreover
\begin{equation} \label{eq:stitch}
    \tau \leq \frac{\sqrt{A}}{\pi} 
    \qquad\text{and}\qquad
    \tau + \ed \geq \frac{\sqrt{A + b}}{2}\,.
\end{equation}

In order to establish \Cref{thm:De-distance-upper}, we will show that $|\widehat{D_1}(\ell) - \widehat{D_2}(\ell)| > \kappa$ for some $\ell=(\ell_1,\dots,\ell_d)$ with 
$\|\ell\|_1 \leq 2T$, where 
\[
    \kappa = \frac{\eps}{3}\pbra{\frac{1}{d}}^{O(T)}
    \qquad\text{and}\qquad 
    T = \Psi(\eps, A,b,d)\,, 
\]
for $\Psi(\eps,A,b,d)$ from \Cref{lem:univariate-new}.  

Suppose to the contrary that $|\widehat{D_1}(\ell) - \widehat{D_2}(\ell)| \leq \kappa$  for every $\ell \in \Z^d$ with $\|\ell\|_1 \leq 2T$. 
Let $p(x_1, \dots, x_d)$ be the polynomial from~\Cref{lem:bump}. 
We then have 
\begin{align}
\Ex_{\bx \sim D_1}\sbra{p(\bx)} &=
\int_{x \in [0,1)^d} D_1(x) p(x) \, dx \nonumber \\
&\geq
\int_{x \in \Ball(0^d,\tau)} D_1(x) p(x) \, dx \tag{since $D_1,p \geq 0$ on $[0,1)^d$}\\
&\geq
(1-\eps/2)\cdot D_1\pbra{\Ball(0^d, \tau)}\,,
\label{eq:o}
\end{align}
where \Cref{eq:o} uses $\tau \leq {\frac {\sqrt{A}} \pi}$ (see~\Cref{eq:stitch}) and item~(ii) of \Cref{lem:bump}.

We further have that 
\begin{align}
    \Ex_{\bx \sim D_2}\sbra{p(\bx)} 
    &=
    \int_{x \in [0,1)^d} D_2(x) p(x) \, dx \nonumber \\
    &=\int_{x \in [0,1)^d} D_2(x) \pbra{\sum_{\|\ell\|_1 \leq 2T} \wh{p}(\ell)e^{-2\pi i \ell x}} \,dx \tag{by the total degree bound on $p$} \\
    &= \sum_{\|\ell\|_1 \leq 2T} \wh{p}(\ell) \pbra{\int_{[0,1)^d} D_2(x) e^{-2\pi i \ell x} \,dx } \tag{Fubini} \\
    &= \sum_{\|\ell\|_1 \leq 2T} \wh{p}(\ell)\wh{D_2}(-\ell) \nonumber \\
    &= \sum_{\|\ell\|_1 \leq 2T} \wh{p}(\ell)\overline{\wh{D_2}(\ell)} = \sum_{\|\ell\|_1 \leq 2T} \overline{\wh{p}(\ell)}{\wh{D_2}(\ell)}\,, \label{eq:melodrama}
\end{align}
where the final line relies on the fact that as $D_2$ is real valued (since it is a probability density), we have 
\[
    \wh{D_2}(-\ell) = \overline{\wh{D_2}(\ell)}\,.
\]
Continuing, we thus have 
\begin{align}
    \Ex_{\bx \sim D_2}\sbra{p(\bx)} 
    &= \sum_{\|\ell\|_1 \leq 2T} \overline{\wh{p}(\ell)}{\wh{D_2}(\ell)} \\
    &= \sum_{\|\ell\|_1 \leq 2T} \overline{\wh{p}(\ell)}{\wh{D_1}(\ell)} + \sum_{\|\ell\|_1 \leq 2T} \overline{\wh{p}(\ell)}\pbra{\wh{D_2}(\ell)-\wh{D_1}(\ell)} \nonumber \\
    &= \Ex_{\bx\sim \calD_1}\sbra{p(\bx)} + \sum_{\|\ell\|_1 \leq 2T} \overline{\wh{p}(\ell)}\pbra{\wh{D_2}(\ell)-\wh{D_1}(\ell)} \label{eq:lorde-sober}\,,
\end{align}
where \Cref{eq:lorde-sober} relies on repeating the steps giving \Cref{eq:melodrama} with $D_1$ instead of $D_2$. 
Next, note that $|\overline{\wh{p}(\ell)}| \leq 1$ for all $\ell$ as $p \in [0,1]$ and additionally we assumed $|\wh{D_1}(\ell) - \wh{D_2}(\ell)| \leq \kappa$ for all $|\ell|_1 \leq 2T$. 
Since there are at most $d^{O(T)}$ vectors $\ell \in \Z^d$ with $\|\ell\|_1 \leq 2T$, we can continue from \Cref{eq:lorde-sober} as follows:
\begin{align} 
    \Ex_{\bx \sim D_2}\sbra{p(\bx)}
    &\geq \Ex_{\bx\sim \calD_1}\sbra{p(\bx)} - \kappa\cdot d^{O(T)} \label{eq:jungle-keep-moving} \\
    &\geq (1-\eps/2)\cdot D_1\pbra{\Ball(0^d, \tau)} - \kappa\cdot d^{O(T)} \tag{using \Cref{eq:o}} \\
    &\geq (1-\eps/2) \cdot \pbra{D_2\pbra{\Ball(0^d, \tau + \ed)}  + \eps}  - \kappa d^{O(T)}  \label{eq:the}\\
    &\geq D_2\pbra{\Ball(0^d, \tau + \ed)}  + \eps/2 - \eps^2/2 -  \kappa d^{O(T)} \label{eq:thethe}\\
        &\geq D_2\pbra{\Ball(0^d, \tau + \ed)}  + \eps/4 -  \kappa d^{O(T)} \,. \label{eq:thethethe}
\end{align} 
where \Cref{eq:the} uses \Cref{eq:strawberry}, \Cref{eq:thethe} uses that $D_2\pbra{\Ball(0^d, \tau + \ed)} \leq 1$, and \Cref{eq:thethethe} uses that $\eps < 1/2$.

On the other hand, we also have that
\begin{align}
    \Ex_{\bx \sim D_2}\sbra{p(\bx)} 
    &= 
    \int_{x \in \Ball(0^d,\tau + \ed)} D_2(x)p(x) \,dx +
    \int_{x \notin \Ball(0^d,\tau + \ed)} D_2(x)p(x) \,dx \nonumber\\
    &\leq D_2(\Ball(0^d,\tau+\ed)) + \eps/8, \label{eq:live}
\end{align}
where \Cref{eq:live} uses item (iii) of \Cref{lem:bump}. (Note that we indeed have that $\tau + \ed \geq {\frac {\sqrt{A+b}} 2}$ as required for our application of item (iii) of \Cref{lem:bump} thanks to \Cref{eq:stitch}.) 
Since
$\kappa = (\eps/9)/d^{O(T)}$, \Cref{eq:thethethe} and \Cref{eq:live} together give a contradiction. This concludes the proof of \Cref{thm:De-distance-upper}.
\qed



\section{A $d$-dimensional lower bound for heavy hitter reconstruction}\label{sec:lowerboundheavy}


Our goal in this section is to prove the following:


\Dedistancelower*

In words, \Cref{thm:De-distance-lower} shows that the upper bound given in \Cref{thm:De-distance-upper} is nearly best possible: there exist two distributions  whose low-degree (almost up to the degree bound of \Cref{thm:De-distance-upper}, off by only a $\log(1/\eps)$ factor) Fourier coefficients ``nearly match each other'' (where the closeness is again quantitatively quite similar to that which \Cref{thm:De-distance-upper} says would be sufficient for small heavy hitter distance), yet the heavy hitter distance between these two distributions is greater than $\eps.$

\subsection{A lower bound over the discrete cube $\bits^d$}\label{sec:lb-discrete-cube}

Most of the work in establishing \Cref{thm:De-distance-lower} is in proving \Cref{thm:cube-lb}, which shows the existence of two specially crafted probability distributions $P_1,P_2$ over the discrete cube $\bits^d$.  
As we will see, the existence of these distributions follows from the existence of certain univariate polynomials with many repeated roots at 1 and a large-magnitude constant term, established by Erd\'{e}lyi \cite{Erdelyi16}.

Before stating the theorem, we introduce some basic notation for distributions over $\bits^d$.  For a distribution $P$ over $\bits^d$, we write $P(x)$ to denote the amount of probability weight that $P$ puts on the point $x \in \bits^d$, i.e. we view $P(x)$ as a non-negative function over $\bits^d$ such that $\sum_{x } P(x)=1$.  For $S \subseteq \{1,\dots,d\}$ we write $\widehat{P}(S)$ to denote the ``$S$-th Fourier coefficient'' of $P$, namely $$\widehat{P}(S) = \Ex_{\bx \sim \bits^d}\big[P(\bx) \chi_S(\bx)\big]$$ where $\chi_S: \bits^d \to \bits$ is $\chi_S(x) = \prod_{i \in S} x_i$.
To get a sense of the right scaling for the Fourier coefficients of a distribution over $\bits^d$, note that for any distribution $P$ over $\bits^d$ we have
\[
\sum_{S \subseteq \{1,\dots,d\}} \widehat{P}(S)^2 = \Ex_{\bx \sim \bits^d}\left[P(\bx)^2\right] \in \left[2^{-2d},2^{-d}\right],
\]
so every Fourier coefficient $\widehat{P}(S)$ satisfies $|\widehat{P}(S)| \leq 2^{-d/2}$.

The following theorem says that there are two distributions $P_1,P_2$ that put significantly different amounts of weight on the point $1^d$ yet have close-to-matching low-degree Fourier coefficients:

\begin{theorem} \label{thm:cube-lb}
There is a suitable absolute constant $c>0$ such that the following holds.
Fix an $\eps:2^{-d/3} < \eps < 1/170$.
There are two distributions $P_1,P_2$ over $\bits^d$ with the following properties:
\begin{itemize}
\item [(1)] For every $S \subset \{1,\dots,d\}$ with $|S| \leq c\sqrt{{\frac d{\log(1/\eps)}}}$, we have 
$\left|\widehat{P_1}(S) - \widehat{P_2}(S)\right| \leq 2^{-d} \cdot  2^{-\Omega\left(\sqrt{d \log(1/\eps)}\right)}.$

\item [(2)] $\big|P_1(1^d) - P_2(1^d)\big| \geq 2\eps$.
\end{itemize}
\end{theorem}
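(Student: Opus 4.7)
My plan is (i) to reduce the construction of $P_1, P_2$ to a univariate moment-matching problem on $[-1,1]$, and (ii) to invoke the extremal polynomial of Borwein-Erd\'elyi-K\'os~\cite{BEK:97} as the key ``dual witness.''

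First, by averaging $P_1, P_2$ over coordinate permutations---which preserves both $P_i(1^d)$ and the Fourier coefficients $\widehat{P_i}(S)$ in the appropriate symmetric sense---I may assume both $P_1, P_2$ are permutation-invariant. I then look for $P_1, P_2$ of the form $P_i = \sum_j \alpha^{(i)}_j B_{p_j}$, where $B_p$ denotes the product distribution on $\bits^d$ with each coordinate equal to $+1$ with probability $p$, and where $\alpha^{(1)}, \alpha^{(2)}$ are probability vectors supported on a finite grid $\{p_j\} \subseteq [0,1]$. A direct calculation gives $B_p(1^d) = p^d$ and $\widehat{B_p}(S) = 2^{-d}(2p-1)^{|S|}$. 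Setting $\gamma_j := \alpha^{(1)}_j - \alpha^{(2)}_j$ (so $\sum_j \gamma_j = 0$ and $\sum_j |\gamma_j| \leq 2$), the task reduces to finding such a signed measure $\gamma$ satisfying
\[
\Bigl|\sum_j \gamma_j\, (2p_j - 1)^k\Bigr| \leq 2^{-\Omega(\sqrt{d \log(1/\eps)})} \text{ for } 1 \leq k \leq m := c\sqrt{d/\log(1/\eps)},
\qquad
\Bigl|\sum_j \gamma_j\, p_j^d\Bigr| \geq 2\eps.
\]

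To construct $\gamma$, I invoke the extremal polynomial of \cite{BEK:97}: a real polynomial $R(z) = \sum_{j=0}^N r_j z^j$ with coefficients of absolute value $O(1)$ such that $(1-z)^{m+1} \mid R(z)$ (so $R$ has a zero of multiplicity at least $m+1$ at $z=1$), yet $|R(0)|$ is bounded below in a quantitatively controlled way. I parameterize the grid as $\mu_j := 2p_j - 1 = (1-\theta)^j$ for a carefully chosen $\theta \in (0,1)$, and take $\gamma_j$ to be an appropriately scaled copy of $r_j$. For $k \leq m$, the moment $\sum_j \gamma_j\, \mu_j^k$ then equals $R((1-\theta)^k)$ up to normalization; since $(1-\theta)^k$ lies within $m\theta$ of $1$ and $R$ vanishes to multiplicity $m+1$ at $1$, this quantity is $O((m\theta)^{m+1})$, which is $\leq 2^{-\Omega(\sqrt{d\log(1/\eps)})}$ for a suitable $\theta$. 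On the other hand, expanding $p_j^d = \bigl((1+\mu_j)/2\bigr)^d$ binomially and summing over $j$, one finds that $\sum_j \gamma_j\, p_j^d$ is dominated by a contribution proportional to $R(0) = r_0$, which can be made at least $2\eps$ by the polynomial's construction.

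Finally, given such a $\gamma$, I split $\gamma = \gamma^+ - \gamma^-$ into its positive and negative parts; since $\sum_j \gamma_j = 0$, the two parts have equal total mass $C \leq 1$. Padding both with the same auxiliary distribution $\sigma$ on $\{p_j\}$, I set $\alpha^{(1)} := \gamma^+ + (1-C)\sigma$ and $\alpha^{(2)} := \gamma^- + (1-C)\sigma$; these are bona fide probability distributions with $\alpha^{(1)} - \alpha^{(2)} = \gamma$, from which the desired $P_1, P_2$ are produced. \textbf{The main obstacle} is the middle step involving the extremal polynomial: one must choose $\theta$ so that simultaneously $(m\theta)^{m+1}$ beats the target noise level, $(1-\theta)^d$ is small enough that the $p_j^d$-sum localizes on the $R(0)$ contribution, and the total variation of the resulting signed measure remains bounded. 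Calibrating $\theta$ against the sharp three-way tradeoff in \cite{BEK:97}---between the multiplicity of the zero at $1$, the coefficient size, and the magnitude of $R$ off of $1$---is exactly where the square-root scaling $m = \Theta(\sqrt{d/\log(1/\eps)})$ fundamentally emerges, and pinning down this calibration for the full range $2^{-d/3} < \eps < 1/170$ is the most delicate part of the argument.
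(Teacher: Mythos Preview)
Your framework is essentially the paper's: both take $P_1,P_2$ to be mixtures of biased product distributions with geometrically spaced means (the paper uses $e^{-j\gamma}$, you use $(1-\theta)^j$---the same thing), so that the Fourier difference at level $k$ becomes the value of a single univariate polynomial at a point near~$1$, and the mass difference at $1^d$ is governed by the constant term. The polynomial input is also the same circle of ideas (Erd\'elyi/BEK). So the route is right; the gap is in your calibration of the \emph{multiplicity} of the zero at $1$.

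You take the multiplicity to be $m+1$ with $m=c\sqrt{d/\log(1/\eps)}$, and then assert $|R((1-\theta)^k)|=O((m\theta)^{m+1})$. Two problems. First, the naive Taylor bound $|R(1-u)|=O(u^{m+1})$ hides the factor $|R^{(m+1)}(1)|/(m+1)!$, which for a degree-$N$ polynomial with $O(1)$ coefficients can be of order $\binom{N}{m+1}$; the constant is not $O(1)$. The correct tool is Lemma~5.4 of \cite{BEK:97}, which gives $|R(x)|\le (N+1)(e/9)^{M}$ uniformly on $[1-M/(9N),1]$, where $M$ is the multiplicity. Second, with $M=m+1=\Theta(\sqrt{d/\log(1/\eps)})$ this yields only $2^{-\Theta(\sqrt{d/\log(1/\eps)})}$, a factor of $\log(1/\eps)$ short in the exponent of the target $2^{-\Omega(\sqrt{d\log(1/\eps)})}$. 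You cannot rescue this by shrinking $\theta$: the ``mass at $1^d$'' step needs $p_1^d=(1-\theta/2)^d\lesssim\eps$, forcing $\theta\gtrsim\log(1/\eps)/d$, and with that floor on $\theta$ the quantity $(m\theta)^{m+1}$ is never as small as $2^{-\Omega(\sqrt{d\log(1/\eps)})}$ across the full range $\eps>2^{-d/3}$.

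The fix (and what the paper does) is to decouple the number of moments $m$ from the multiplicity $M$: take degree $N=d$ and apply Erd\'elyi's Lemma~3.3 with $L=\Theta(\eps)$, which yields multiplicity $M=\Theta(\sqrt{d\log(1/\eps)})$ while keeping $|r_0|\ge\Omega(\eps)\sum_j|r_j|$. Then BEK's Lemma~5.4 gives $|R(x)|\le 2^{-\Omega(\sqrt{d\log(1/\eps)})}$ on $[1-M/(9d),1]$, and with $\theta=\Theta(\log(1/\eps)/d)$ the points $(1-\theta)^k$ for $k\le m=c\sqrt{d/\log(1/\eps)}$ lie in that interval. As a minor simplification, your binomial-expansion detour for $\sum_j\gamma_j p_j^d$ is unnecessary: since $p_0=1$, the $j=0$ term equals $\gamma_0$ exactly, and $\sum_{j\ge 1}|\gamma_j|p_j^d$ is small because $p_j^d\le(1-\theta/2)^d$; this is how the paper's Claim isolates the constant term.
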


We prove \Cref{thm:cube-lb} in the rest of 
\Cref{sec:lb-discrete-cube}, and use it to prove \Cref{thm:De-distance-lower} in \Cref{subsec:torus}.

\subsubsection{Setup for the proof of \Cref{thm:cube-lb}}

Given $t \geq 0$, let $\Prod_t$ denote the i.i.d.~product distribution over $\bits^d$ defined as follows: for each $i \in \{1,\dots,d\}$, we have $\E_{\bx \sim \Prod_t}[\bx_i] = e^{-t}.$
For any $z \in \bits^d$ that has $\ell$ coordinates that are 1 and $d-\ell$ coordinates that are $-1$, the probability that $\Prod_t$ puts on $z$ is 
\begin{equation} \label{eq:formula-for-Prod}
\Prod_t(z) = 2^{-d}(1+e^{-t})^\ell(1-e^{-t})^{d-\ell}.
\end{equation}
Moreover, for any $S \subseteq \{1,\dots,d\}$, we have that the $S$-th Fourier coefficient of $\Prod_t: \bits^d \to \R$ is
\begin{equation}
\label{eq:ProdhatS}
\widehat{\Prod_t}(S) = 
2^{-d} \sum_{x \in \bits^d} \Prod_t(x) \chi_S(x)
= 2^{-d} \Ex_{\bx \sim \Prod_t}\big[\chi_S(\bx)\big] 
= 2^{-d} \prod_{j \in S} \Ex_{\bx \sim \Prod_t}[\bx_j]
= 2^{-d} e^{-t|S|}.
\end{equation}

Given a vector of non-negative probabilities $\bar{p}=(p_0,p_1,\dots,p_d)$ that sum to 1, and a vector of non-negative values $\bar{t}=(t_0,t_1,\dots,t_d)$, we write $\Mix_{\bar{p},\bar{t}}$ to denote the mixture distribution that puts weight $p_i$ on component distribution $\Prod_{t_i}$ for each $i=0,1,\dots,d$; so $\Mix_{\bar{p},\bar{t}}$ is a distribution over $\bits^d$.
By \Cref{eq:ProdhatS} and the linearity of the Fourier transform, we have that 
\begin{equation}
\label{eq:MixhatS}
\widehat{\Mix_{\bar{p},\bar{t}}}(S) =2^{-d} \sum_{j=0}^d p_j e^{-t_j |S|}.
\end{equation}
We further observe that the probability  that the $\Mix_{\bar{p},\bar{t}}$ distribution puts on the all-1's string is
\begin{equation}
\label{eq:Mixallones}
\Mix_{\bar{p},\bar{t}}(1^d)= 2^{-d} \sum_{j=0}^d p_j (1+e^{-t_j})^d.
\end{equation}
We now partially define the two distributions $P_1$ and $P_2$ in the statement of \Cref{thm:cube-lb}: Set
\begin{equation}\label{eq:choicegamma}
\gamma:=\frac{8\log(1/\eps)}{d}
\end{equation}
and set 
\begin{align} 
P_1 &:= \Mix_{\bar{\mu},\bar{t}}, \quad \text{where~}\bar{\mu}=(\mu_0,\mu_1,\dots,\mu_d),\ 
\bar{t} = (t_0,t_1,\dots,t_d) \text{~where~} t_j = j\gamma; \label{eq:P1}\\[0.6ex]
P_2 &:= \Mix_{\bar{\nu},\bar{t}}, \quad \text{where~}\bar{\nu}=(\nu_0,\nu_1,\dots,\nu_d) \text{~and as above~}
\bar{t} = (t_0,t_1,\dots,t_d) \text{~where~} t_j = j\gamma,\label{eq:P2}
\end{align}
where the mixing weights $\mu_0,\mu_1,\dots,\mu_d$ and $\nu_0,\nu_1,\dots,\nu_d$ will be determined in the next subsection, thus completing the definition of $P_1$ and $P_2$.

\subsubsection{Proof of \Cref{thm:cube-lb}}

To finalize $\gamma$ and the mixing weights in $\bar{\mu}$ and $\bar{\nu}$, let's understand what we need from them so that $P_1$ and $P_2$ satisfy conditions listed in \Cref{thm:cube-lb}.

By \Cref{eq:MixhatS} and \Cref{eq:P1,eq:P2} we have for each $S\subseteq \{1,\dots,d\}$ that
\begin{equation} \label{eq:Pcoeffs}
\widehat{P_1}(S) = 2^{-d} \sum_{j=0}^d \mu_j e^{-j|S|\gamma}\quad \text{and}\quad
\widehat{P_2}(S) = 2^{-d} \sum_{j=0}^d \nu_j e^{-j|S|\gamma}. 
\end{equation}
Desideratum (1) from the statement of \Cref{thm:cube-lb} translates into requiring that 
\begin{equation} \label{eq:cherry}
\abs{
\sum_{j=0}^d (\mu_j - \nu_j)\left(e^{-|S|\gamma}\right)^j}
\leq  2^{-\Omega\left(\sqrt{d \log(1/\eps)}\right)},
\end{equation}
for all $|S|=0,1,\dots,c\sqrt{d/\log(1/\eps)}$ for some constant $c>0$.
Let $A(x)$ denote the real polynomial $$A(x)=\sum_{j=0}^d a_jx^j,\quad \text{where each 
  $a_j=\mu_j-\nu_j$.}
  $$
Given that $$e^{-|S|\gamma}\ge 1-|S|\gamma
=1-8c\sqrt{\frac{\log(1/\eps)}{d}},
$$
the following condition implies 
  desideratum (1):
\begin{equation}\label{eq:gimmecherry}
\left|A(x)\right|\le 2^{\Omega\left(\sqrt{d\log(1/\eps)}\right)},\quad\text{for all}\  x\in 
\left[1-8c\sqrt{\frac{\log(1/\eps)}{d}},1\right].
\end{equation}



To obtain a condition that implies desideratum (2), we will use the following claim:
\begin{claim} \label{claim:A}
For any two vectors of mixing weights $\overline{\mu}$ and $\overline{\nu}$ (which define distributions $P_1$ and $P_2$, recalling \Cref{eq:P1,eq:P2}), we have that
$P_1(1^d)-P_2(1^d) = \mu_0 - \nu_0 \pm \eps.$
\end{claim}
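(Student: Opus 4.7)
The plan is to compute $P_1(1^d) - P_2(1^d)$ directly from \Cref{eq:Mixallones} and isolate the contribution of the $j=0$ term, which turns out to give exactly $\mu_0 - \nu_0$; the remaining $j \ge 1$ terms will be shown to be bounded in absolute value by $\eps$.

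More concretely, I would first apply \Cref{eq:Mixallones} to both $P_1 = \Mix_{\bar\mu,\bar t}$ and $P_2 = \Mix_{\bar\nu,\bar t}$ (with $t_j = j\gamma$) and subtract to obtain
\[
P_1(1^d) - P_2(1^d) \;=\; \sum_{j=0}^{d} (\mu_j - \nu_j)\cdot\left(\frac{1+e^{-j\gamma}}{2}\right)^{\!d}.
\]
The $j=0$ term contributes exactly $\mu_0 - \nu_0$ because $(1+e^0)/2 = 1$. It therefore suffices to show
\[
\left|\sum_{j=1}^{d} (\mu_j - \nu_j)\cdot\left(\frac{1+e^{-j\gamma}}{2}\right)^{\!d}\right| \;\le\; \eps.
\]

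For this, I would observe that the function $y \mapsto ((1+y)/2)^d$ is monotonically increasing on $[0,1]$, so that for every $j \ge 1$ we have $((1+e^{-j\gamma})/2)^d \le ((1+e^{-\gamma})/2)^d$. Combined with the triangle inequality and the elementary fact $\sum_j |\mu_j - \nu_j| \le 2$ (which holds because $\bar\mu$ and $\bar\nu$ are probability vectors summing to $1$), this yields the bound
\[
\left|\sum_{j=1}^{d}(\mu_j-\nu_j)\cdot\left(\frac{1+e^{-j\gamma}}{2}\right)^{\!d}\right| \;\le\; 2\cdot\left(\frac{1+e^{-\gamma}}{2}\right)^{\!d}.
\]

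The final step is to verify that the right-hand side is at most $\eps$ under the choice $\gamma = 8\log(1/\eps)/d$ from \Cref{eq:choicegamma} and the assumed range $2^{-d/3}<\eps<1/170$. I would use the elementary inequality $\ln\!\big((1+y)/2\big) \le -(1-y)/2$ for $y \in [0,1]$ (which follows from $\ln(1-x) \le -x$) to get $((1+e^{-\gamma})/2)^d \le e^{-d(1-e^{-\gamma})/2}$. Then a two-case split on $\gamma$ (using $1 - e^{-\gamma} \ge \gamma/2$ for $\gamma \in [0,1]$ and $1 - e^{-\gamma} \ge 1/2$ for $\gamma \ge 1$) together with the substitution $\gamma d = 8\log(1/\eps)$ gives a bound of the form $\eps^2$ (or $e^{-d/4}$, which is again $\le \eps^2$ in the stated range of $\eps$), and $2\eps^2 \le \eps$ since $\eps < 1/170 < 1/2$. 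This calculation is the only step with any real content, and I expect it to be the main (though still routine) obstacle, since one has to check that the choice of $\gamma$ in \Cref{eq:choicegamma} is sufficient even at the extreme $\eps \approx 2^{-d/3}$ where $\gamma$ becomes of order $1$.
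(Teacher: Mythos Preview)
Your approach is essentially the same as the paper's: isolate the $j=0$ term and bound the $j\ge 1$ tail using $\sum_j|\mu_j-\nu_j|\le 2$. Your tail bound is in fact cleaner---you bound by twice the $j=1$ term via monotonicity, whereas the paper sums all the terms $2^{-d}(1+e^{-j\gamma})^d$ and splits into the ranges $j\le 1/\gamma$ and $j>1/\gamma$.

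There is one slip in your Case~2. You claim $e^{-d/4}\le\eps^2$, but in Case~2 you have $\gamma>1$, i.e.\ $\eps<e^{-d/8}$, which gives exactly the reverse inequality $\eps^2<e^{-d/4}$. The weaker target $2e^{-d/4}\le\eps$ does not save you either: it requires $d\ge 37$, while the hypotheses $2^{-d/3}<\eps<1/170$ only force $d\ge 23$ (and one can check the bound fails at, say, $d=23$, $\eps=0.005$). The fix is easy and in fact removes the case split: the lower bound $\eps>2^{-d/3}$ forces $\gamma=8\ln(1/\eps)/d<\tfrac{8}{3}\ln 2<2$, and one verifies directly that $(1+e^{-\gamma})/2\le e^{-\gamma/4}$ for all $\gamma\in[0,2]$ (equality at $\gamma=0$, a unique interior critical point, and the inequality still holds at $\gamma=2$). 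This yields $2\bigl((1+e^{-\gamma})/2\bigr)^d\le 2e^{-\gamma d/4}=2\eps^2\le\eps$ in one stroke.
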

\begin{proof}
By \Cref{eq:Mixallones} and \Cref{eq:P1,eq:P2}, we have
\[
P_1(1^d) - P_2(1^d) = \sum_{j=0}^d 2^{-d} \cdot (\mu_j-\nu_j) \cdot (1+e^{-j\gamma})^d.
\]
When $j=0$, the summand is simply $\mu_0 - \nu_0$.
To bound the rest of the sum, we claim that 
\begin{equation} \label{eq:epsclaim}
2^{-d} \sum_{j=1}^d (1 + e^{-j\gamma})^d \leq {\frac \eps 2}
\end{equation}
(we will prove this below).  Given this, 
since $\sum_{j=1}^d |\mu_j - \nu_j| \leq \sum_{j=1}^d \mu_j + \nu_j \leq 2$, we get that
\[
\abs{\sum_{j=1}^d 2^{-d} \cdot (\mu_j - \nu_j) \cdot (1 + e^{-j\gamma})^d}
\leq 2 \cdot {\frac \eps 2} = \eps,
\]
which gives \Cref{claim:A}.

To establish \Cref{eq:epsclaim} we break the sum into two parts.
For the first part, we have
\begin{align*}
2^{-d} \sum_{j=1}^{1/\gamma} (1 + e^{-j\gamma})^d \leq
2^{-d} \sum_{j=1}^{1/\gamma} \pbra{2 - {\frac {j\gamma} {2}}}^d < \sum_{j=1}^{1/\gamma} \pbra{1 - {\frac {j\gamma} {4}}}^d
\leq \sum_{j=1}^{1/\gamma} e^{-j\gamma d/4}
= \sum_{j=1}^{1/\gamma} \eps^{2j} < \frac{\eps}{4},
\end{align*}
where we used the choice of $\gamma$ from \Cref{eq:choicegamma} and the fact that $\eps < 1/170$.
For the rest, we have
\begin{align*}
2^{-d} \sum_{j>1/\gamma}^d (1+e^{-j\gamma})^d \leq 2^{-d} \cdot d\cdot  (1+e^{-1})^d < \frac{\eps}{4},
\end{align*}
using $\eps > 2^{-d/3}.$
Combining these two parts gives \Cref{eq:epsclaim} as desired.
\end{proof}

We use the following claim to finalize the mixing weights in $\bar{\mu}$ and $\bar{\nu}$:

\begin{claim} \label{claim:poly-exists}
There exists a real polynomial $A(x) = \sum_{j=0}^d a_j x^j$  such that 
\begin{itemize}
\item [(i)] $\sum_{j\in [0:d]\hspace{0.03cm}:\hspace{0.03cm} a_j \geq 0} a_j = \sum_{j \in [0:d]\hspace{0.03cm}:\hspace{0.03cm} a_j < 0} -a_j = 1$;
%
\item [(ii)] 
\Cref{eq:gimmecherry} holds (for some sufficiently small constant $c>0$); and
\item [(iii)] $|a_0| \geq 3 \eps$.
\end{itemize}
\end{claim}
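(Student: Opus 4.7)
The plan is to construct $A(x)$ via an extremal polynomial from the Borwein--Erd\'elyi--K\'os tradition~\cite{BEK:97}: a polynomial with bounded-size coefficients and a zero of very high multiplicity at $x=1$. The core intuition is that a zero of multiplicity $k$ at $1$ automatically forces the polynomial to be tiny on any short interval around $1$ (which is what condition (ii) demands), while controlling the $L_1$ coefficient norm ensures that after the normalization required for condition (i), the constant term still satisfies (iii).

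I would first invoke the BEK/Erd\'elyi construction to obtain a polynomial $B(x)$ of degree at most $d$ with small-magnitude integer coefficients, a zero of multiplicity $k = \Theta\bigl(\sqrt{d\log(1/\eps)}\bigr)$ at $x=1$, and constant term $|B(0)|$ at least $\tfrac{3\eps}{2}\cdot L$, where $L$ denotes the $L_1$ norm of $B$'s coefficient vector. Setting $A(x) := (2/L)\cdot B(x)$ yields a polynomial whose coefficient $L_1$-norm is exactly $2$. Condition (i) is then automatic: because $B(1) = 0$, the coefficients of $A$ sum to $0$, forcing the positive and negative parts each to sum to $1$. Condition (iii) is immediate from the choice of $B$.

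For condition (ii), I would use the Taylor expansion of $A$ around $x=1$. Since $A^{(j)}(1) = 0$ for $j < k$, and each $|A^{(j)}(1)| \le 2 d^j$ (from $\|A\|_{\mathrm{coeff},1}=2$ together with the degree bound), on the interval $[1-\delta,1]$ with $\delta = 8c\sqrt{\log(1/\eps)/d}$ the Taylor series is dominated by a term of order $(ed\delta/k)^k$. Since $d\delta = 8c\sqrt{d\log(1/\eps)}$ is comparable to $k$, choosing the absolute constant $c$ small enough pins the ratio $ed\delta/k$ strictly below $1/2$, which gives $|A(x)| \le 2^{-\Omega(k)} = 2^{-\Omega(\sqrt{d\log(1/\eps)})}$, matching (ii) (reading the exponent in \eqref{eq:gimmecherry} with a negative sign, which appears to be a typographical oversight since otherwise the bound would be vacuous given $\|A\|_{\mathrm{coeff},1}=2$).

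The main obstacle is proving the existence of the extremal polynomial $B$ with all the required properties simultaneously: degree at most $d$, multiplicity $k = \Theta(\sqrt{d\log(1/\eps)})$ at $x=1$, and constant term $|B(0)|$ equal to a $\tfrac{3\eps}{2}$-fraction of the $L_1$ coefficient norm. The basic BEK theorem guarantees multiplicity only $\Theta(\sqrt{n})$ for $\{-1,0,+1\}$-coefficient polynomials of sparsity $n$, so achieving the target multiplicity while retaining a non-trivial constant term will likely require Erd\'elyi's more refined construction (and potentially padding with zeros to attain exact degree $d$), together with a delicate choice of parameters that is uniform across the full admissible range $2^{-d/3} < \eps < 1/170$.
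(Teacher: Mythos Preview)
Your approach is essentially the paper's: the paper invokes Lemma~3.3 of~\cite{Erdelyi16} (with $n=d$, $L=10\eps$) to obtain a polynomial with $|a_0|\ge L\sum_{j\ge1}|a_j|$ and at least $\tfrac{2}{7}\sqrt{d\ln(1/L)}$ repeated roots at~$1$, then normalizes to $\sum_j|a_j|=2$ so that $A(1)=0$ forces (i) and the constant-term bound gives (iii). For (ii) the paper cites Lemma~5.4 of~\cite{BEK:97} as a black box rather than doing your Taylor expansion, but your argument is a valid substitute (and yes, the exponent in~\eqref{eq:gimmecherry} is a sign typo---the proof establishes $2^{-\Omega(\cdot)}$).
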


Before proving \Cref{claim:poly-exists} we explain why it completes the proof of \Cref{thm:cube-lb}.
Given such a polynomial $A(x)$, 
for each $j \in \{0,1,\dots,d\}$, we set the mixing weights as follows: 
\begin{align*}
\mu_j = a_j\ \text{and}\ \nu_j = 0 \ \text{if $a_j \geq 0$;}\quad 
\mu_j = 0\ \text{and}\ \nu_j =  -a_j\  \text{if $a_j < 0$.}
\end{align*}
These $\mu_j, \nu_j$ values are non-negative, satisfy $a_j = \mu_j-\nu_j$ for each $j$, and satisfy $\sum_{j} \mu_j = \sum_{j} \nu_j = 1$.
As shown above (ii) (i.e.~\Cref{eq:gimmecherry}) implies desideratum (1) of \Cref{thm:cube-lb}.
Finally, by \Cref{claim:A}, if $|\mu_0 - \nu_0| \geq 3\eps,$ then $|P_1(1^d) - P_2(1^d)| \geq 2\eps$ giving desideratum (2) of \Cref{thm:cube-lb}.

\subsection{Proof of \Cref{claim:poly-exists}}\label{sec:proofclaim}

We use the polynomial that is given by the following result of Erd\'{e}lyi:

\begin{lemma} [Lemma~3.3 of \cite{Erdelyi16}] \label{thm:Erdelyi}
For any $L \in (0,1/17)$ and any $n \in \N$, there exists a~real-coefficient polynomial $A(x)=\sum_{j=0}^n a_j x^j$ with $|a_0| \geq L \cdot \sum_{j=1}^n |a_j|$ such that $A$ has at least
\[
\min\cbra{{\frac 2 7} \sqrt{n \cdot (-\ln L)},n}
\]
repeated roots at 1.
\end{lemma}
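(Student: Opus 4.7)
The plan is to construct $A$ explicitly as a product $A(x) = (1-x)^m \cdot B(x)$, where $m := \lfloor \frac{2}{7}\sqrt{n \ln(1/L)}\rfloor$ and $B$ is a real polynomial of degree $n - m$ normalized so that $B(0) = 1$. This form builds in the required $m$ repeated roots at $x=1$ automatically, so the remaining task is to design $B$ so that the $\ell^1$-norm of coefficients of the product, $\sum_{j=1}^n |a_j|$, is at most $1/L$; since then $a_0 = A(0) = B(0) = 1$, the target inequality $|a_0| \geq L \cdot \sum_{j \geq 1}|a_j|$ follows.

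The first main step is to relate the coefficient $\ell^1$-norm to a growth bound on the analytic extension of $A$ to the complex plane. Using the Cauchy integral formula,
\[
a_j = \frac{1}{2\pi i}\oint_{|z| = r} \frac{A(z)}{z^{j+1}}\, dz,
\]
valid for every $r>0$, one obtains $|a_j| \leq r^{-j}\,\max_{|z|=r}|A(z)|$, and summing the resulting geometric series yields
\[
\sum_{j \geq 1}|a_j| \;\leq\; \frac{1}{r - 1}\cdot \max_{|z| = r}\bigl|(1-z)^m B(z)\bigr|
\]
for any $r>1$. The free radius $r$ will ultimately be chosen of the form $r = 1 + \Theta(1/m)$, and the task reduces to arranging that the right-hand side of the last display is at most $1/L$.

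The second step is the choice of $B$. A naive candidate like $B(x) = (1 - cx)^{n-m}$ for a real scalar $c$ yields only the much weaker bound $m = O(\ln(1/L))$, because concentrating all additional roots at a single real point wastes the freedom afforded by degree $n-m$. The right choice is a Chebyshev/Blaschke-style polynomial whose $n-m$ zeros are distributed on a short arc on or just outside the unit circle near $z=1$, symmetric about the real axis; concretely, $B(z) = \prod_{k=1}^{n-m}(1 - z/\rho_k)$ with $\rho_k$ chosen so that $\log|B(z)|$ on the circle $|z|=r$ approximates a prescribed logarithmic potential. For such an arrangement $|B(z)|$ on $|z|=r$ is controlled while $B(0) = 1$ is preserved, and the product $|(1-z)^m B(z)|$ decays like $\exp(-\Theta(\sqrt{n \ln(1/L)}))$ in the optimal regime once the contour radius is tuned.

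The main obstacle is obtaining the sharp constant $2/7$ in the exponent, which requires both a nearly optimal choice of the root configuration $\{\rho_k\}$ and a tight optimization of the contour radius $r$; these rely on classical extremal-polynomial machinery developed by Borwein, Erd\'elyi and K\'os. I would adopt that construction directly and verify that the resulting $A$ meets the claimed ratio $|a_0|/\sum_{j \geq 1}|a_j| \geq L$. The hard technical step within this is controlling the Blaschke-product-like factor $|B(z)|$ uniformly on the full circle $|z|=r$ rather than only on $|z|=1$, since it is precisely this off-circle control that produces the $\sqrt{n}$-savings in the final root-multiplicity bound rather than the weaker $O(\log(1/L))$ bound that elementary product constructions yield.
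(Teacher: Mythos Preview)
The paper does not prove this lemma: it is quoted verbatim from \cite{Erdelyi16} and used as a black box in the proof of \Cref{claim:poly-exists}. So there is no in-paper proof to compare against; your proposal is attempting something the paper does not do.

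Your outline has the right general shape (factor out $(1-x)^m$, control the coefficient $\ell^1$-norm via complex analysis), but it contains a concrete error and a substantial gap. The error: since $A(0)=B(0)=1$, the maximum modulus principle forces $\max_{|z|=r}|A(z)| \geq 1$ for every $r>0$, so your claim that ``the product $|(1-z)^m B(z)|$ decays like $\exp(-\Theta(\sqrt{n\ln(1/L)}))$'' on the contour is impossible. The Cauchy bound can therefore only give $\sum_{j\geq 1}|a_j| \leq M/(r-1)$ with $M \geq 1$; to push this below $1/L$ you need $r-1$ of order at least $L$ together with $M=O(1)$, and establishing that a degree-$n$ polynomial with $m$ roots at $1$ can be kept $O(1)$ on such a circle is the entire content of the lemma, not a preliminary step. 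Your suggested root placement is also backward: putting the zeros of $B$ on a short arc near $z=1$ makes $|B(z)|$ \emph{large} at $z=-r$ (each factor $|1 - z/\rho_k|\approx 2$ there), precisely where $|(1-z)^m|\approx 2^m$ is already largest, so $|A(-r)|\approx 2^n$ rather than $O(1)$.

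The gap is that you defer the actual construction to ``adopt the BEK construction directly,'' but Lemma~5.4 of \cite{BEK:97} --- the only BEK result invoked in this paper --- is an \emph{upper bound} (polynomials with bounded coefficients and many roots at $1$ must be small near $1$), not a construction of such polynomials. Erd\'elyi's existence proof is a separate argument, and recovering the constant $2/7$ requires the specific choices made there, which your sketch does not supply.
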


We apply \Cref{thm:Erdelyi} with $n=d$ and $L=10\eps$ (recall that $\eps<1/170$) to obtain the polynomial $\smash{A(x)=\sum_{j=0}^d a_jx^j}$ for \Cref{claim:poly-exists}.
Note that by scaling we may further assume that $\sum_{j=0}^d |a_j|=2$. 
With these choices we have that $|a_0| \geq 10 \eps (2 - |a_0|)$, hence $|a_0| \geq 3 \eps$ (so property (iii) of \Cref{claim:poly-exists} holds). On the other hand, since $A$ has a root at 1, we have
$\sum_{j=0}^d a_j = 0,$ which together with $\smash{\sum_{j=0}^d |a_j|=2}$ gives us property (i) of \Cref{claim:poly-exists}.

For property (ii) of \Cref{claim:poly-exists}, we will use the following result of Borwein, Erd\'{e}lyi and  K\'{o}s,
and the fact that $A(x)$ has at least 
$\Omega(\sqrt{d\log(1/\eps)})$ repeated roots at 1 (where we used $\eps>2^{-d/3})$):

\begin{lemma} [Lemma~5.4 of \cite{BEK:97}] \label{lem:BEK}
Let $B: \C \to \C$ be defined as $B(x) = \sum_{j=0}^n b_j x^j$ satisfying $|b_j| \leq 1$ for all $0 \leq j \leq n$. If $B$ have $k$ repeated roots at 1, then we have
\[
\sup_{x \in I} \big|B(x)\big| \leq (n+1)\pbra{{\frac e 9}}^k,\quad\text{where $I=\left[1-\frac{k}{9n},1\right]$.}
\]
\end{lemma}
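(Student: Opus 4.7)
The plan is to expand $B(x)$ in a Taylor series around $x=1$, bound each Taylor coefficient using the coefficient bound on $B$ together with the hockey-stick identity, and then exploit the fact that consecutive terms in the resulting series decay by a ratio of at most $1/9$ on the interval $I$, so that the whole sum is dominated by its leading term.

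First I would expand $B(x) = \sum_{i=0}^n b_i x^i$ around $x=1$ by setting $y = x-1$ and applying the binomial theorem $x^i = (y+1)^i = \sum_{j=0}^i \binom{i}{j} y^j$. Swapping the order of summation gives
\[
B(x) = \sum_{j=0}^n c_j (x-1)^j, \qquad c_j = \sum_{i=j}^n \binom{i}{j} b_i.
\]
The hypothesis that $B$ has a zero of multiplicity $k$ at $x=1$ forces $c_0 = c_1 = \cdots = c_{k-1} = 0$. Using $|b_i| \leq 1$ together with the hockey-stick identity $\sum_{i=j}^n \binom{i}{j} = \binom{n+1}{j+1}$, we get
\[
|c_j| \leq \binom{n+1}{j+1} \qquad \text{for all } k \leq j \leq n.
\]

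For any $x \in I$, I have $|x-1| \leq k/(9n)$, so the triangle inequality yields
\[
|B(x)| \leq \sum_{j=k}^n \binom{n+1}{j+1}\left(\frac{k}{9n}\right)^{\!j} \;=:\; \sum_{j=k}^n f(j).
\]
The key observation is that for $j \geq k$, the ratio
\[
\frac{f(j+1)}{f(j)} \;=\; \frac{n-j}{j+2}\cdot \frac{k}{9n} \;\leq\; \frac{n}{k}\cdot \frac{k}{9n} \;=\; \frac{1}{9},
\]
so $\sum_{j=k}^n f(j) \leq f(k)\cdot \sum_{i \geq 0}(1/9)^i = (9/8)\,f(k)$.

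Finally, I would bound the leading term using $\binom{n+1}{k+1} = \frac{n+1}{k+1}\binom{n}{k}$ and the standard estimate $\binom{n}{k} \leq (en/k)^k$:
\[
f(k) \;=\; \binom{n+1}{k+1}\!\left(\frac{k}{9n}\right)^{\!k} \leq\; \frac{n+1}{k+1}\!\left(\frac{en}{k}\cdot \frac{k}{9n}\right)^{\!k} =\; \frac{n+1}{k+1}\!\left(\frac{e}{9}\right)^{\!k}.
\]
Combining everything gives $|B(x)| \leq \tfrac{9(n+1)}{8(k+1)}(e/9)^k \leq (n+1)(e/9)^k$, where the last step uses $9/(8(k+1)) \leq 1$ for integer $k \geq 1$. (The case $k=0$ is vacuous since $I$ degenerates to the single point $\{1\}$.)

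I do not anticipate a serious technical obstacle here; the proof is essentially a careful bookkeeping exercise. The main thing to track is that the ratio bound $f(j+1)/f(j) \leq 1/9$ requires $(n-j)/(j+2) \leq n/k$, which holds precisely because the summation starts at $j \geq k$. It is worth noting how neatly the exponent $e/9$ emerges: the factor $e$ comes from the Stirling-type bound on $\binom{n}{k}$, while the factor $1/9$ comes from the prescribed length $k/(9n)$ of the interval $I$, and these two bounds are exactly calibrated so that the $k^k$ and $(1/n)^k$ factors cancel and produce the stated $(e/9)^k$ decay.
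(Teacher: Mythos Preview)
The paper does not give its own proof of this lemma; it simply quotes it as Lemma~5.4 of \cite{BEK:97} and uses it as a black box in \Cref{sec:proofclaim}. So there is nothing in the paper to compare your argument against.

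That said, your proof is correct and self-contained. The Taylor expansion around $x=1$, the hockey-stick bound $|c_j|\le\binom{n+1}{j+1}$, the ratio estimate $f(j+1)/f(j)\le 1/9$ for $j\ge k$, and the final cleanup via $\binom{n}{k}\le(en/k)^k$ all check out, and the constants land exactly on $(n+1)(e/9)^k$. One tiny wording quibble: the case $k=0$ is not quite ``vacuous''---the interval $I$ is the single point $\{1\}$ and the claimed bound becomes $|B(1)|\le n+1$, which follows directly from $|b_j|\le1$---but this does not affect the argument.
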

Taking $B(x)=A(x)/2$ we have that $|b_j| \leq 1$ for all $j$, and $B$ has 
$$k \geq {\frac 2 7} \sqrt{d\ln \left(\frac 1 {10 \eps}\right)}$$ repeated roots at 1. For a suitably small positive choice of the absolute constant $c$, we have that $$8c\sqrt{\frac {\log(1/\eps)}{d}} \leq \frac{k}{9d},$$  so we may apply \Cref{lem:BEK} and we get that for any $x$ in the interval in \Cref{eq:gimmecherry}, we have
\[
\big|A(x)\big| = 2 \cdot 
\big|B(x)\big| 
\leq 2^{-\Theta(k)}
= 2^{-\Omega\left(\sqrt{d \log(1/\eps)}\right)}.  
\]
This gives property (ii) and concludes the proof of \Cref{claim:poly-exists}.
 
\subsection{From the discrete cube to the $\mathbb{T}^d$ torus}\label{subsec:torus}

We now use the pair of distributions $P_1,P_2$ over $\bits^d$ from \Cref{sec:lb-discrete-cube} to give a pair of distributions $D_1,D_2$ over the torus $\mathbb{T}^d$ such that all low-degree Fourier coefficients are close to matching, yet the heavy hitter distance between $D_1$ and $D_2$ is large, thus proving \Cref{thm:De-distance-lower}.  For ease of reference we restate \Cref{thm:De-distance-lower}:

\Dedistancelower*

\begin{proof}
Let us define the following simple embedding of $\bits^d$ into $\mathbb{T}^d$: For each $x \in \bits^d$,
\[
\Emb(x) = \pbra{{\frac {1-x_1} 4},\dots,{\frac {1 - x_d} 4}},
\]
so $\Emb(x)$ ranges over $\{0,1/2\}^d \subset \mathbb{T}^d$ and in particular we have $\Emb(1^d) = (0,\dots,0).$

We also define the following mapping from $\Z^d$ to subsets of $\{1,\dots,d\}$ which will be useful for translating between Fourier coefficients over $\mathbb{T}^d$ and Fourier coefficients over $\bits^d$:
For each $\ell \in \Z^d$, 
\[
 \Par(\ell) = \big\{j \in \{1,\dots,d\}: (-1)^{\ell_j}=-1\big\}\quad\text{or equivalently,}\quad\big\{j \in \{1,\dots,d\}: \ell_j \text{~is odd}\big\}.
\]

For $b \in \{1,2\}$ and $z \in \bits^d$, recall that $P_b(z)$ is the amount of mass that the distribution $P_b$ puts on $z$.
We may write the distribution $P_b$ as 
\[
P_b(x) = \sum_{z \in \bits^d} P_{b}(z) \delta_z(x).
\]
The distribution $D_b$ is defined as
\begin{equation} \label{eq:Db}
D_b(y) := \sum_{z \in \bits^d} P_{b}(z) \delta_{\Emb(z)}(y),
\end{equation}
so the support of $D_b$ (a distribution over $\mathbb{T}^d$) is contained in $\{0,{1/2}\}^d$, and for $y \in \{0,{1/2}\}^d$, the amount of weight that $D_b$ puts on $y$ is equal to the amount of weight that $P_b$ puts on the unique string $z = \Emb^{-1}(y) \in \bits^d.$
From \Cref{eq:Db} we have that for $\ell \in \Z^d$, the Fourier coefficient 
\begin{align}
\widehat{D_b}(\ell) &= 
\sum_{z \in \bits^d} P_b(z) e^{2 \pi i (\ell \cdot \Emb(z))} \nonumber \\
&= \sum_{z \in \bits^d} P_b(z) \chi_{\Par(\ell)}(z) \label{eq:chch}\\
&= 2^d \cdot \Ex_{\bz \sim \bits^d} \sbra{P_b(\bz) \chi_{\Par(\ell)}(\bz)} = 2^d \cdot \widehat{P_b}(\Par(\ell)),\label{eq:dal}
\end{align}
where \Cref{eq:chch} holds because if $\ell_j$ is even then $e^{2\pi i \ell_j(1-z_j)/4}=1$ for both $z_j\in \{\pm 1\}$, and if $\ell_j$ is odd then $e^{2 \pi i \ell_j (1-z_j)/4}=z_j$ for $z_j \in \bits$.



Fix any $\ell \in \Z^d$ such that $$\|\ell_1\|_1 \leq c\sqrt{\frac{d}{\log(1/\eps)}},$$
where $c>0$ is the absolute constant in \Cref{thm:cube-lb}. Since $|\Par(\ell)| \leq \|\ell\|_1$,
by \Cref{eq:dal} and property (1) of \Cref{thm:cube-lb}, we have that
\[
\left|\widehat{D_1}(\ell) - \widehat{D_2}(\ell)\right|
= 2^d \cdot \left|\widehat{P_1}(\Par(\ell)) - \widehat{P_2}(\Par(\ell))\right|
\leq 2^{-\Omega\left(\sqrt{d \log(1/\eps)}\right)},
\]
establishing property (a) of \Cref{thm:De-distance-lower}.

\medskip

It remains to establish property (b) of \Cref{thm:De-distance-lower}, i.e.~to show that $\dDe^{(\ed)}(D_1,D_2)>\eps$.  This is simple:  let $x$ be the point $0^d \in \mathbb{T}^d$ and let $\tau = 0$.  We have that $D_1(x) - D_2(x) = P_1(1^d) - P_2(1^d)$, which is at least $2\eps$ by part (2) of \Cref{thm:cube-lb}. Moreover, $D_1(\Ball(x,\tau)) = D_1(x) = P_1(1^d)$, and $D_2(\Ball(x,\tau+\ed))=D_2(\Ball(x,\ed)) = D_2(x)$ (since $x$ is the only point in the support of $D_2$ that lies in $\Ball(x,\tau+\ed)$, recalling that $\ed <1/2$), which equals $P_2(1^d)$. So $D_1(\Ball(x,\tau)) - D_2(\Ball(x,\ed)) \geq 2\eps$
and hence indeed $\dDe^{(\ed)}(D_1,D_2)>\eps$.
This concludes the proof of \Cref{thm:De-distance-lower}.
\end{proof}

\section*{Acknowledgements}

The authors would like to thank Ankur Moitra for helpful discussions. %
X.C.~is supported by NSF grants CCF-2106429 and CCF-2107187. 
A.D.~is supported by NSF grant CCF 2045128. 
Y.H.~is supported by NSF grants CCF-2211238, CCF-2106429, and CCF-2238221
R.A.S.~is supported by NSF grants CCF-2211238 and CCF-2106429. 
T.Y.~is supported by NSF grants CCF-2211238, CCF-2106429, and AF-Medium 2212136.
T.Y.~and Y.H.~are also supported by an Amazon Research Award, Google CyberNYC award, and NSF grant CCF-2312242.

\bibliography{refs.bib}
\bibliographystyle{alpha}

\appendix

\crefname{appendix}{appendix}{appendices}
\Crefname{appendix}{Appendix}{Appendices}
\crefalias{section}{appendix}
\crefalias{subsection}{appendix}
\crefalias{subsubsection}{appendix}


\section{Small Wasserstein distance implies small heavy hitter distance} 
\label{ap:HH-Wasserstein}

\begin{proposition} \label{prop:HH-Wasserstein}
Fix any scale parameter $0 < \ed \leq 1$, and let $D_1,D_2$ be two distributions over $\mathbb{T}^d$ satisfying $\dW(D_1,D_2) \leq \eps\cdot \ed$.  Then $\dDe^{(\ed)}(D_1,D_2) \leq \eps.$
\end{proposition}
\begin{proof}
We show the contrapositive; so suppose that $\dDe^{(\ed)}(D_1,D_2) > \eps.$
This means that there exists some $x \in \mathbb{T}^d$ and some $\tau \in [0,\ed]$ such that
$D(\Ball(x,\tau)) > D'(\Ball(x,\tau+\ed)) + \eps$, where either $D=D_1,D'=D_2$ or vice versa.  
Hence $\dW(D,D') \geq \eps \cdot \ed$, since at least an $\eps$ amount of mass under $D$ must be moved a distance of at least $\ed$ (to take it from lying inside the radius-$\tau$ ball around $x$ to lying outside the radius-$(\tau+\ed)$ ball around $x$) in order to transform $D$ into $D'$. 
\end{proof}


\section{Wasserstein reconstruction}
\label{sec:appendix-wasserstein}

In this section, we provide proofs of the results stated in~\Cref{sec:multi-dim-Wasserstein}.  
    
\subsection{Upper bound}
\label{sec:d-dim-ub-structural}

We begin by restating \Cref{thm:multi-dim-Wass-ub} in a slightly more general setting. 
In particular, our result holds for general \emph{signed measures of finite total variation} on~$\mathbb{T}^d$, normalized so that their total variation is~$1$, rather than only for probability distributions. 
We refer to such objects as \emph{signals}. 

Equivalently, with a mild (and standard) abuse of notation, we identify such signals with~$L^1$-integrable functions $f : \mathbb{T}^d \to \mathbb{R}$ satisfying
\[
    \int_{\mathbb{T}^d} |f(x)|\,dx = 1,
\]
while allowing discrete components such as Dirac deltas. 
(This convention lets us treat continuous and discrete signals in a unified manner.)  
Note that non-negative normalized signals correspond to probability distributions over~$\mathbb{T}^d$, which is the setting considered in the main body of the paper. 
For background on finite signed measures and total variation, see, e.g., Chapter~6 of~\cite{rudin1987real} or Chapter~3 of~\cite{folland1999real}.

\paragraph{Wasserstein distance between signals.}

The characterization of Wasserstein distance from \Cref{subsec:prelims-wass} extends naturally to this more general setting. 
Given two signals $f, g : \mathbb{T}^d \to \mathbb{R}$ with
\[
    \int_{\mathbb{T}^d} |f(x)|\,dx = \int_{\mathbb{T}^d} |g(x)|\,dx = 1,
\]
we define their Wasserstein distance as in \Cref{eq:KR}:
\[
    \dW(f,g) := \sup_{h: \mathbb{T}^d \to \mathbb{R}} \int h(x)\cdot (f(x) - g(x))\,dx,
\]
where the supremum is taken over all $1$-Lipschitz functions $h$ on $\mathbb{T}^d$ satisfying $\|h\|_\infty \le \sqrt{d}$.  
(We remark that in this more general context in which $f$ and $g$ need not integrate to the same value, the condition that $\|h\|_\infty \leq \sqrt{d}$ is essential for the definition to make sense.)
This notion has been previously considered in the literature, see e.g.~\cite{Hanin99,PRS23}.

\paragraph{Wasserstein reconstruction for signals.} 

We can now state the generalization of \Cref{thm:multi-dim-Wass-ub} for signals:  

\begin{theorem}[Generalization of \Cref{thm:multi-dim-Wass-ub}]~\label{thm:multi-dim-Wass-ub-signals}
    Let $f_1, f_2$ be two signals over the the $d$-dimensional torus $\mathbb{T}^d$. For any $0 < \eps < 1/2$, and any $d \geq 1$, 
    \[
    \text{let~}T = \left\lceil 6\sqrt{d}/\eps \right\rceil,
    \quad \text{and let~} 
    \kappa = \begin{cases}
        0.001 \eps / \log(1/\eps) & \text{~when~}d = 1\\
        \left(0.01 \eps / \sqrt{d}\right)^d & \text{~when~}d \ge 2
        \end{cases}.
    \]
        If $f_1,f_2$ satisfy $|\widehat{f_1}(\ell) - \widehat{f_2}(\ell) | \le \kappa$ for all integer vectors $\ell = (\ell_1, \dots, \ell_d) \in \mathbb{Z}^d$ with $\left\| \ell \right\|_{\infty} \le T$, then $\dW(f_1, f_2) \le \eps$.
\end{theorem}

We note that this theorem holds even for generalized signals $f_1,f_2$ whose absolute values need not integrate to $1$. This will be useful for the algorithmic result given in \Cref{sec:wasserstein-algo}.

\begin{remark}[Heavy hitter recovery for signals]
	It is natural to ask whether heavy hitter recovery extends to the more general setting of (possibly signed) measures rather than just distributions (i.e., non-negative measures). 
	Unfortunately, this is not the case: non-negativity is essential for heavy hitter recovery in the absence of separation assumptions. 
	Indeed, if signed signals are allowed and there is no minimal separation assumption, then
	it is impossible to distinguish the absence of a spike at $x_0$ from the two-spike signed signal  
	\[
	    {\frac 1 2} \cdot \delta_{x_0} - {\frac 1 2} \cdot \delta_{x_0 + \Delta}, 
	\]
	(where $\Delta$ may be arbitrarily small) using only Fourier measurements. 
	In particular, this means that our heavy hitter objective of ``recovering the spikes'' cannot be achieved. 
	Since the intuitive goal of heavy hitter recovery is to ``find the spikes,'' i.e.,~identify regions where the signal's mass is concentrated, it is most natural to study this problem for nonnegative signals, where such regions correspond to high-density areas of a distribution. 
	
	We thus use $D$ to denote such non-negative signals, which we interpret as distributions (since $\int |D| = \int D = 1$), and reserve $f$ for general, possibly signed, signals. 
\end{remark}

\subsubsection{Proof idea}
\label{subsec:appendix-wass-tech}

Restating \Cref{thm:multi-dim-Wass-ub-signals}, we wish to show that if the Wasserstein distance between two signals $f_1$ and $f_2$ is large, say $\dW(f_1, f_2) \ge \epsilon$, then there must be a Fourier coefficient $\ell$ (with $\Vert \ell \Vert_\infty \le T$) such that $|\widehat{f}_1(\ell) - \widehat{f}_2(\ell)|>\kappa$.   
Establishing this immediately implies that if the Fourier coefficients $\{\widetilde{f}(\ell)\}_{\Vert \ell \Vert_\infty \le T}$ are known up to additive error $\kappa$, one can recover $f$ within Wasserstein distance $\eps$. 
To establish this, we follow a standard analytic approach using a mollifier as described in \Cref{subsubsec:HH-tech-intro}.   

Assuming for now a suitable mollifier $J$ that is ``$\eta$-concentrated'' in the sense that  
\[
    \Ex_{\bx \sim J}[\dtor(\bx, 0^d)] = \eta\,,
\]
we define  
\[
    f_1':= f_1 \ast J \qquad\text{and}\qquad f_2' := f_2 \ast J\,.
\]
Since $J$ is $\eta$-concentrated around $0^d$ (as described above), it is straightforward to verify that if $\eta \le \epsilon/3$, we will have $\dW(f_1', f_2') \ge \epsilon/3$. 
On the Fourier side, convolution corresponds to pointwise multiplication: 
\[
    \widehat{f_1'} (\ell) = \widehat{f_1}(\ell) \cdot \widehat{J}(\ell) 
    \qquad\text{and}\qquad 
    \widehat{f_2'} (\ell) = \widehat{f_2}(\ell) \cdot \widehat{J}(\ell)
\]
Our argument further uses three more simple but crucial ingredients from Fourier analysis:
\begin{itemize}
    \item Plancherel's identity, which allows us to relate the inner product of two signals to the inner product of their Fourier transforms;
    \item The bandlimitedness of the mollifier $J$, which ensures that $\widehat{f_1'}(\ell) = \widehat{f_2'}(\ell)=0$ for all $\Vert \ell \Vert_\infty >T$; and
    \item The decay of Fourier coefficients for Lipschitz functions, namely that $|\widehat{g}(\ell)| = O(\Vert \ell \Vert_2^{-1})$ for Lipschitz $g$ (\Cref{fact:Lipschitz-Fourier}).
\end{itemize}

Using the three ingredients above, one can show that a lower bound on $\dW(f_1', f_2')$ necessarily implies the existence of a frequency $\ell$ with $\|\ell\|_\infty \leq T$ for which $|\widehat{f_1}(\ell) - \widehat{f_2}(\ell)|$ is large. 
The resulting quantitative bounds depend on the choice of mollifier $J$. 
For our analysis, we employ \emph{Jackson's kernel} $J_{d,n}$ (in $d$ dimensions, parameterized by the integer $n$; see \Cref{def:jackson-kernel} for the precise definition).
Substituting this kernel into the framework outlined above yields the quantitative guarantees stated in \Cref{thm:multi-dim-Wass-ub-signals}.

\subsubsection{Jackson's kernel}\label{sec:prelim-jackson}

As mentioned above, our proof of \Cref{thm:multi-dim-Wass-ub-signals} relies on Jackson's kernel, which we now define. 
The reader may safely skip ahead to \Cref{subsec:appendix-wass-ub-proof}, where we prove \Cref{thm:multi-dim-Wass-ub-signals}, and return here as needed. 

\begin{definition}[Jackson's kernel] \label{def:jackson-kernel}
    For any integer $n \ge 1$, the $1$-dimensional \emph{Jackson's kernel} over the torus $\mathbb{T}$ is defined by
    \begin{equation*}
        J_n(x) := \alpha_n \cdot \frac{\sin^4(\pi nx)}{\sin^4(\pi x)}\text{\qquad(and naturally $J_n(0) := \alpha_nn^4$)},
    \end{equation*}
    where $\alpha_n$ is a constant given by
    \begin{equation*}
        \alpha_n := \frac{3}{n (2n^2 + 1)}.
    \end{equation*}
    The $d$-dimensional Jackson's kernel over $\mathbb{T}^d$ is defined by
    \begin{equation*}
        J_{d,n}(x_1, \dots, x_d) := \prod_{i=1}^d J_n(x_i) = \alpha_n^d \cdot \prod_{i=1}^d \frac{\sin^4(\pi nx_i)}{\sin^4(\pi x_i)}.
    \end{equation*}
\end{definition}

Note that the function $J_n(x)$ viewed as a function over $\R$ is periodic. 
The following proposition shows that Jackson's kernel defined above integrates to $1$, so the $J_{d,n}$ defines a distribution over $\mathbb{T}^d$. 

\begin{restatable}{proposition}{jacksonnormalization}\label{prop:jackson-normalization}
    We have 
    \begin{equation*}
        \int_{x \in \mathbb{T}} J_n(x)\, dx = 1\,.
    \end{equation*}
\end{restatable}

\begin{proof}
The proof here is adapted from \cite{Giapitzakis-stackex}. We have
\begin{equation} 
\label{eq:gumbo}\frac{\sin(\pi nx)}{\sin(\pi x)} = \frac{e^{\pi inx} - e^{-\pi inx}}{e^{\pi ix} - e^{-\pi ix}} = e^{-\pi i(n - 1)x} \cdot \frac{e^{2\pi inx} - 1}{e^{2\pi ix} - 1} = e^{-\pi i(n - 1)x}\sum_{j = 0}^{n - 1}e^{2\pi ijx},\end{equation}
so
\[\frac{\sin^4(\pi nx)}{\sin^4(\pi x)} = e^{-4\pi i(n - 1)x}\sum_{j_1, j_2, j_3, j_4 \in \{0, \dots, n - 1\}}e^{2\pi i(j_1 + j_2 + j_3 + j_4)x}.\]
Note that for $t \in \mathbb{Z}$ we have $\int_{x \in \mathbb{T}}e^{2\pi itx}dx = \begin{cases}
    1 & \text{if~}t = 0\\
    0 & \text{if~}t \ne 0
\end{cases}$, so we thus have
\[\int_{\mathbb{T}}\frac{\sin^4(\pi nx)}{\sin^4(\pi x)}dx = \left|\{(j_1, j_2, j_3, j_4) \in \{0, \dots, n - 1\}^4 \mid j_1 + j_2 + j_3 + j_4 = 2n - 2\}\right|.\]
The size of the set above equals the number of $(j_1, j_2, j_3, j_4) \in \mathbb{Z}_{\ge 0}^4$ such that $j_1 + j_2 + j_3 + j_4 = 2n - 2$, minus the number of $(j_1, j_2, j_3, j_4) \in \mathbb{Z}^4_{\ge 0}$ such that $j_1 + j_2 + j_3 + j_4 = 2n - 2$ and one of $j_1, j_2, j_3, j_4$ is at least $n$ (there cannot be more than one of them that is at least $n$ since that would make the sum at least $2n$). Thus 
\[\int_{\mathbb{T}}\frac{\sin^4(\pi nx)}{\sin^4(\pi x)} = \binom{2n + 1}3 - 4\binom{n + 1}3 = \frac{2n^3 + n}3 = \frac1{\alpha_n}. \qedhere\]
\end{proof}

The next lemma bounds the expected $\ell_2$-norm of a random $\bx$ under the distribution defined by $J_{d,n}$.

\begin{restatable}{lemma} {jacksonexpecteddistance}\label{lmm:jackson-expected-distance}
    For any $d, n \ge 1$,
    \begin{equation*}
        \Ex_{\bx \sim J_{d,n}} \left[ 
        \dtor(\bx,0^d)
        \right] \le \frac{\sqrt{d}}{n}.
    \end{equation*}
\end{restatable}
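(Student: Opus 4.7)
The plan is to reduce the claim to a one-dimensional second-moment estimate via Jensen's inequality and the product structure of $J_{d,n}$, and then prove that one-dimensional estimate by a direct integral computation. Since $J_{d,n}(x_1,\dots,x_d) = \prod_i J_n(x_i)$ is a product distribution, the coordinates of $\bx \sim J_{d,n}$ are i.i.d.\ with marginal law $J_n$. By the definition of the Euclidean toroidal distance and the concavity of $\sqrt{\cdot}$,
\[
    \Ex_{\bx \sim J_{d,n}}\!\left[\dtor(\bx, 0^d)\right]
    \;\le\; \sqrt{\Ex\!\left[\dtor(\bx, 0^d)^2\right]}
    \;=\; \sqrt{\,d \cdot \Ex_{\bx_1 \sim J_n}\!\left[\dtor(\bx_1, 0)^2\right]\,},
\]
so it suffices to prove the one-dimensional estimate $\Ex_{\bx_1 \sim J_n}[\dtor(\bx_1, 0)^2] \le 1/n^2$.

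For the one-dimensional estimate I would identify $\mathbb{T}$ with $[-\tfrac12,\tfrac12)$, so that $\dtor(x,0)=|x|$ and $J_n$ is an even function of $x$. To tame the $\sin^{-4}(\pi x)$ singularity of $J_n$ near $x=0$, the plan is to invoke the elementary concavity-of-$\sin$ bound $\sin(\pi x) \ge 2x$ on $[0,1/2]$, which yields $\sin^4(\pi x) \ge 16 x^4$ and therefore
\[
    \Ex_{\bx_1 \sim J_n}\!\left[\dtor(\bx_1,0)^2\right]
    \;=\; 2\int_0^{1/2} x^2 J_n(x)\,dx
    \;\le\; \frac{\alpha_n}{8}\int_0^{1/2}\frac{\sin^4(\pi n x)}{x^2}\,dx.
\]
After the substitution $u=\pi n x$ and extension of the integration range to $[0,\infty)$ (valid since the integrand is non-negative), this becomes $\tfrac{\alpha_n}{8} \cdot \pi n \cdot \int_0^\infty \sin^4(u)/u^2\,du$.

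The main technical obstacle---indeed, the only genuinely analytic ingredient---is the classical identity $\int_0^\infty \sin^4(u)/u^2\,du = \pi/4$. I would prove this by a single integration by parts with $v = -1/u$ and $du = 2\sin^2(u)\sin(2u)\,du$, which reduces it to $\int_0^\infty (1-\cos 2u)\sin(2u)/u\,du$; splitting and applying the Dirichlet integrals $\int_0^\infty \sin(ku)/u\,du = \pi/2$ gives $\pi/2 - \pi/4 = \pi/4$. Substituting the explicit value $\alpha_n = 3/(n(2n^2+1))$ then simplifies the one-dimensional bound to $3\pi^2/(32(2n^2+1)) \le 3\pi^2/(64 n^2) < 1/n^2$ (since $3\pi^2/64 \approx 0.46$), and chaining with the reduction in the first paragraph yields the desired bound $\sqrt{d}/n$.
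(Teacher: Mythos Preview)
Your proof is correct and follows essentially the same route as the paper's: Jensen's inequality to pass to the second moment, the product structure to reduce to one dimension, the bound $\sin(\pi x)\ge 2x$ on $[0,1/2]$, and the substitution $u=\pi n x$. The only divergence is in the last step, where the paper bounds $\int \sin^4 u/u^2\,du$ more crudely via $\sin u \le \min\{u,1\}$ (yielding $4/3$) rather than evaluating it exactly as $\pi/4$ via the Dirichlet integral; this is more elementary and still suffices for the stated constant, while your exact computation gives a slightly sharper implicit constant.
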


\begin{proof}
The proof here uses ideas from \cite[Lemma 9.6]{Shadrin05-partIIIat}.
\begin{align*}
    \Ex_{\bx \sim J_{d,n}} \sbra{\dtor(\bx,0^d)}
    &\le \sqrt{\Ex_{\bx \sim J_{d,n}}\sbra{\dtor(\bx,0^d)^2 }}\\
    &= \sqrt{2^d\int_{[0, 1/2]^d}\|x\|_2^2J_{d, n}(x)dx}\tag{by symmetry}\\
    &=\sqrt{2^d\int_{[0, 1/2]^d}(x_1^2 + \cdots + x_d^2)J_n(x_1) \cdots J_n(x_d)dx_1 \cdots dx_d} \\
    &=\sqrt{2^dd\int_{[0, 1/2]^d}x_1^2J_n(x_1) \cdots J_n(x_d)dx_1 \cdots dx_d}\tag{by symmetry}\\
    &=\sqrt{2d\int_0^{1/2}x_1^2J_n(x_1)dx_1}\tag{by \Cref{prop:jackson-normalization}}\\
    &\le \sqrt{\frac{d\alpha_n}{8}\int_0^{1/2}x_1^2 \left(\frac{\sin(\pi nx_1)}{x_1}\right)^4dx_1}\tag{$\sin(\pi x_1) \ge 2x_1$ for $0 \le x_1 \le 1/2$}\\
    &\le \sqrt{\frac{\pi nd\alpha_n}{8}\int_0^{\pi n/2}u^2\left(\frac{\sin u}u\right)^4du}\tag{substitute $u := \pi nx_1$}\\
    &\le \sqrt{\frac{\pi nd\alpha_n}{8}\left(\int_0^{1}u^2du + \int_{1}^{+\infty}u^{-2}du\right)}\tag{$\sin u \le \min\{u, 1\}$ for $u \ge 0$}\\
    &\le \frac{\sqrt{\pi}}2 \cdot \frac{\sqrt{d}}n\,,\tag{$\alpha_n \le 3/(2n^3)$}
\end{align*}
which completes the proof. 
\end{proof}

We also need the following lemma about the Fourier coefficients of Jackson's kernel.

\begin{restatable}{lemma}{jacksonfourier}
\label{lmm:jackson-fourier}
    For any $d, n \ge 1$ and $\ell \in \mathbb{Z}^d$, $\left\lvert \widehat{J_{d,n}}(\ell) \right\rvert \le 1$. Moreover, $\widehat{J_{d,n}}(\ell) = 0$ if $\left\| \ell \right\|_{\infty} \ge 2n$.
\end{restatable}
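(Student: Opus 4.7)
The plan is to handle the two claims in the lemma separately, first reducing the $d$-dimensional statement to the one-dimensional one using the tensor-product structure of $J_{d,n}$. Concretely, $J_{d,n}(x_1,\ldots,x_d) = \prod_{i=1}^d J_n(x_i)$, so by expanding the defining integral of the Fourier coefficient as an iterated integral and using Fubini, I get
\[
\widehat{J_{d,n}}(\ell) = \prod_{i=1}^d \widehat{J_n}(\ell_i).
\]
Thus everything reduces to understanding the one-dimensional coefficients $\widehat{J_n}(k)$ for $k \in \mathbb{Z}$.

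For the $L^\infty$-bound $|\widehat{J_{d,n}}(\ell)| \le 1$, the key observation is that $J_{d,n}$ is non-negative (each factor $\sin^4(\pi n x_i)/\sin^4(\pi x_i)$ is non-negative) and, by Proposition \ref{prop:jackson-normalization} applied in each coordinate, integrates to $1$ over $\mathbb{T}^d$. So $J_{d,n}$ is a genuine probability density, and
\[
|\widehat{J_{d,n}}(\ell)| = \left| \int_{\mathbb{T}^d} J_{d,n}(x) e^{2\pi i \ell \cdot x}\, dx \right| \le \int_{\mathbb{T}^d} J_{d,n}(x)\, dx = 1.
\]

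For the bandlimitedness claim, I plan to exploit the identification of $J_n$ with a squared Fej\'er kernel. Recall the Fej\'er kernel
\[
F_n(x) := \frac{1}{n}\left(\frac{\sin(\pi n x)}{\sin(\pi x)}\right)^2,
\]
whose Fourier coefficients are the well-known tent $\widehat{F_n}(k) = \max\{0, 1 - |k|/n\}$; in particular, $F_n$ is a trigonometric polynomial of degree $n-1$, i.e., $\widehat{F_n}(k) = 0$ for $|k| \ge n$. Since $J_n(x) = \alpha_n n^2 \cdot F_n(x)^2$, multiplication on the time side corresponds to convolution on the Fourier side, so $\widehat{J_n} = \alpha_n n^2 \cdot (\widehat{F_n} \ast \widehat{F_n})$. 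The convolution of two sequences supported in $\{-(n-1),\ldots,n-1\}$ is supported in $\{-2(n-1),\ldots,2(n-1)\}$, and hence $\widehat{J_n}(k) = 0$ for every $|k| \ge 2n-1$, which in particular covers $|k| \ge 2n$.

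Combining the two pieces: if $\|\ell\|_\infty \ge 2n$, some coordinate $\ell_i$ satisfies $|\ell_i| \ge 2n$, so $\widehat{J_n}(\ell_i) = 0$, and the product $\prod_i \widehat{J_n}(\ell_i)$ vanishes. The only step that is not essentially a one-line observation is the formula $\widehat{F_n}(k) = (1-|k|/n)_+$ for the Fej\'er kernel; this is standard (and can alternatively be derived directly from the identity $\sin(\pi n x)/\sin(\pi x) = \sum_{j=0}^{n-1} e^{2\pi i (j - (n-1)/2) x}$, which writes $F_n$ as an explicit trigonometric polynomial of degree $n-1$), so I do not expect any real obstacle in carrying out the plan.
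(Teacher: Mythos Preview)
Your proof is correct. Both you and the paper start by factoring $\widehat{J_{d,n}}(\ell)=\prod_i \widehat{J_n}(\ell_i)$ and reduce to the one-dimensional case, and the bandlimitedness arguments are essentially the same idea in different clothing: the paper expands $\sin(\pi nx)/\sin(\pi x)=\sum_{j=0}^{n-1}e^{\pi i(2j-n+1)x}$ and takes the fourth power directly, obtaining an explicit counting formula for $\widehat{J_{1,n}}(\ell)$ that visibly vanishes for $|\ell|\ge 2n-1$, while you package the same identity as ``$J_n$ is a squared Fej\'er kernel'' and use convolution of supports. Where the two approaches genuinely differ is the bound $|\widehat{J_{d,n}}(\ell)|\le 1$: the paper reads it off from the explicit formula (``can be straightforwardly verified''), whereas your argument---$J_{d,n}\ge 0$ and $\int J_{d,n}=1$ imply $|\widehat{J_{d,n}}(\ell)|\le \int J_{d,n}=1$---is a cleaner one-liner that sidesteps any computation. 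The paper's route buys you the exact values of the Fourier coefficients (potentially useful elsewhere), while yours is shorter and more conceptual for the stated lemma.
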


\begin{proof}
Note that
\[\widehat{J_{d, n}}(\ell) = \int_{x \in \mathbb{T}^d}e^{2\pi i(\ell \cdot x)}J_{d, n}(x)dx = \prod_{k = 1}^d\left(\int_{x_k \in \mathbb{T}}e^{2\pi i\ell_kx_k}J_n(x_k)dx_k\right) = \prod_{k = 1}^d\widehat{J_{1, n}}(\ell_k),\]
so we only need to consider the case where $d = 1$. When $d = 1$,
\begin{align*}
    \widehat{J_{1, n}}(\ell)
    &= \int_{\mathbb{T}}e^{2\pi i\ell x}\alpha_n\cdot \frac{\sin^4(\pi nx)}{\sin^4(\pi x)}dx\\
    &= \alpha_n\int_{\mathbb{T}}e^{2\pi i(\ell - 2n + 2)x} \cdot \left(\frac{e^{2\pi nx} - 1}{e^{2\pi x} - 1}\right)^4dx
    \tag{recalling \Cref{eq:gumbo}}\\
    &= \alpha_n\int_{\mathbb{T}}e^{2\pi i(\ell - 2n + 2)x} \cdot \left(\sum_{j = 0}^{n - 1}e^{2\pi jx}\right)^4dx\\
    &= \alpha_n\int_{\mathbb{T}}\sum_{j_1, j_2, j_3, j_4 \in \{0, \dots, n - 1\}}e^{2\pi i(\ell - 2n + 2 + j_1 + j_2 + j_3 + j_4)x}dx\\
    &= \alpha_n \cdot \left|\{(j_1, j_2, j_3, j_4) \in \{0, \dots, n - 1\}^4 \mid j_1 + j_2 + j_3 + j_4 = 2n - 2 - \ell\}\right|.
\end{align*}
The size of the set above can be calculated by the number of $(j_1, j_2, j_3, j_4) \in \mathbb{Z}_{\ge 0}^4$ such that $j_1 + j_2 + j_3 + j_4 = 2n - 2 - \ell$, minus the number of $(j_1, j_2, j_3, j_4) \in \mathbb{Z}_{\ge 0}$ such that $j_1 + j_2 + j_3 + j_4 = 2n - 2 - \ell$ and one of $j_1, j_2, j_3, j_4$ is at least $n$ (there cannot be more than one of them that is at least $n$ since that would make the sum at least $2n$).
Thus
\[\widehat{J_{1, n}}(\ell) = \begin{cases}
    \alpha_n\left(\binom{2n + 1 - \ell}3 - 4\binom{n + 1 - \ell}3\right) & 0 \le \ell \le n - 2\\
    \alpha_n\binom{2n + 1 - \ell}3 & n - 1 \le \ell \le 2n - 2\\
    0 & \ell \ge 2n - 1
\end{cases},\]
which can be straightforwardly verified to be at most 1 since $\alpha_n = \frac{3}{n (2n^2 + 1)}.$ 
\end{proof}

We also use the fact that Jackson's kernel is Lipschitz:

\begin{restatable}{lemma}{jacksonlipschitz}
\label{lmm:jackson-lipschitz}
    For any $d, n \ge 1$, $J_{d,n}$ is $(3\pi n^2(3n/2)^{d - 1}\sqrt d)$-Lipschitz.
\end{restatable}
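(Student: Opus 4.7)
My plan is to bound the Euclidean gradient of $J_{d,n}$ uniformly over $\mathbb{T}^d$; since $\mathbb{T}^d$ equipped with toroidal distance is locally isometric to Euclidean $\mathbb{R}^d$, the mean value theorem applied along a toroidal geodesic then converts such a gradient bound into the desired Lipschitz constant. Because $J_{d,n}$ factors as $\prod_{i=1}^d J_n(x_i)$, the product rule gives $\partial_i J_{d,n}(x) = J_n'(x_i)\prod_{j \ne i} J_n(x_j)$, so the argument reduces to the two one-dimensional bounds $\|J_n\|_\infty \le 3n/2$ and $\|J_n'\|_\infty \le 3\pi n^2$. Summing squared partial derivatives and taking a square root then yields
\[
    \sup_{x \in \mathbb{T}^d} \|\nabla J_{d,n}(x)\|_2 \;\le\; \sqrt{d}\cdot \|J_n'\|_\infty \cdot \|J_n\|_\infty^{d-1} \;\le\; 3\pi n^2 (3n/2)^{d-1}\sqrt{d},
\]
which is exactly the stated Lipschitz constant.

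For the two one-dimensional bounds, I would use the standard Fej\'er-kernel factorization $J_n(x) = \alpha_n n^2 F_n(x)^2$, where
\[
    F_n(x) = \frac{\sin^2(\pi n x)}{n \sin^2(\pi x)} = \sum_{|k| < n} \left(1 - \frac{|k|}{n}\right) e^{2\pi i k x}
\]
is the (non-negative) Fej\'er kernel. Since $F_n \ge 0$ and the Fourier series above is dominated at $x=0$, we get $\|F_n\|_\infty = F_n(0) = n$, hence $\|J_n\|_\infty = J_n(0) = \alpha_n n^4 = 3n^3/(2n^2+1) \le 3n/2$. For the derivative bound, I differentiate to obtain $J_n' = 2\alpha_n n^2 F_n F_n'$ and read off $\|F_n'\|_\infty$ from the Fourier series:
\[
    \|F_n'\|_\infty \;\le\; 2\pi \sum_{|k| < n}\left(1 - \frac{|k|}{n}\right)|k| \;=\; \frac{2\pi(n^2-1)}{3}.
\]
Combining $\|F_n\|_\infty \le n$ with $\alpha_n n^5 = 3n^4/(2n^2+1) \le 3n^2/2$ then gives $\|J_n'\|_\infty \le 2\pi n^2 \le 3\pi n^2$, as needed.

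I do not anticipate any substantive obstacle. The only slightly nontrivial arithmetic is the closed-form evaluation $\sum_{k=1}^{n-1}(k - k^2/n) = (n^2-1)/6$ and bookkeeping of the powers of $n$ coming from $\alpha_n = 3/(n(2n^2+1))$; the factor-of-$3/2$ slack between my tighter derivative bound $2\pi n^2$ and the stated $3\pi n^2$ comfortably absorbs any routine inequalities incurred along the way.
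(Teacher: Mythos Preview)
Your proposal is correct and follows essentially the same approach as the paper: bound $\|J_n\|_\infty$ and $\|J_n'\|_\infty$ via an explicit trigonometric-polynomial representation, then use the product structure and Cauchy--Schwarz to control $\|\nabla J_{d,n}\|_2$. The only cosmetic difference is that you factor through the Fej\'er kernel $F_n = D_n^2/n$ whereas the paper works directly with $D_n(x) = \sin(\pi n x)/\sin(\pi x)$, which gives you a marginally sharper derivative constant ($2\pi n^2$ versus the paper's $3\pi n^2$) and spares you the separate Taylor-series check at $x=0$.
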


\begin{proof}
We first observe that for $0 < x < 1$,
\begin{equation} \label{eq:useme}\left|\frac{\sin{\pi nx}}{\sin{\pi x}}\right| = \left|\frac{e^{\pi i nx} - e^{-\pi i nx}}{e^{\pi i x} - e^{-\pi i x}}\right| = \left|e^{\pi i(n - 1)x} + e^{\pi i(n - 3)x} + \cdots + e^{\pi i(-n + 1)x}\right| \le n.\end{equation}

Next, we show that $0 \le J_n(x) \le 3n/2$ for all $x \in \mathbb{T}$. The lower bound is trivial. For the upper bound, when $x = 0$, $J_n(x) = \alpha_nn^4 \le 3n/2$; when $0 < x < 1$, thanks to \Cref{eq:useme} we also have that $J_n(x) \le \alpha_nn^4 \le 3n/2$.

We now show that $|J'_n(x)|$ is bounded by $3\pi n^2$ for every $x \in \mathbb{T}$. Note that the Taylor series of $\sin t$ at $t = 0$ is $t + O(t^3)$, so we have
\[J_n'(0) = \lim_{x \to 0}\frac1x \cdot \left(\alpha_nn^4 - \frac{\alpha_n\sin^4(\pi nx)}{\sin^4(\pi x)}\right) = \alpha_n\lim_{x \to 0}\frac{n^4(\pi^4x^4 + O(x^6)) - (\pi^4n^4x^4 + O(x^6))}{x(\pi^4x^4 + O(x^6))} = 0.\]
For $0 < x < 1$, we have
\begin{align*}
\left|\left(\frac{\sin(\pi n x)}{\sin(\pi x)}\right)'\right|
&= \left|\left(\sum_{j = 0}^{n - 1} e^{\pi i(2j - n + 1)x}\right)'\right| \\
&= \left|\sum_{j = 0}^{n - 1} \pi i (2j - n + 1) e^{\pi i(2j - n + 1)x}\right| \\
&\le \pi \sum_{j = 0}^{n - 1} |2j - n + 1| \le \frac{\pi n^2}{2}\,,
\end{align*}
so
\[\left|J_n'(x)\right| = \left|4\alpha_n \cdot \left(\frac{\sin(\pi nx)}{\sin(\pi x)}\right)' \cdot \left(\frac{\sin(\pi nx)}{\sin(\pi x)}\right)^3\right| \le \frac6{n^3} \cdot \frac{\pi n^2}2 \cdot n^3 = 3\pi n^2.\]

It follows that
\[\left|\frac{\partial J_{d, n}(y)}{\partial y_k}\right| = \left|J_n'(y_k) \cdot \prod_{1 \le j \le d, j \ne k}J_n(y_j)\right| \le 3\pi n^2 \cdot \left(\frac{3n}2\right)^{d - 1},\]
so by Cauchy-Schwarz, the function $J_{d, n}$ is $(3\pi n^2(3n/2)^{d - 1}\sqrt d)$-Lipschitz.
\end{proof}

\subsubsection{Proof of \Cref{thm:multi-dim-Wass-ub}}
\label{subsec:appendix-wass-ub-proof}

    Towards a contradiction, we assume that $f_1$ and $f_2$ are two signals such that $\dW(f_1, f_2) > \eps$. 
    The idea is to smooth each signal using Jackson's kernel. Define $\widetilde{f_1} := f_1 * J_{d,n}$ and $\widetilde{f_2} := f_2 * J_{d,n}$, where $n = \lceil T/2 \rceil$, $J_{d,n}$ is Jackson's kernel in  \Cref{def:jackson-kernel}, and $*$ denotes convolution. By \Cref{lmm:jackson-expected-distance}, $\dW(f_i, \widetilde{f_i}) \le \sqrt{d}/n \le \eps/3$ for $i = 1, 2$, so by the triangle inequality we have that $\dW(\widetilde{f_1}, \widetilde{f_2}) \ge \eps/3$.
    This means that there exists some $1$-Lipschitz function $h: \mathbb{T}^d \to \mathbb{R}$ with $\|h\|_{\infty} \le \sqrt{d}$ satisfying
    \begin{equation*}
        \int_{\mathbb{T}^d} \left(\widetilde{f_1}(x) - \widetilde{f_2}(x)\right) \cdot h(x) ~ dx \ge \eps/3.
    \end{equation*}
    Recalling that each $\widetilde{f_i}$ was obtained from $f_i$ (which may not be square-integrable) by convolving with Jackson's kernel, we see that each $\widetilde{f_i}$ is $L^2$-integrable, so we may apply Plancherel's theorem to restate the above inequality as
    \begin{equation*}
        \sum_{\ell \in \mathbb{Z}^d} \left( \widehat{\widetilde{f_1}}(\ell) - \widehat{\widetilde{f_2}}(\ell) \right) \cdot \overline{\widehat{h}(\ell)} \ge \eps/3.
    \end{equation*}

    By the definition of $\widetilde{f_1}$ and $\widetilde{f_2}$ and the definition of convolution, we have that 
    \[
    	\widehat{\widetilde{f_1}}(\ell) - \widehat{\widetilde{f_2}}(\ell) = \left( \widehat{f_1}(\ell) - \widehat{f_2}(\ell) \right) \cdot \widehat{J_{d,n}}(\ell)\,.
    \] 
    However, by \Cref{lmm:jackson-fourier}, we know that $\widehat{J_{d,n}}(\ell) = 0$ whenever $\|\ell\|_{\infty} \ge 2n$.
        Also, when $\|\ell\|_{\infty} = 0$, we have $| \widehat{h}(0^d) | \le \sqrt{d}$ from $\| h \|_{\infty} \le \sqrt{d}$. Combining this with the assumption that $| \widehat{f_1}(0^d) - \widehat{f_2}(0^d) | \le \kappa$ and the choice of $\kappa$, it is easy to see that 
        \[\left( \widehat{\widetilde{f_1}}(0^d) - \widehat{\widetilde{f_2}}(0^d) \right) \cdot \overline{\widehat{h}(0^d)} \le \frac{\eps}{6}\,.\]
    Hence,
    \begin{equation} \label{eq:pf-thm-wasserstein-d-dim-ub:sum-ell-2n-ge-eps-6}
        \sum_{0 < \|\ell\|_{\infty} \le 2n-1} \left( \widehat{f_1}(\ell) - \widehat{f_2}(\ell) \right) \cdot \widehat{J_{d,n}}(\ell) \cdot \overline{\widehat{h}(\ell)} \ge \frac{\eps}{6}.
    \end{equation}

    On the other hand, since $h$ is $1$-Lipschitz, for $0^d \neq \ell \in  \Z^d$ by \Cref{fact:Lipschitz-Fourier} we have that $\left\lvert\widehat{h}(\ell)\right\rvert \le 1 / \|\ell\|_2$. \Cref{lmm:jackson-fourier} tells us that $\left\lvert \widehat{J_{d,n}}(\ell) \right\rvert \le 1$. Since by assumption we have that $\left\vert \widehat{f_1}(\ell) - \widehat{f_2}(\ell) \right\vert \le \kappa$ for all $\ell$ with $\left\| \ell \right\|_{\infty} \le 2n-1 \le T$, we have
    \begin{align*}
        \sum_{0 < \|\ell\|_{\infty} \le 2n-1} \left( \widehat{f_1}(\ell) - \widehat{f_2}(\ell) \right) \cdot \widehat{J_{d,n}}(\ell) \cdot \overline{\widehat{h}(\ell)} 
        \le & \sum_{0 < \|\ell\|_{\infty} \le 2n-1} | \widehat{f_1}(\ell) - \widehat{f_2}(\ell) | \cdot | \widehat{J_{d,n}}(\ell) | \cdot | \overline{\widehat{h}(\ell)} | \\
        \le & \kappa \cdot \sum_{0 < \|\ell\|_{\infty} \le 2n-1} \frac{1}{\|\ell\|_2} \\
        \le & \kappa \cdot \sum_{0 < \|\ell\|_{\infty} \le 2n-1} \frac{1}{\|\ell\|_{\infty}} \\
        \le & \kappa \cdot \sum_{i=1}^{T} \frac{Q_i}{i},
    \end{align*}
    where $Q_i := \left\lvert \left\{ \ell~\middle\vert~\|\ell\|_{\infty} = i \right\} \right\rvert$ is the number of $\ell$ with $\ell_{\infty}$-distance exactly $i$. The last inequality holds since $T \ge 2n-1$.

    Using the trivial bound
    \begin{equation*}
        Q_i = (2i+1)^d - (2i-1)^d \le 2d \cdot (2i+1)^{d-1},
    \end{equation*}
    we have
    \begin{equation*}
        \sum_{i=1}^{T} \frac{Q_i}{i} \le 6d \cdot \sum_{i=1}^{2T+1} i^{d-2}.
    \end{equation*}
	Next, note that:
	\begin{itemize}
		\item When $d = 1$, this is upper bounded by $6 \cdot (\ln (2T+1) + 1) \le 30 \log (1/\eps)$. 
		\item When $d \ge 2$, this is upper bounded by $6d \cdot \frac{(2T+2)^{d-1}}{d-1} \le 12 \cdot \left(\frac{24\sqrt{d}}{\eps}\right)^{d-1}$.
	\end{itemize}
    In either case, by our choice of $\kappa$, it is easy to verify that
    \begin{equation*}
        \sum_{0 < \|\ell\|_{\infty} \le 2n-1} \left( \widehat{f_1}(\ell) - \widehat{f_2}(\ell) \right) \cdot \widehat{J_{d,n}}(\ell) \cdot \overline{\widehat{h}(\ell)} \le \kappa \cdot \sum_{i=1}^{T} \frac{Q_i}{i} < \frac{\eps}{6}\,,
    \end{equation*}
    which contradicts with  \Cref{eq:pf-thm-wasserstein-d-dim-ub:sum-ell-2n-ge-eps-6}. \qed

\subsection{Lower bounds}
\label{sec:wasserstein-lower-bounds}

We now turn to the proofs of \Cref{thm:low-dim-kappa-zero,thm:lb-infinite}. We also give a sharper lower bound matching \Cref{thm:multi-dim-Wass-ub-signals} up to constant factors when $d = 1$ in \Cref{subsubsec:1d}. 

\subsubsection{Proof of \Cref{thm:low-dim-kappa-zero}}
\label{sec:low-dim-kappa-zero}

We show in this section that even if we have the exact Fourier coefficients up to individual degree  $\sqrt{d}/(2\eps) - 1$ for a signal over the $d$-dimensional torus $\mathbb{T}^d$, we cannot reconstruct the signal within an $\eps$ Wasserstein distance.
In fact, this holds even for non-negative signals, i.e., for probability distributions. 

\WassLBKappaZero*

\begin{proof}
    Let $T' := \lfloor\sqrt{d}/(2\eps)\rfloor$. 
    Let $D_1$ be the uniform distribution over the set of all points of the form $(j_1/T', \dots, j_d/T')$ where $j_1, \dots, j_d \in \{0, 1, \dots, T' - 1\}$, and let $D_2$ be the uniform distribution over points of the form $(j_1/T' + 1/(2T'), \dots, j_d/T' + 1/(2T'))$ where $j_1, \dots, j_d \in \{0, 1, \dots, T' - 1\}$.

    We first argue that $\dW(D_1, D_2) \ge \eps$. This is true because for any $x \in D_1, y \in D_2$, 
    \[\dtor(x, y) = \sqrt{\dtor(x_1, y_1)^2 + \cdots + \dtor(x_d, y_d)^2} \ge \sqrt{d(1/(2T'))^2} \ge \eps.\]

    Below we fix any $\ell \in \mathbb{Z}^d$ such that $\|\ell\|_\infty \le \sqrt{d}/(2\eps) - 1$ and prove that $\widehat{D_1}(\ell) = \widehat{D_2}(\ell)$. Note that $\|\ell\|_\infty$ is an integer, so $\|\ell\|_\infty \le T' - 1$. 
    
    When $\ell = 0^d$, $\widehat{D_1}(\ell) = 1 = \widehat{D_2}(\ell)$ since $D_1$ and $D_2$ are distributions. 
    When $\ell \ne 0^d$, without loss of generality, suppose $\ell_1 \ne 0$. Since $|\ell_1| \le \|\ell\|_\infty \le T' - 1$, we have \[\sum_{j_1 = 0}^{T' - 1}e^{2\pi i\ell_1j_1/T'} = 0.\]
    Thus,
    \begin{align*}
    	\widehat{D_1}(\ell) &= \sum_{j \in \{0, 1, \dots, T' - 1\}^d}\frac1{T'^d}e^{2\pi i(\ell_1(j_1/T') + \cdots + \ell_d(j_d/T'))}\\
    &= \frac1{T'^d}\left(\sum_{j_1 = 0}^{T' - 1}e^{2\pi i\ell_1j_1/T'}\right) \cdots \left(\sum_{j_d = 0}^{T' - 1}e^{2\pi i\ell_dj_d/T'}\right)
    = 0\,,
    \end{align*}
    and similarly,
    \[\begin{split}
        \widehat{D_2}(\ell) &= \sum_{j \in \{0, 1, \dots, T' - 1\}^d}\frac1{T'^d}e^{2\pi i(\ell_1(j_1/T' + 1/(2T')) + \cdots + \ell_d(j_d/T' + 1/(2T')))}\\
        &= \frac{e^{\pi i(\ell_1 + \cdots + \ell_d)}}{T'^d}\left(\sum_{j_1 = 0}^{T' - 1}e^{2\pi i\ell_1j_1/T'}\right) \cdots \left(\sum_{j_d = 0}^{T' - 1}e^{2\pi i\ell_dj_d/T'}\right) = 0.
    \end{split}\]
    Therefore, for any $\ell \in \mathbb{Z}^d$ such that $\ell_\infty \le \sqrt{d}/(2\eps) - 1$, $\widehat{D_1}(\ell) = \widehat{D_2}(\ell)$.
\end{proof}

\begin{remark}
    It is easy to see that the L\'evy-Prokhorov distance between the distributions $D_1$ and $D_2$ constructed in the above proof is also at least $\eps$. Thus, we cannot reconstruct a signal over $\mathbb{T}^d$ within an $\eps$ L\'evy-Prokorov distance even if we have the exact Fourier coefficients up to individual degree $\sqrt{d}/(2\eps) - 1$.
\end{remark}

\subsubsection{Proof of \Cref{thm:lb-infinite}}
\label{sec:low-dim-T-infinity} 

We show in this section that for a signal over the $d$-dimensional torus $\mathbb{T}^d$, even if we are given all (infinitely many) Fourier coefficients each to within $\pm \eps^{0.249d}$ additive error, we cannot reconstruct the signal within an $\eps$ Wasserstein distance. In fact, this holds for non-negative signals, i.e., probability distributions.

\WassLBInfiniteT*

\begin{proof}
    Let $\kappa = \eps^{0.249d}$,  
    $M = \frac12(8\eps)^{-0.5d}$, and
    $n = 4\eps^{-1}\sqrt d$.
    Let $x_1, \dots, x_M, y_1, \dots, y_M \in \mathbb{T}^d$ be $2M$ points that satisfy:
    \begin{enumerate}
        \item[(i)] For all $1 \le j, k \le M$, $\dtor(x_j, y_k) > 4\eps$.
        \item[(ii)] For all $\ell \in \mathbb{Z}^d$ such that $0 < \|\ell\|_\infty \le 2n$, we have 
        \[
            \left|\frac1M \sum_{j = 1}^Me^{2\pi i(\ell \cdot x_j)}\right| < \frac\kappa2 
            \qquad\text{and}\qquad 
            \left|\frac1M \sum_{k = 1}^Me^{2\pi i(\ell \cdot y_k)}\right| < \frac\kappa2\,.
        \]
    \end{enumerate}
    We show the existence of such $x_1, \dots, x_M, y_1, \dots, y_M$ in \Cref{claim:uniform-points-multi-dim-lb}.
    
    Now let $D'_1$ be the uniform distribution over $x_1, \dots, x_M$, and let $D_1 := D'_1 * J_{d, n}$ (so $D'_1$ is the uniform mixture of $M$ Jackson's kernel distributions centered at the points $x_1,\dots,x_M$). 
    Similarly, let $D'_2$ be the uniform distribution over $y_1, \dots, y_M$, and let $D_2 := D'_2 * J_{d, n}$.

    We first argue that $\dW(D_1, D_2) \ge \eps$. 
    By \Cref{lmm:jackson-expected-distance}, $\Ex_{\bx \sim J_{d, n}}[\dtor(\bx, 0^d)] \le \sqrt{d}/n$, so by Markov's inequality,
    \[\Prx_{\bx\sim J_{d, n}}\sbra{\dtor(\bx, 0^d) \le \eps} \ge 1 - \frac{\sqrt d}{\eps n} \ge 0.75.\]
    Recall that we defined $D_1'$ to be the uniform distribution over $x_1, \dots, x_M$ and $D_1 = D_1' * J_{d, n}$, so
    \[\Prx_{\bx \sim D_1}\sbra{\exists 1 \le j \le M\text{ s.t. }\dtor(\bx, x_j) \le \eps} \ge 0.75.\]
    Similarly,
    \[\Prx_{\by \sim D_2}\sbra{\exists 1 \le k \le M\text{ s.t. }\dtor(\by, y_k) \le \eps} \ge 0.75.\]
    Let $C$ be an arbitrary coupling of $D_1$ and $D_2$.  By a union bound,
    \[\Prx_{(\bx, \by) \sim C}\sbra{\exists 1 \le j, k \le M\text{ s.t. }\dtor(\bx, x_j) \le \eps\text{ and }\dtor(\by, y_k) \le \eps} \ge 0.5.\]
    Recall that $\dtor(x_j, y_k) > 4\eps$ for all $1 \le j, k \le M$, so 
    $\Prx_{(\bx, \by) \sim C}[\dtor(\bx, \by) > 2\eps] \ge 0.5$. This implies \[\dW(D_1, D_2) \ge \eps\] since $C$ is an arbitrary coupling of $D_1, D_2$.

    Now we show that for any $\ell \in \mathbb{Z}^d$, $\left|\widehat{D_1}(\ell) - \widehat{D_2}(\ell)\right| \le \kappa$. When $\|\ell\|_\infty \ge 2n$, by \Cref{lmm:jackson-fourier}, $\widehat{J_{d, n}}(\ell) = 0$, so $\widehat{D_1}(\ell) = 0 = \widehat{D_2}(\ell)$. When $\ell = 0^d$, $\widehat{D_1}(\ell) = 1 = \widehat{D_2}(\ell)$ since $D_1$ and $D_2$ are distributions. When $0 < \|\ell\|_\infty < 2n$, by \Cref{lmm:jackson-fourier}, $\left|\widehat{J_{d, n}}(\ell)\right| \le 1$, so 
    \[\left|\widehat{D_1}(\ell) - \widehat{D_2}(\ell)\right| = \left|\widehat{J_{d, n}}(\ell)\right| \cdot \left|\widehat{D_1'}(\ell) - \widehat{D_2'}(\ell)\right| \le \left|\frac1M\sum_{j = 1}^Me^{2\pi i(\ell \cdot x_j)} - \frac1M\sum_{k = 1}^Me^{2\pi i(\ell \cdot y_k)}\right| < \kappa,\]
    where the final uses (ii) and the triangle inequality.
\end{proof}

\begin{claim}\label{claim:uniform-points-multi-dim-lb}
    There exists $x_1, \dots, x_M, y_1, \dots, y_M \in \mathbb{T}^d$ that satisfy:
    \begin{enumerate}
        \item[(i)] For all $1 \le j, k \le M$, $\dtor(x_j, y_k) > 4\eps$.
        \item[(ii)] For all $\ell \in \mathbb{Z}^d$ such that $0 < \|\ell\|_\infty < 2n$, $\frac1M\left|\sum_{j = 1}^Me^{2\pi i(\ell \cdot x_j)}\right| < \frac\kappa2$ and $\frac1M\left|\sum_{k = 1}^Me^{2\pi i(\ell \cdot y_k)}\right| < \frac\kappa2$.
    \end{enumerate}
\end{claim}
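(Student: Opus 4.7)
The plan is to apply the probabilistic method: sample the $2M$ points $x_1,\ldots,x_M,y_1,\ldots,y_M$ independently and uniformly at random from $\mathbb{T}^d$, and show that both conditions (i) and (ii) hold simultaneously with positive probability, provided $\eps \leq \eps_0$ for a sufficiently small universal constant $\eps_0$. Condition (i) is the easier part: for each fixed pair $(j,k)\in[M]^2$, the event $\dtor(x_j,y_k) \leq 4\eps$ has probability equal to the Lebesgue measure of a toroidal ball of radius $4\eps$ in $\mathbb{T}^d$. Such a ball is contained in a (toroidal) $\ell_\infty$-cube of side $8\eps$ (valid whenever $\eps < 1/8$), so its measure is at most $(8\eps)^d$. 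A union bound over the $M^2$ pairs yields failure probability at most $M^2 (8\eps)^d = \tfrac{1}{4}$, which is precisely how the factor $\tfrac{1}{2}$ in the definition of $M$ is used.

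For condition (ii), fix a nonzero $\ell \in \mathbb{Z}^d$ with $\|\ell\|_\infty < 2n$. Each summand $e^{2\pi i(\ell \cdot x_j)}$ has modulus $1$ and mean $0$ (since $x_j$ is uniform on $\mathbb{T}^d$ and $\ell \neq 0^d$). By Hoeffding's inequality applied to the real and imaginary parts separately,
\[
    \Pr\!\left[\,\left|\tfrac{1}{M}\sum_{j=1}^M e^{2\pi i(\ell \cdot x_j)}\right| > \tfrac{\kappa}{2}\,\right] \;\leq\; 4\exp\!\left(-\tfrac{M\kappa^2}{16}\right).
\]
Taking a union bound over the at most $(4n)^d$ choices of $\ell$ with $0 < \|\ell\|_\infty < 2n$ and over both point sets $\{x_j\}$ and $\{y_k\}$, the failure probability for condition (ii) is at most $8(4n)^d\exp(-M\kappa^2/16)$.

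To close the argument, I would verify that this last quantity is at most $\tfrac{1}{4}$. Substituting the given parameters gives $M\kappa^2 = \tfrac{1}{2}(8\eps)^{-d/2}\eps^{0.498d} = \tfrac{1}{2}(8\eps^{0.004})^{-d/2}$. The specific exponent $0.249d$ in the definition of $\kappa$ (strictly below $d/4$) is chosen precisely so that the net exponent of $\eps$ in $M\kappa^2$ is negative ($-0.002d$): for $\eps_0$ small enough that $8\eps_0^{0.004} < 1$ (for instance, any $\eps_0 \leq 8^{-250}$), the quantity $M\kappa^2$ grows exponentially in $d$, whereas $\log[8(4n)^d] = O(d\log(\sqrt d/\eps))$ grows only polynomially. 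The main obstacle is to calibrate $\eps_0$ so that $M\kappa^2/16 \geq \log 32 + d \log(4n)$ holds \emph{uniformly} across all $d \geq 1$ with a single universal constant; the tightest case turns out to be $d=1$, for which one needs $\eps_0^{-0.002} \gtrsim \log(1/\eps_0)$, readily satisfied by any sufficiently small (though concretely tiny) universal constant. Combining the two union bounds gives total failure probability at most $\tfrac{1}{2}$, so the desired points $x_1,\dots,x_M,y_1,\dots,y_M$ exist.
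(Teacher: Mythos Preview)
Your proof is correct and follows essentially the same approach as the paper: both use the probabilistic method with uniform random points, bound the failure probability for (i) via the ball-in-cube inclusion and a union bound over the $M^2$ pairs, and handle (ii) via Hoeffding on the real and imaginary parts followed by a union bound over the $O((4n)^d)$ frequencies and the two point sets. Your parameter verification is somewhat more explicit than the paper's (which simply invokes ``$\eps_0$ sufficiently small''), but the argument is otherwise identical.
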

\begin{proof}
The proof is by the probabilistic method: we uniformly sample $\bx_1, \dots, \bx_M, \by_1, \dots, \by_M$ from $\mathbb{T}^d$ and show that the outcome has the desired properties with positive probability.

    For any $1 \le j, k \le M$, we have
    \[\Pr[\dtor(\bx_j, \by_k) \le 4\eps] \le (8\eps)^d,\]
    so by union bound over all $1 \le j, k \le M$,
    \[\Pr[\bx_1, \dots, \bx_M, \by_1, \dots, \by_M\text { satisfy (i)}] \ge 1 - M^2(8\eps)^d \ge 0.75.\]

    For any $\ell \in \mathbb{Z}^d \setminus \{0^d\}$, without loss of generality assume $\ell_1 \ne 0$, then for any $1 \le j \le M$,
    \[\E[e^{2\pi i(\ell \cdot \bx_j)}] = \int_{\mathbb{T}^d}e^{2\pi i(\ell \cdot x)}dx = \int_{\mathbb{T}^d}e^{2\pi i\ell_1x_1}e^{2\pi i(\ell_2x_2 + \cdots + \ell_dx_d)}dx = 0\]
    since $\int_{\mathbb{T}}e^{2\pi i\ell_1x_1}dx_1 = 0$. Moreover, $|e^{2\pi i(\ell \cdot \bx_j)}| = 1$, so by Hoeffding's inequality,
    \begin{align*}
        \Pr\left[\left|\sum_{j = 1}^M\Re(e^{2\pi i(\ell \cdot \bx_j)})\right| \ge \frac{\kappa M}{2\sqrt2}\right] &\le 2e^{-\kappa^2 M/16}\\
        \Pr\left[\left|\sum_{j = 1}^M\Im(e^{2\pi i(\ell \cdot \bx_j)})\right| \ge \frac{\kappa M}{2\sqrt2}\right] &\le 2e^{-\kappa^2 M/16}
    \end{align*}
    and thus
    \[\Pr\left[\left|\sum_{j = 1}^Me^{2\pi i(\ell \cdot \bx_j)}\right| \ge \frac{\kappa M}{2}\right] \le 4e^{-\kappa^2 M/16}.\]
    Similarly, \[\Pr\left[\left|\sum_{j = 1}^Me^{2\pi i(\ell \cdot \by_k)}\right| \ge \frac{\kappa M}{2}\right] \le 4e^{-\kappa^2 M/16}.\]
    Therefore, by union bound over all $\ell \in \mathbb{Z}^d$ such that $0 < \|\ell\|_\infty < 2n$, 
    \[\Pr[\bx_1, \dots, \bx_M, \by_1, \dots, \by_M\text { satisfy (ii)}] \ge 1 - 8(4n - 1)^de^{-\kappa^2 M/16} \ge 0.75\]
    where the last inequality holds because $0 < \eps \leq \eps_0$ where $\eps_0$ is sufficiently small.
    It follows that
    \[\Pr[\bx_1, \dots, \bx_M, \by_1, \dots, \by_M\text { satisfy (i) and (ii)}] \ge 0.5,\]
    and the claim is proved.
\end{proof}

\begin{remark}
    By an argument similar to that for \Cref{thm:lb-infinite} (consider the union of $\eps$-radius balls centered at $x_j$), the L\'evy-Prokhorov distance between $D_1$ and $D_2$ is at least $\eps$. Thus, we cannot reconstruct a signal over $\mathbb{T}^d$ within an $\eps$ L\'evy-Prokhorov distance even if we have all Fourier coefficients each with $\eps^{0.249d}$ error.
\end{remark}




\subsubsection{A sharper one-dimensional lower bound}
\label{subsubsec:1d}

For the one-dimensional case, we can give a stronger lower bound than the one obtained by taking $d=1$ in \Cref{sec:low-dim-T-infinity}. This stronger lower bound implies that the upper bound obtained by taking $d=1$ in \Cref{sec:d-dim-ub-structural} is essentially the best possible. In more detail, we show that $\eps$-accurate Wasserstein reconstruction is impossible even if we are given $\pm 4 \eps$-accurate estimates of every Fourier coefficient, and even if the signal is promised to be non-negative (i.e.~it is some distribution $D$):
\begin{observation} \label{obs:1dlowerinfiniteT}
There are two distributions $D_1,D_2$ over the 1-dimensional torus $\mathbb{T}$ with the following properties:

\begin{itemize}
    \item $\abs{\widehat{D_1}(\ell) -\widehat{D_2}(\ell)} \leq 4 \eps$ for all $\ell \in \Z$, but
    \item $\dW(D_1,D_2) \geq \eps.$
\end{itemize}
\end{observation}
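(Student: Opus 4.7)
The plan is to exhibit an explicit pair of two-point distributions whose Fourier coefficients nearly coincide but whose Wasserstein distance is exactly $\eps$. The key observation is that the two points $0$ and $1/2$ on $\mathbb{T}$ are at the maximum possible toroidal distance, namely $1/2$, so a small probability mass transported between them contributes optimally to Wasserstein distance; simultaneously, the Fourier characters $e^{2\pi i \ell \cdot (1/2)} = (-1)^\ell$ differ from $e^{2\pi i \ell \cdot 0} = 1$ by at most $2$, keeping Fourier differences uniformly small.

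Specifically, I would define
\[
    D_1 := \delta_0 \qquad \text{and} \qquad D_2 := (1-2\eps)\, \delta_0 + 2\eps\, \delta_{1/2}.
\]
Both are probability distributions on $\mathbb{T}$. First I would compute the Wasserstein distance: the unique coupling transports $(1-2\eps)$ units of mass from $0$ to $0$ (at cost $0$) and $2\eps$ units from $0$ to $1/2$ (at toroidal cost $1/2$), yielding $\dW(D_1, D_2) = 2\eps \cdot (1/2) = \eps$, as required. (One could also verify this via Kantorovich--Rubinstein duality using $h(x) = \dtor(x, 0)$.)

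Next I would compute the Fourier coefficients. Using that $\widehat{\delta_t}(\ell) = e^{2\pi i \ell t}$, one gets $\widehat{D_1}(\ell) = 1$ and $\widehat{D_2}(\ell) = (1-2\eps) + 2\eps \cdot (-1)^\ell$, so
\[
    \bigl|\widehat{D_1}(\ell) - \widehat{D_2}(\ell)\bigr| = 2\eps \bigl|1 - (-1)^\ell\bigr| = \begin{cases} 0 & \ell \text{ even,} \\ 4\eps & \ell \text{ odd.}\end{cases}
\]
In particular the bound $4\eps$ holds for every $\ell \in \mathbb{Z}$, which is exactly what is claimed. I do not expect any real obstacle here; the entire proof is a one-line construction and a direct calculation, and the only conceptual point worth highlighting is that the factor $4$ is tight for this construction because at odd frequencies the two characters at antipodal torus points are as far apart as possible.
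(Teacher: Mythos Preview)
Your proposal is correct and takes essentially the same approach as the paper: the same explicit construction $D_1=\delta_0$, $D_2=(1-2\eps)\delta_0+2\eps\,\delta_{1/2}$, followed by the same direct computation of the Wasserstein distance and Fourier coefficients.
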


\begin{proof}
The construction is extremely simple: we take $D_1 := \delta_0$, i.e.~the distribution $D_1$ is a point mass at 0, and we take $D_2 = (1-2\eps)\cdot \delta_0 + 2\eps \cdot  \delta_{1/2}$, i.e.~a draw from $D_1$ outputs 0 with probability $1-2\eps$ and outputs $1/2$ with the remaining $2\eps$ probability.

It is clear that $\dW(D_1,D_2) = \eps,$ and we have
\[
\widehat{D_1}(\ell) = \int_{z \in [0,1)} \delta_0(z) e^{2 \pi i \ell z} dz = e^{2 \pi i \ell \cdot 0} = 1 \quad \quad \text{for all }\ell \in \Z,
\]
\begin{align*}
\widehat{D_2}(\ell) &= \int_{z \in [0,1)} \pbra{(1-2\eps)\cdot \delta_0(z) + 2\eps \cdot \delta_{1/2}(z)} e^{2 \pi i \ell z} dz\\
&= (1-2\eps) + (2\eps) e^{\pi i \ell}\\ 
&= 
\begin{cases}
1 & \text{~if $\ell$ is even}\\
1-4\eps & \text{~if $\ell$ is odd}
\end{cases},
\end{align*}
so $\abs{\widehat{D_1}(\ell) - \widehat{D_2}(\ell)} \leq 4 \eps$ for all $\ell \in \Z.$
\end{proof}



\subsection{An algorithm for Wasserstein reconstruction} \label{sec:wasserstein-algo}

In this section we show how the upper bounds for Wasserstein reconstruction that were established in \Cref{sec:multi-dim-Wasserstein} can be used to design an algorithm which, given the (noisy) low-frequency Fourier coefficients of some signal $f$, outputs a signal that is close to $f$ in Wasserstein distance. The idea is to discretize the space and use linear programming to find a hypothesis signal that has Fourier coefficients which are close to those of $f$; using the upper bounds established earlier,  such a hypothesis signal must be Wasserstein-close to $f$. 

We first formally state the theorem:

\begin{theorem} \label{thm:wasserstein-algorithm}
    There is an algorithm that satisfies the following: Fix any $d \ge 1$ and $0 < \eps < 1/8$, and let $T$ and $\kappa$ be as in \Cref{thm:multi-dim-Wass-ub}, i.e.,
     \[
    \text{let~}T = \left\lceil 6\sqrt{d}/\eps \right\rceil,
    \quad \text{and let~} 
    \kappa = \begin{cases}
        0.001 \eps / \log(1/\eps) & \text{~when~}d = 1\\
        \left(0.01 \eps / \sqrt{d}\right)^d & \text{~when~}d \ge 2
        \end{cases}.
    \]
    The algorithm takes as input $(2T+1)^d$ complex numbers $\{u_\ell~\vert~\ell \in \mathbb{Z}^d, \| \ell \|_{\infty} \le T\}$ and outputs a signal $g$ such that for any signal $f$ satisfying $| \widehat{f}(\ell) - u_\ell | \le \kappa/8$ for all $\| \ell \|_{\infty} \le T$, we have $\dW(f, g) \le 4\eps$. 
    The algorithm runs in time ${\pbra{d/\eps}^{O(d^2)}}$. In particular, for constant $d$, its running time is polynomial in the length of its input.
\end{theorem}



Regarding the representation of the hypothesis signal, the hypothesis is a ``Dirac comb'' signal supported on  $(d/\eps)^{O(d^2)}$ discrete points; the algorithm  outputs the location of the points of the Dirac comb and their associated (positive or negative) masses.

For convenience, throughout this section we write $[N]$ to denote $\{0, 1, \dots, N-1\}$.

\begin{proof}[Proof of \Cref{thm:wasserstein-algorithm}.]
    Let $K := \lceil 100d \cdot (2n)^{d+1}/\kappa \rceil$. We discretize $\mathbb{T}^d$ into $K^d$ points 
    \[
        (j_1/K, j_2/K, \dots, j_d/K)
    \]
    where $j$ ranges over $[K]^d$.
    
    The following notation will be useful: we use $a \in \mathbb{R}^{K^d}$ to represent a sequence of real numbers (which can be positive or negative),  $a = \left\{a_j~\middle\vert~j = (j_1, \dots, j_d) \text{~where~} 0 \le j_1, \dots, j_d < K \right\}$. We define the (possibly generalized) signal
    \begin{equation*}
        \Dirac_a(x) := \sum_{j \in [K]^d} a_j \cdot \delta_{(j_1/K, j_2/K, \dots, j_d/K)}(x).
    \end{equation*}
    When $\sum_j \lvert a_j \rvert = 1$, this is a bona fide (normalized) signal with mass $a_j$ at the point $(j_1/K, \dots, j_d/K)$. Let $\mathcal{U}_{[0,1/K)^d}$ denote the uniform distribution on $[0,1/K)^d$, and let
    \begin{equation*}
    \widetilde{\Dirac}_a(x) := \Dirac_a(x)~*~\mathcal{U}_{[0,1/K)^d}
    \end{equation*}
    where $*$ stands for convolution. In words, when $\sum_j \lvert a_j \rvert = 1$, then $\widetilde{\Dirac}_a$  is the signal with $\widetilde{\Dirac}_a(x) = K^d \cdot a_j$ in the rectangle $\left[\frac{j_1}{K}, \frac{j_1+1}{K}\right) \times \left[\frac{j_2}{K}, \frac{j_2+1}{K}\right) \times \dots \times \left[\frac{j_d}{K}, \frac{j_d+1}{K}\right)$.

    Given the input $\{u_\ell~\vert~\ell \in \mathbb{Z}^d \text{~s.t.~} \| \ell \|_{\infty} \le T\}$, let $f$ be the target signal with $ \widehat{f}(\ell) - u_\ell \le \kappa$ for all $\| \ell \|_{\infty} \le T$. Our algorithm has three steps which we describe and analyze below (the algorithm is also summarized in \Cref{alg:wasserstein-reconstruction}).
    
    \paragraph{Step 1: Jackson smoothing.} The first step is to apply a smoothing on $u$ and $f$ using Jackson's kernel (\Cref{def:jackson-kernel}). Let $n := \left\lceil \sqrt{d}/\eps \right\rceil$, and compute $\widetilde{u}_\ell := u_\ell \cdot \widehat{J_{d,n}}(\ell)$ for each $\ell \in \Z^d$ with $\|\ell\|_\infty \leq T$.
    
    Consider the function $\widetilde{f} := f * J_{d,n}$. This function has three properties that will be useful later:
    \begin{enumerate}
        \item \label{enum:tilde-f:wasserstein-distance} By \Cref{lmm:jackson-expected-distance}, $\dW(f, \widetilde{f}) \le \sqrt{d}/n \le \eps$.
        \item \label{enum:tilde-f:fourier} By \Cref{lmm:jackson-fourier}, $| \widehat{J_{d,n}}(\ell) | \le 1$. Combining with the assumption that $| \widehat{f}(\ell) - u_\ell | \le \kappa/8$ for all $\| \ell \|_{\infty} \le T$, we also have $| \widehat{\widetilde{f}}(\ell) - \widetilde{u}_\ell | \le \kappa/8$ for all $\| \ell \|_{\infty} \le T$.
        \item \label{enum:tilde-f:lipschitz} By \Cref{lmm:jackson-lipschitz} and the fact that $\int_{\mathbb{T}^d} \lvert f(x) \rvert dx = 1$, $\widetilde{f}$ is $(3\pi n^2(3n/2)^{d - 1}\sqrt d)$-Lipschitz.
        This uses the standard fact that for any signal $f$ and Lipschitz function $g$, the function $f * g$ inherits the Lipschitz constant of $g$ \cite{StackExchange2011PreservationLipschitzConvolution}. 
    \end{enumerate}

    \paragraph{Step 2: Linear programming.} The second step is to use linear programming to find some $a \in \mathbb{R}^{K^d}$ such that $\widetilde{\Dirac}_a$ is a signal that has Fourier coefficients close to $\widetilde{u}$. By \Cref{thm:multi-dim-Wass-ub}, this means that the Wasserstein distance between $\widetilde{\Dirac}_a$ and $\widetilde{f}$ is small.

    There is an obvious obstacle in the way of a naive formalization of the constraints on such an $a$ using linear programming: this is that the condition that $\widetilde{\Dirac}_a$ is a normalized signal, i.e., $\sum_{j \in [K]^d} \lvert a_j \rvert = 1$, is not linear. Instead, we will find some vector $a$ whose coordinates have total magnitude \emph{at most} $1$. We will deal with the possibility $\sum_{j \in [K]^d} \lvert a_j \rvert < 1$ later, in Step 3.

    For any $j \in [K]^d$, let
    \begin{equation*}
        c_{j,\ell} := \int_{\left[\frac{j_1}{K}, \frac{j_1+1}{K}\right) \times \left[\frac{j_2}{K}, \frac{j_2+1}{K}\right) \times \dots \times \left[\frac{j_d}{K}, \frac{j_d+1}{K}\right)} e^{2\pi i(\ell \cdot x)}~dx.
    \end{equation*}
    For any complex number $z \in \mathbb{C}$, denote the real part and the imaginary part of $z$ by $\Re(z)$ and $\Im(z)$, respectively. Consider the following linear program whose indeterminates are the coordinates $a_j$ of $a$:
\begin{align}
        \label{eq:wasserstein-ub:lp}
        \text{Find } &\{a_j\}_{j \in [K]^d} \\[3pt]
        \text{Subject to } 
        &\sum_{j \in [K]^d} |a_j| \le 1, \nonumber\\[3pt]
        &\bigg|\, K^d \sum_{j \in [K]^d} \Re(c_{j,\ell})\, a_j - \Re(\wt{u}_\ell) \,\bigg| 
           \le \frac{\kappa}{4} ,
           && \forall\, \ell \in \mathbb{Z}^d : \|\ell\|_\infty \le T, \nonumber\\[3pt]
        &\bigg|\, K^d \sum_{j \in [K]^d} \Im(c_{j,\ell})\, a_j - \Im(\wt{u}_\ell) \,\bigg| 
           \le \frac{\kappa}{4},
           && \forall\, \ell \in \mathbb{Z}^d : \|\ell\|_\infty \le T. \nonumber
    \end{align}
    In words, the second and third set of inequalities essentially say that the real parts and imaginary parts of $\widehat{\widetilde{\Dirac}_a}(\ell)$ are close to those of $\widetilde{u}_\ell$, respectively.

    We first show that the linear program (\ref{eq:wasserstein-ub:lp}) is satisfiable. Consider the following candidate solution $b \in \mathbb{R}^{K^d}$:
    \begin{align}
        b_j &:= K^{-d} \cdot \widetilde{f}(x) \text{~where $x$ minimizes $\left\lvert \widetilde{f}(x) \right\rvert$ within the rectangle} \nonumber \\
        & \ \ \ \ \left[\frac{j_1}{K}, \frac{j_1+1}{K}\right) \times \left[\frac{j_2}{K}, \frac{j_2+1}{K}\right) \times \dots \times \left[\frac{j_d}{K}, \frac{j_d+1}{K}\right). \label{eq:gatorade}
    \end{align}
    
    By definition,
    \begin{equation*}
        \sum_{j \in [K]^d} \lvert b_j \rvert = \int_{\mathbb{T}^d} \widetilde{\Dirac}_b(x)~dx \le \int_{\mathbb{T}^d} \left\lvert \widetilde{f}(x) \right\rvert~dx = 1,
    \end{equation*}
    satisfying the first constraint.

    We know by Property \ref{enum:tilde-f:lipschitz} that $\widetilde{f}$ is $(3\pi n^2(3n/2)^{d - 1}\sqrt d)$-Lipschitz, so we also have 
    \begin{equation*}
        \left\| \widetilde{\Dirac}_b - \widetilde{f} \right\|_{\infty} \le \frac{\sqrt{d}}{K} \cdot (3\pi n^2(3n/2)^{d - 1}\sqrt d) \le \kappa/8\,,
    \end{equation*}
    where we use that, on every rectangle (cf.~\Cref{eq:gatorade}), $\wt{\Dirac}_b$ agrees with $\wtf$ at some point; the Lipschitzness of  $\wtf$ and the bound $K$ on the rectangle's diameter then yields the desired uniform bound.
    Since for any (possibly generalized) signal $g$ and any $\ell \in \Z^d$ we have $|\widehat{g}(\ell)| \leq \|g\|_\infty$, for any $\ell \in \mathbb{Z}^d$ with $\|\ell\|_{\infty} \le T$ we have
    \begin{equation*}
        \left\lvert \widehat{\widetilde{\Dirac}_b}(\ell) - \widehat{\widetilde{f}}(\ell) \right\rvert \le \kappa/8.
    \end{equation*}
    We also know from Property \ref{enum:tilde-f:fourier} that
    \begin{equation*}
        | \widehat{\widetilde{f}}(\ell) - \widetilde{u}_\ell| \le \kappa/8, 
    \end{equation*}
    so by the triangle inequality we have
    \begin{equation*}
        \left\lvert \widehat{\widetilde{\Dirac}_b}(\ell) - \widetilde{u}_\ell \right\rvert \le \kappa/4.
    \end{equation*}
    Therefore, $b$ also satisfies the second and third inequalities of (\ref{eq:wasserstein-ub:lp}), and hence the linear program in \Cref{eq:wasserstein-ub:lp} is indeed feasible.

    Therefore, by solving \Cref{eq:wasserstein-ub:lp} with, say, the interior point method with error $\delta := \kappa/8$, in $\poly(K^d, \log(1/\delta))$ time we can find some $a^* \in \mathbb{R}^{K^d}$ satisfying the following linear program: 
    \begin{align}
        \label{eq:wasserstein-ub:lp-solved}
        \text{Find } &\{a^*_j\}_{j \in [K]^d} \\[3pt]
        \text{Subject to } 
        &\sum_{j \in [K]^d} |a^*_j| \le 1 + \delta, \nonumber\\[3pt]
        &\bigg|\, K^d \sum_{j \in [K]^d} \Re(c_{j,\ell})\, a^*_j - \Re(\wt{u}_\ell) \,\bigg| 
           \le \frac{\kappa}{4} + \delta,
           && \forall\, \ell \in \mathbb{Z}^d : \|\ell\|_\infty \le T, \nonumber\\[3pt]
        &\bigg|\, K^d \sum_{j \in [K]^d} \Im(c_{j,\ell})\, a^*_j - \Im(\wt{u}_\ell) \,\bigg| 
           \le \frac{\kappa}{4} + \delta,
           && \forall\, \ell \in \mathbb{Z}^d : \|\ell\|_\infty \le T. \nonumber
    \end{align}

    By the second and third sets of inequalities, for any $\ell \in \mathbb{Z}^d$ with $\|\ell\|_{\infty} \le T$, we have
    \begin{equation*}
        \left\lvert \widehat{\widetilde{\Dirac}_{a^*}}(\ell) - \widetilde{u}_\ell \right\rvert \le \kappa/2 + 2\delta \le 3\kappa/4.
    \end{equation*}
    Using Property \ref{enum:tilde-f:fourier} of $\widetilde{f}$ and the triangle inequality, we have
    \begin{equation*}
        \left\lvert \widehat{\widetilde{\Dirac}_{a^*}}(\ell) - \widehat{\widetilde{f}}(\ell) \right\rvert \le 3\kappa/4 + \kappa/8 < \kappa.
    \end{equation*}
    By \Cref{thm:multi-dim-Wass-ub}, this means that $\dW(\widetilde{\Dirac}_{a^*}, \widetilde{f}) \le \eps$.

    \paragraph{Step 3: Re-normalizing $a^*$.} Before re-normalizing, we first move to a Dirac comb function. Consider the function $\Dirac_{a^*}$; by the definition of $\widetilde{\Dirac}_{a^*}$, we have
    \begin{equation*}
        \dW (\Dirac_{a^*}, \widetilde{\Dirac}_{a^*}) \le \frac{\sqrt{d}}{K} \le \eps.
    \end{equation*}
    
    The next step is to re-normalize $a^*$ so that the absolute values sum up to $1$, so that the Dirac comb function becomes a signal. There are two cases here depending on $\gamma := \sum_{j \in [K]^d} \lvert a^*_j \rvert$.

    The first case is when $\gamma < 1$. Let $x_0$ and $x_1$ be two arbitrary points on $\mathbb{T}^d$ with Euclidean distance at most $\eps$ that are not on the grid $\{(j_1/K, j_2/K, \dots, j_d/K) \mid j \in [K]^d\}$ (for concreteness, take $x_0 := (1/(3K), 0, \dots, 0)$ and $x_1 := (2/(3K), 0, \dots, 0)$). Then $\|x_0 - x_1 \|_2 = 1/(3K) \le \eps$. Then define
    \begin{equation*}
        g(x) := \Dirac_{a^*}(x) + \frac{1-\gamma}{2} \delta_{x_0}(x) - \frac{1-\gamma}{2} \delta_{x_1}(x).
    \end{equation*}
    By definition, $\int_{\mathbb{T}^d} g(x) dx = 1$, and $\dW(g, \Dirac_{a^*}) \le (1-\gamma) \cdot \|x_0 - x_1 \|_2 \le \eps$.

    The second case is when $\gamma \ge 1$. By the first inequality of \Cref{eq:wasserstein-ub:lp-solved}, we have $\gamma \le 1+\delta = 1 + \kappa/8$. Define
    \begin{equation*}
        \widecheck{a^*_j} := \frac{1}{\gamma} \cdot a^*_j
    \end{equation*}
    and consider the signal.
    \begin{equation*}
        g(x) := \Dirac_{\widecheck{a^*}}(x)
    \end{equation*}
    By the definition of $\widecheck{a^*}$, it is easy to see that $\int_{\mathbb{T}^d} g(x) dx = 1$. Moreover,
    \begin{equation*}
        \dW(g, \Dirac_{a^*}) \le \sqrt{d} \cdot \sum_{j \in [K]^d} \lvert a^*_j - \widecheck{a^*_j} \rvert = \sqrt{d} \cdot (\gamma - 1) \le \eps.
    \end{equation*}

    \paragraph{Putting everything together.} In both of the above two cases, we get a signal $g$ with $\dW(g, \Dirac_{a^*}) \le \eps$. Putting everything together, we have
    \begin{equation*}
        \dW(f, g) \le \dW(f, \widetilde{f}) + \dW(\widetilde{f}, \widetilde{\Dirac}_{a^*}) + \dW(\widetilde{\Dirac}_{a^*}, \Dirac_{a^*}) + \dW(\Dirac_{a^*}, g) \le 4\eps.
    \end{equation*}
    The final algorithm is shown in \Cref{alg:wasserstein-reconstruction}.
\end{proof}

\begin{algorithm}[t]
    \caption{Wasserstein reconstruction given noisy low-frequency Fourier coefficients} \label{alg:wasserstein-reconstruction}

    \SetKwInOut{Input}{Input}
    \SetKwInOut{Output}{Output}

    \Input{$\{u_\ell : \ell \in \mathbb{Z}^d \text{~such that~} \| \ell \|_{\infty} \le T\}$}
    \Output{A signal $g$ s.t. $\dW(f,g) \le 4\eps$ for any $f$ with $\left\lvert \widehat{f}(\ell) - u_\ell \right\vert \le \kappa/8$ for all $\| \ell \|_{\infty} \le T$}

    \BlankLine
    \BlankLine

    Let $K \gets \lceil 100d \cdot (2n)^{d+1}/\kappa \rceil$, $n \gets \lceil \sqrt{d}/\eps \rceil$, $\delta \gets \kappa/8$\;
    Let $\widetilde{u}_\ell \gets u_\ell \cdot \widehat{J_{d,n}}(\ell)$\;
    Solve the LP in \Cref{eq:wasserstein-ub:lp} to obtain some $a^* \in \mathbb{R}^{K^d}$ satisfying \Cref{eq:wasserstein-ub:lp-solved}\;
    \eIf{$\gamma := \sum_{j \in [K]^d} \lvert a^*_j \rvert < 1$}{
        Output $g(x) := \Dirac_{a^*}(x) + \frac{1-\gamma}{2} \delta_{(1/(3K), 0, \dots, 0)}(x) - \frac{1-\gamma}{2} \delta_{(2/(3K), 0, \dots, 0)}(x).$
    }{
        Let $\widecheck{a^*_i} \gets a^*_i/\gamma$\;
        Output $g(x) := \Dirac_{\widecheck{a^*}}(x)$
    }
\end{algorithm}

\begin{remark}
    We note that when $f$ is a distribution instead of a signal, our algorithm can be modified to also output a distribution $g$. In this case, the linear program in \Cref{eq:wasserstein-ub:lp} should be replaced by
    \begin{align*}
    \text{Find } &\{a_j\}_{j \in [K]^d} \\[3pt]
    \text{Subject to }
    &\sum_{j \in [K]^d} a_j = 1, \\[3pt]
    &a_j \ge 0, && \forall\, j \in [K]^d, \\[3pt]
    &\bigg|\, K^d \sum_{j \in [K]^d} \Re(c_{j,\ell})\, a_j - \Re(\wt{u}_\ell) \,\bigg| \le \frac{\kappa}{4},
      && \forall\, \ell \in \mathbb{Z}^d : \|\ell\|_\infty \le T, \\[3pt]
    &\bigg|\, K^d \sum_{j \in [K]^d} \Im(c_{j,\ell})\, a_j - \Im(\wt{u}_\ell) \,\bigg| \le \frac{\kappa}{4},
      && \forall\, \ell \in \mathbb{Z}^d : \|\ell\|_\infty \le T.
    \end{align*}
    
    We can no longer proceed as we did formerly 
    in the $\gamma < 1$ case. Fortunately, we can  write down the constraint that $\sum_{j \in [K]^d} a_j = 1$ in a linear program, and a standard normalization suffices. The LP solution $a^*$ might have some negative coordinates, but we can simply clip all the negative $a^*_j$'s before doing the normalization.
\end{remark}

\end{document}